\documentclass[10pt]{amsart}
\usepackage{amsmath}
\usepackage{graphicx} 
\usepackage{epstopdf}
\usepackage[colorlinks=true]{hyperref}
\usepackage{orcidlink}
\usepackage{float}
\usepackage{color,comment}
\usepackage{caption}
\usepackage{wrapfig}
\usepackage{multirow}
\usepackage{tabularx}
\usepackage{booktabs}
\usepackage{subfigure}
\newtheorem{theorem}{Theorem}[section]
\newtheorem{corollary}{Corollary}

\newtheorem{proposition}{Proposition}

\theoremstyle{definition}
\newtheorem{definition}[theorem]{Definition}
\newtheorem{remark}{Remark}[section]

 \keywords{Finite-time extinction; Fear effect; Allee effect; Bistability; Tipping points}

\begin{document}
\title[Aggregation as a double-edged sword]{Aggregation as a Double-Edged Sword: Fear, Allee Effects, and Finite-Time Collapse}

\author[Antwi-Fordjour, Takyi]{}
\maketitle

\centerline{\scshape Kwadwo Antwi-Fordjour$^{*1} {\orcidlink{0000-0002-4961-0462}}$, Eric M. Takyi$^{2}${\orcidlink{0000-0002-6398-6845}}}

\vspace{.7cm}
{\centerline{1) Department of Mathematics and Computer Science,}
 \centerline{ Samford University,}
 \centerline{ Birmingham, AL, USA}
 
 \medskip
    \centerline{ 2) Department of Mathematics, Computer Science, and Statistics}
 \centerline{Ursinus College,}
   \centerline{Collegeville, PA, USA}
   
}

\vspace{.7cm}
\centerline{ *Corresponding author's email: kantwifo@samford.edu;}
\smallskip
\centerline{ Contributing author: etakyi@ursinus.edu}

\begin{abstract}
Prey aggregation is widely regarded as a defense against predation, yet we show that in disease-structured populations subject to predator-induced fear and demographic Allee thresholds, aggregation can paradoxically accelerate ecosystem collapse. We develop and analyze a susceptible-infectious-predator model incorporating dual fear responses --- together with a sublinear aggregation-based predation term and an Allee effect. Critically, we derive an explicit upper bound on the extinction time that decreases as predator pressure increases or aggregation strengthens, quantifying for the first time how behavioral and demographic parameters jointly determine the speed of ecological collapse. This finite-time extinction subsequently triggers a cascade collapse of the infected prey and predator populations, driving the entire ecological community to extinction. Bifurcation analysis reveals transcritical, saddle-node, and Hopf bifurcations as fear intensity, aggregation strength, and Allee threshold vary. Two-parameter continuation further identifies the precise regions of the fear--Allee parameter plane in which stable coexistence, oscillatory coexistence, predator exclusion, and finite-time extinction occur, demonstrating that stronger aggregation monotonically enlarges the finite-time extinction region while weaker aggregation supports a richer landscape of coexistence dynamics. These results demonstrate that behavioral defenses operating at the population level can generate abrupt ecological tipping points when they interact with disease dynamics and demographic vulnerability.
\end{abstract}

\section{Introduction}
\noindent The collapse of wildlife populations rarely announces itself gradually. In the mid-1990s, the reintroduction of gray wolves (\textit{Canis lupus}) into Yellowstone National Park triggered a cascade of ecological changes that extended far beyond direct predation \cite{Ripple2001}. Elk (\textit{Cervus canadensis}), responding to the renewed predation risk, altered their grazing behavior, avoided riparian zones, and reduced their 
foraging time — changes sufficient to allow willow and aspen regeneration along riverbanks that had been suppressed for decades \cite{Beschta2012}. This trophic cascade was driven not primarily by wolf predation itself, but by the \textit{fear} of predation. Yet in systems where prey populations are simultaneously subject to infectious disease, aggregation behavior, and demographic thresholds at low density, the consequences of such fear-mediated responses may be far more severe: rather than reorganizing an ecosystem, fear may push it past a tipping point from which recovery is 
impossible. Understanding when and how this transition occurs is the central motivation of the present work.

Predator–prey interactions have long provided the foundational framework for understanding species persistence, coexistence, and extinction in theoretical ecology. Classical models focus on consumptive effects, whereby predators reduce prey abundance directly through predation. However, a growing body of empirical and theoretical work demonstrates that predators exert equally important non-consumptive effects through behavioral modifications induced by perceived predation risk \cite{Lima1990, Creel2008}. These fear-mediated responses alter prey reproduction, movement, foraging behavior, habitat use, and social interactions, often generating population-level consequences comparable in 
magnitude to direct predation \cite{Brown1999, Peacor2004}. Zanette et al. \cite{Zanette2011}, for instance, demonstrated experimentally that fear alone reduced songbird reproduction by forty percent — a non-consumptive effect large enough to drive population decline independent of direct predation losses.

The incorporation of fear effects into mathematical ecology has attracted considerable attention during the past decade. Building upon earlier ecological observations, Wang et al. \cite{Wang2016} proposed one of the first predator--prey models explicitly incorporating fear-induced reductions in prey reproduction. Since then, numerous extensions have demonstrated that fear can profoundly influence ecological dynamics by modifying equilibrium densities, stabilizing or destabilizing coexistence states, generating oscillatory behavior, and altering extinction thresholds \cite{Sasmal2018,Kumar2020}. More recently, fear effects have been incorporated into eco-epidemiological systems, where predator-induced behavioral changes influence not only demographic processes but also disease transmission dynamics \cite{AntwiFordjour2024}. These studies suggest that fear may serve as an important indirect regulator of ecosystem dynamics through its simultaneous effects on reproduction and epidemiological processes.

Eco-epidemiological models have emerged as an essential framework for investigating the interplay between infectious disease and predator-prey dynamics. Since the foundational work of Anderson and May \cite{Anderson1978}, numerous studies have documented how disease within prey populations 
influences predator persistence, community stability, and biodiversity \cite{Hethcote2000, Chattopadhyay1999}. The introduction of disease into predator-prey systems generates additional feedback mechanisms capable of producing bistability, sustained oscillations, backward bifurcations, and extinction phenomena \cite{Venturino2002}. Despite this progress, existing eco-epidemiological models largely neglect two behavioral mechanisms that are well-documented in natural systems: prey aggregation as a defense against predation, and the demographic vulnerability that arises when populations fall below a critical density threshold. Neglecting these 
mechanisms is not merely a simplification — it produces qualitatively incorrect predictions. A model combining fear and disease but omitting aggregation will underestimate predation pressure at low prey densities; a model combining fear and aggregation but omitting Allee dynamics will 
fail to capture the irreversibility of population decline once a critical threshold is crossed. Only by incorporating all three mechanisms simultaneously can the full spectrum of collapse dynamics be revealed.

An important behavioral adaptation observed in many prey species is aggregation or herd formation. Grouping behavior can reduce individual predation risk through mechanisms such as dilution, collective vigilance, and predator confusion \cite{Hamilton1971,KrauseRuxton2002,antwi2023fear}. Motivated by these observations, several mathematical models have incorporated aggregation effects into predator--prey systems. A particularly influential contribution was made by Braza \cite{Braza2012}, who introduced a square-root functional response to describe predator interactions with aggregated prey populations. Subsequent studies generalized this idea through sublinear predation functions of the form ($S^r$), ($0<r<1$), which capture varying degrees of aggregation intensity and produce non-smooth dynamical behavior near extinction states \cite{Vilches2018}. Such aggregation-based predation mechanisms have been shown to generate novel bifurcation structures and extinction scenarios that are absent in classical mass-action formulations.

Another important ecological mechanism is the Allee effect, which describes a positive relationship between population density and individual fitness at low densities \cite{Allee1931,Courchamp1999,Courchamp2008}. Allee effects may arise through a variety of biological mechanisms, including difficulties in mate finding, cooperative defense, social interactions, and group foraging. Ecologically, Allee effects are important because they introduce critical density thresholds below which populations become vulnerable to extinction. Strong Allee effects may create bistability, hysteresis, and sudden population collapse, while weak Allee effects can substantially alter resilience and recovery dynamics \cite{Dennis1989,Stephens1999}. The interaction between Allee effects and predator-induced behavioral responses remains an active area of research, particularly in the context of ecological tipping points and abrupt ecosystem transitions.

Tipping points — abrupt transitions between alternative ecosystem states triggered by small changes in environmental conditions or population densities — represent one of the most pressing concerns in contemporary ecology and conservation \cite{Scheffer2001, Scheffer2009, Dakos2019}. 
Mathematical signatures of tipping points include saddle-node bifurcations, loss of resilience, and the crossing of critical population thresholds. In predator-prey systems, tipping phenomena arise through the combined action of demographic thresholds, behavioral adaptations, and environmental forcing. However, the specific role of aggregation-mediated fear responses in generating tipping points and extinction cascades within disease-structured ecological communities has not been established. This 
gap is ecologically significant: if fear-driven aggregation can itself generate tipping points independent of changes in environmental conditions, then populations may be far more vulnerable to collapse than current models suggest, and conventional early-warning indicators based on asymptotic 
dynamics may fail to detect the approach to collapse entirely.

The present study addresses this gap by formulating and analyzing a susceptible-infectious-predator (SIP) model that integrates predator-induced fear, prey aggregation, and Allee dynamics within a unified framework. Using a combination of analytical methods and numerical continuation, we  demonstrate that the interaction of these three mechanisms generates a qualitatively richer and more dangerous dynamical landscape than any pairwise combination produces. In particular, we establish rigorously that the non-Lipschitz aggregation term can drive susceptible prey to extinction 
in finite time when the population falls below the Allee threshold under sustained predator pressure — and that this extinction event triggers a cascade collapse of the entire ecological community. The main contributions of this study are as follows:
\begin{enumerate}
    \item[(i)] We introduce a SIP framework in which predator presence simultaneously suppresses prey reproduction and disease transmission through dual fear responses, and in which prey aggregation introduces a sublinear, non-Lipschitz predation term interacting with an Allee 
    mechanism at low population densities.

    \item[(ii)] We establish the biological feasibility of the system by proving nonnegativity and boundedness of solutions and identifying a positively invariant region.

    \item[(iii)] We derive existence and local stability conditions for all biologically meaningful equilibria, including extinction, predator-free, infectious-free, and coexistence states, using linearization and Routh--Hurwitz criteria.

    \item[(iv)] We demonstrate that variations in fear intensity, aggregation strength, and Allee threshold induce transcritical, saddle-node, and Hopf bifurcations, and we characterize the two-parameter bifurcation structure of the $(k_1, L)$ plane across aggregation regimes.

    \item[(v)] We rigorously prove that the susceptible prey population undergoes finite-time extinction when prey density falls below the Allee threshold under persistent predator pressure, and we derive an explicit upper bound on the extinction time that quantifies the roles of predator pressure and aggregation strength.

    \item[(vi)] We prove that finite-time extinction of susceptible prey triggers cascade collapse of the infected prey and predator populations, establishing a complete extinction sequence for the ecological community.

    \item[(vii)] We show through two-parameter bifurcation analysis that stronger aggregation paradoxically enlarges the parameter regime leading to finite-time extinction, revealing that fear-driven aggregation can amplify rather than mitigate the consequences of demographic vulnerability at low population densities.
\end{enumerate}

The remainder of the paper is organized as follows. In Section \ref{sec:Model Formulation and Basic Mathematical}, we present the model formulation and establish the positivity and boundedness of the proposed  model. Section \ref{sec: Stability analysis} analyzes the existence and stability of the model equilibria. In Section \ref{sec:Bifurcation analysis}, we investigate the bifurcation structure of the model and identify parameter regimes leading to qualitative transitions in model dynamics. Bistability and ecological tipping points are discussed in Section \ref{sec:bistability and tipping}. Section \ref{sec:FTE} is devoted to the analysis of finite-time extinction and its ecological implications. Numerical simulations illustrating the analytical findings are presented throughout. Section \ref{sec:Conclusion} summarizes the main results and discusses potential ecological interpretations and future research directions.

\section{Model Formulation and Basic Mathematical Properties}\label{sec:Model Formulation and Basic Mathematical}
\subsection{Model Formulation}\label{sec:model formulation}

\noindent We develop a novel eco-epidemiological model inspired by the works in \cite{AntwiFordjour2024,antwi2025dynamics} to investigate the coupled dynamics of disease transmission and predator-prey interactions, explicitly incorporating fear-induced behavioral responses, prey aggregation, and the Allee effect. The model is grounded in the following biologically motivated assumptions:

\begin{itemize}
    \item[(a)] \textbf{Prey Population Structure:} The prey population is stratified into two epidemiological compartments: susceptible prey ($S$) and infectious prey ($I$). This division captures disease spread within the prey species and allows for epidemiological feedback mechanisms.

    \item[(b)] \textbf{Disease Transmission Dynamics:} Disease transmission occurs solely within the prey population through direct contact between susceptible and infectious individuals. The pathogen is assumed to be prey-specific, and predator infection is neglected. Although predators may occasionally be exposed to pathogens carried by infected prey in natural systems, such spillover events are assumed to be epidemiologically insignificant and do not contribute to disease persistence. Therefore, predators affect the epidemiological dynamics only indirectly through predation and fear-induced behavioral changes, while disease transmission remains restricted to the prey population.

    \item[(c)] \textbf{Pathogen-Induced Fitness Impairment:} Infection adversely affects the physiological condition and ecological performance of prey individuals. Infected prey are assumed to experience substantially reduced reproductive success, diminished foraging and competitive abilities, and increased susceptibility to natural mortality. As a result, disease lowers the effective contribution of infected individuals to population growth while simultaneously reducing their ability to cope with predation and environmental pressures.

    \item[(d)] \textbf{Fear-Driven Behavioral Modifications:} 
Predator-induced fear influences susceptible prey through two distinct behavioral pathways. The first pathway acts on reproduction. Individuals experiencing elevated predation risk allocate more time to vigilance, refuge use, and anti-predator behavior, thereby reducing reproductive investment. Following Wang et al.~\cite{Wang2016}, this effect is represented by the fear-modified reproductive factor

\[
f(k_1,P)=\frac{1}{1+k_1P}.
\]

\noindent The second pathway acts on disease transmission. Fear-induced behavioral changes often reduce movement, aggregation size, and social contact rates, thereby decreasing opportunities for pathogen transmission. To capture this mechanism, we assume that effective disease transmission decreases with perceived predation risk according to

\[
f(k_2, P)=\frac{1}{1+k_2P}.
\]

\noindent Although both pathways are modeled using saturating decreasing functions of predator density, they represent biologically distinct processes. Parameter \(k_1\) quantifies reproductive sensitivity to fear, whereas \(k_2\) measures the sensitivity of disease transmission to fear-induced behavioral modifications. The use of the same functional form reflects the assumption that both responses saturate at high levels of predation risk while retaining separate biological interpretations, see \cite{AntwiFordjour2024} and references therein.
    
The functions can be generalized as
    \[
    f(k_i, P) = \frac{1}{1 + k_i P}, \quad i = 1, 2,
    \]
This function exhibits key ecological properties:
    \begin{itemize}
        \item[(i)] \emph{No fear effect ($k_i = 0$):} $f(0, P) = 1$, recovering baseline dynamics.
        \item[(ii)] \emph{No predators ($P = 0$):} $f(k_i, 0) = 1$, indicating no suppression in reproduction or contact.
        \item[(iii)] \emph{High level of fear or large predator density:} $\lim_{k_i \to \infty} f(k_i, P) = 0$ and $\lim_{P \to \infty} f(k_i, P) = 0$, capturing fear-induced behavioral inhibition.
    \end{itemize}

   \item[(e)] \textbf{Irreversible Infection:} Infection is assumed to be effectively irreversible over the ecological timescale of interest. Infected prey either remain chronically infectious or experience disease-induced mortality before recovery can significantly influence population dynamics. Consequently, recovery is neglected and infected individuals remain within the infectious class for the duration of their infection period. This assumption is commonly adopted for persistent or slowly recovering diseases where transmission and mortality processes dominate recovery dynamics.

   \item[(f)] \textbf{Carrying Capacity Constraint:} Resource availability imposes a density-dependent limitation on prey growth. Since both susceptible and infected prey utilize the same environmental resources, the total prey population $(S+I)$ contributes to intraspecific competition. Accordingly, population growth is regulated through a logistic-type carrying capacity $K$, which represents the maximum sustainable prey density supported by the environment.

    \item[(g)] \textbf{Predator-Prey Interactions:} Predator-prey dynamics are governed by several ecological mechanisms:
    \begin{itemize}
        \item[(i)] \textbf{Logistic Growth:} In the absence of predation and disease, the susceptible prey population exhibits logistic growth.
        
        \item[(ii)] \textbf{Prey Aggregation:} In models of prey aggregation, predators often interact primarily with individuals located on the boundary of a prey group rather than with all individuals within the group. Motivated by this observation, Braza \cite{Braza2012} proposed a square-root functional response in which the effective number of vulnerable prey scales as $\sqrt{S}$. The square-root dependence reflects the fact that only a fraction of the prey population is directly exposed to predation when individuals aggregate into herds or groups. Following this idea, we consider the more general aggregation function employed in \cite{AntwiFordjour2024},
        $$S^r, \; 0<r<1,$$
        
        where (r) represents the intensity of prey aggregation. The case $(r=1/2)$ corresponds to the classical square-root response studied by Braza \cite{Braza2012}, while other values of (r) allow varying degrees of aggregation. Smaller values of (r) correspond to stronger aggregation effects, whereas values of (r) approaching one recover a nearly mass-action predation term. Consequently, the predation loss term
        $$d_0S^{r}P$$ models aggregation-mediated predation in which the effective number of prey accessible to predators grows sublinearly with the susceptible prey density. This generalized formulation enables us to investigate how different levels of aggregation influence the eco-epidemiological dynamics and, in particular, the emergence of finite-time extinction phenomena.
    
        \item[(iii)] \textbf{Mass-Action Predation and Conversion:} Predators consume both susceptible and infected prey at rates proportional to prey density and gain energy based on a conversion efficiency. The predation terms are modeled via bilinear mass-action kinetics.
        
        \item[(iv)] \textbf{Natural Mortality:} Each population experiences baseline mortality at constant per capita rates.
    \end{itemize}

    \item[(h)] \textbf{Allee Effects and Aggregation Dynamics:} The model incorporates an Allee effect to account for demographic limitations that arise when population densities become sufficiently small. Such effects may result from difficulties in mate finding, cooperative defense, social interactions, or group foraging, all of which reduce individual fitness at low population density. Depending on the magnitude of the threshold parameter $L$, the system may exhibit either weak or strong Allee effects. When $L<0$, the population experiences a weak Allee effect, whereas $L>0$ corresponds to a strong Allee effect with a critical density threshold.
    
    Although aggregation and Allee effects may arise from related ecological processes in social species, they are represented here as distinct mechanisms. The aggregation term $S^r$ describes the reduction in predation vulnerability arising from herd formation, whereas the parameter L captures demographic limitations operating at low population densities. The interaction between these mechanisms constitutes one of the principal focuses of the present study.
\end{itemize}

\noindent Collectively, these assumptions yield a biologically consistent framework for exploring how disease transmission and predator pressure interact under behavioral and demographic constraints. Based on these principles, we propose the following Susceptible-Infectious-Predator (SIP) model:
\begin{align}
\nonumber \frac{dS}{dt} &= \frac{b_0 S}{1 + k_1 P} \left(1 - \frac{S + I}{K} \right)(S - L) - d_0 S^r P - \frac{e_0 S I}{1 + k_2 P}, \\
   \frac{dI}{dt} &= \frac{e_0 S I}{1 + k_2 P} - d_1 I P - a_1 I, \label{Mainsystem} \\
\nonumber \frac{dP}{dt} &= d_2 S^r P + d_3 I P - a_2 P.
\end{align}

\noindent The state variables $S(t)$, $I(t)$, and $P(t)$ denote the densities of susceptible prey, infectious prey, and predators at time $t$, respectively.
\begin{table}
    \centering
    \caption{Biological description of parameters from model \eqref{Mainsystem}, along with representative values and references.}
    \label{tab:description}    
    \begin{tabular}{| c | p{0.45\textwidth} | c | p{0.2\textwidth} |}
        \hline 
        Parameter & Biological Description & Value & Reference \\
        \hline
        $b_0$ & Natural birth rate of susceptible prey & 3 & \cite{antwi2025dynamics} \\  
        $r$ & Aggregating constant of susceptible prey & 0.5 & \cite{Braza2012} \\
        $e_0$ & Disease transmission rate & 4 & \cite{AntwiFordjour2024} \\
        $K$ & Carrying capacity of susceptible and infected prey population & 4 & \cite{AntwiFordjour2024} \\
        $L$ & Allee threshold of susceptible prey & -1 & Variable \\
        $a_1$ & Infectious prey natural death rate & 0.4 & \cite{AntwiFordjour2024} \\
        $a_2$ & Predator natural death rate & 0.8 & \cite{AntwiFordjour2024} \\
        $d_0$ & Attack rate of the predator on the susceptible prey & 0.7 & \cite{AntwiFordjour2024} \\
        $d_1$ & Attack rate of the predator on infectious prey & 0.7 & \cite{AntwiFordjour2024} \\
        $d_2$ & Biomass conversion efficiency $\times$ attack rate on susceptible prey & 0.4 & \cite{AntwiFordjour2024} \\
        $d_3$ & Biomass conversion efficiency $\times$ attack rate on infected prey & 0.4 & \cite{antwi2025dynamics} \\
        $k_1$ & Level of fear that suppresses growth rate of susceptible prey & 0.1 & \cite{AntwiFordjour2024} \\
        $k_2$ & Level of fear that reduces disease transmission & 0.85 & \cite{AntwiFordjour2024} \\
        \hline
    \end{tabular}
\end{table}

\subsection{Biological Feasibility of SIP System}\label{sec:biological feasibility}
\noindent We first show preliminary results concerning the biological feasibility and mathematical meaningfulness of system \eqref{Mainsystem}. We examine the well-posedness of the system. This includes showing that solutions to system \eqref{Mainsystem} are nonnegative and 
bounded. Negative solutions do not make biological sense since population sizes are never negative. The boundedness property ensures that population densities do not grow unbounded due to limitation in resources.

\subsection{Nonnegativity of Solution}
\begin{theorem}\label{thm:nonnegativity_updated}
Let the initial conditions satisfy \( S(0)\geq 0,~I(0)\geq 0,~P(0)\geq 0 \). Then all solutions \((S(t), I(t), P(t))\) of the model \eqref{Mainsystem}
remain nonnegative for all \( t \geq 0 \).
\end{theorem}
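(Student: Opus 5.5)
We prove nonnegativity component by component, using the integrating-factor (exponential) representation for the equations that are multiplicative in their own variable and a separate boundary argument for $S$, whose right-hand side is not Lipschitz at $S=0$. Throughout, $S^r$ is read as $S_+^r$ (zero for $S\le 0$), so the vector field is continuous on $\mathbb{R}^3$ and Peano's theorem gives a local solution. Along any solution with $S,I,P$ continuous on $[0,T)$, the $P$- and $I$-equations are linear in $P$ and $I$ respectively. Hence
\[
P(t)=P(0)\exp\!\Big(\int_0^t \big(d_2S^r(\tau)+d_3I(\tau)-a_2\big)\,d\tau\Big)\ge 0,
\]
\[
I(t)=I(0)\exp\!\Big(\int_0^t \Big(\frac{e_0S(\tau)}{1+k_2P(\tau)}-d_1P(\tau)-a_1\Big)d\tau\Big)\ge 0 .
\]
To make this non-circular, we first restrict to the maximal interval on which $P>-1/\max(k_1,k_2)$, so the denominators stay positive. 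The formula then shows that $P$ keeps the sign of $P(0)$ on that interval, so $P\ge 0$ and the interval is the whole existence interval.

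For $S$ we cannot use an exponential formula, because the term $d_0S^rP$ is not of the form $S\cdot(\text{bounded})$ as $S\to0^+$. Instead we argue at the boundary $S=0$. When $S=0$ (and $I,P\ge0$),
\[
\frac{dS}{dt}=0-d_0\cdot 0\cdot P-0=0 .
\]
With the convention $S_+^r$, for $S<0$ the predation term vanishes and the remaining terms are $S\cdot g(S,I,P)$ with $g$ locally bounded. Suppose $S$ became negative. Take $t_0=\sup\{t: S\ge0 \text{ on } [0,t]\}$ and a short interval after $t_0$ on which $S<0$. There $\dot S=S\,g$, so $S(t)=S(t_0)e^{\int g}=0$, a contradiction. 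Hence $S(t)\ge0$ for all $t$ in the existence interval. The same argument also shows that the face $\{S=0\}$ is reached only as an absorbing state: once $S=0$, the solution with $S\equiv0$ exists. This matters later, since finite-time extinction reflects non-uniqueness backward, not a loss of positivity. Finally, the sign information for $I$ and $P$ from the exponential formulas holds on the same interval, and together with the boundedness result proved later, the solution extends to all $t\ge0$.

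The main obstacle is the non-Lipschitz term $S^r$, $0<r<1$. It invalidates both the standard "invariance of coordinate hyperplanes via uniqueness" argument and the integrating-factor trick for $S$, since $d_0S^{r-1}P$ is not integrable near $S=0$ in general. The fix is the extension-by-zero convention together with the contradiction argument above. The key point is that any solution leaving $\{S\ge0\}$ would have to do so through a region where the vector field is linear in $S$, and hence Lipschitz.
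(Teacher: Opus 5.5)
Your argument is correct, and for the \(S\)-component it takes a genuinely different route from the paper. The paper treats all three equations the same way. It writes \(\dot S=S\,\Phi(S,I,P)\) with \(\Phi\) containing \(-d_0S^{r-1}P\), asserts that \(\Phi\) is continuous, and concludes \(S(t)=S(0)\exp\left(\int_0^t\Phi\,ds\right)\ge 0\). Since \(r-1<0\), \(\Phi\) is unbounded as \(S\to 0^+\), so this representation is valid only while \(S>0\) and \(\int_0^t\Phi\,ds\) stays finite. If that integral stayed finite up to \(t_{ext}\), the formula would give \(S(t_{ext})>0\), contradicting the paper's own Theorem~\ref{thm:FTE_S}; along an extinction trajectory \(\int_0^t\Phi\,ds\to-\infty\) as \(t\to t_{ext}^-\). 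The paper's argument also says nothing about the solution after \(S\) reaches \(0\), or when \(S(0)=0\), which is exactly where non-uniqueness could matter. Your route avoids this. You extend by \(S_+^r\) and observe that on \(\{S\le 0\}\) the \(S\)-equation reads \(\dot S=g(t)S\), with a coefficient continuous along the trajectory. That argument survives the extinction event and covers every solution despite the loss of uniqueness; the paper's version buys only brevity. For \(I\) and \(P\) your exponential formulas coincide with the paper's and are sound, since their coefficients are continuous along any solution.

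Three small points to tighten. First, the supremum \(t_0\) need not be followed by an interval on which \(S<0\), because \(S\) could a priori return to \(0\) infinitely often. Instead, fix \(t_1\) with \(S(t_1)<0\) and let \(t_2\) be the last zero of \(S\) in \([0,t_1]\); then \(\dot S=g(t)S\) on \([t_2,t_1]\) with \(S(t_2)=0\) forces \(S\equiv 0\) there. Second, global continuation does not need the paper's boundedness theorem, which carries the extra hypotheses \(d_0\ge d_2\), \(d_1\ge d_3\). Once nonnegativity holds, \(\dot S<0\) whenever \(S>\max\{K,L\}\), so \(S\le\max\{S(0),K,L\}\). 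The per-capita growth rates of \(I\) and \(P\) are then bounded above on bounded time intervals, so nothing blows up. Third, your remark that \(\{S=0\}\) is forward-absorbing is true but does not follow from the same contradiction argument. It uses instead that \(-d_0S^rP\) dominates for small \(S>0\) when \(P>0\), and Lipschitz uniqueness when \(P\equiv 0\). It is not needed for the theorem.
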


\begin{proof}
We show that the right-hand sides of the model \eqref{Mainsystem} preserve nonnegativity for all time, provided the initial conditions are nonnegative.

\medskip
\noindent\textbf{(i) Nonnegativity of \( S(t) \):} The equation for \( S \) can be written as
\[
\frac{dS}{dt} = S \cdot \Phi(S, I, P),
\]
where
\[
\Phi(S, I, P) = \frac{b_0}{1 + k_1P} \left(1 - \frac{S + I}{K} \right)(S - L) - d_0 S^{r-1}P - \frac{e_0 I}{1 + k_2P}.
\]
Since \( \Phi \) is continuous and \( S(0) \geq 0 \), we obtain the solution
\[
S(t) = S(0)\exp\left(\int_0^t \Phi(S(s), I(s), P(s))\, ds\right) \geq 0.
\]

\medskip
\noindent\textbf{(ii) Nonnegativity of \( I(t) \):} The equation for \( I \) is
\[
\frac{dI}{dt} = I \cdot \Psi(S, P), \quad \text{where} \quad \Psi(S, P) = \frac{e_0S}{1 + k_2P} - d_1P - a_1.
\]
Hence,
\[
I(t) = I(0)\exp\left(\int_0^t \Psi(S(s), P(s))\, ds\right) \geq 0,
\]
since \( I(0) \geq 0 \) and the exponential is nonnegative.

\medskip
\noindent\textbf{(iii) Nonnegativity of \( P(t) \):} Similarly, the equation for \( P \) takes the form
\[
\frac{dP}{dt} = P \cdot \Gamma(S, I), \quad \text{where} \quad \Gamma(S, I) = d_2S^r + d_3I - a_2.
\]
It follows that
\[
P(t) = P(0)\exp\left(\int_0^t \Gamma(S(s), I(s))\, ds\right) \geq 0.
\]

\medskip
\noindent Therefore, the solution trajectory remains in the nonnegative orthant
\[
\mathbb{R}_+^3 = \left\{ (S, I, P) \in \mathbb{R}^3 : S \geq 0, I \geq 0, P \geq 0 \right\}
\]
for all \( t \geq 0 \). 
\end{proof}

\subsection{Boundedness of solutions}
\begin{theorem}\label{thm:boundedness_updated}
Consider the model \eqref{Mainsystem} with positive initial conditions \( S(0), I(0), P(0) > 0 \). If \( d_0 \ge d_2 \) and \( d_1 \ge d_3 \), then the solution trajectory \( (S(t), I(t), P(t)) \) remains uniformly bounded for all \( t \geq 0 \).
\end{theorem}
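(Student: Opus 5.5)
Our plan is the standard total-population (Lyapunov-type) argument with the weighted sum $W = S + I + P$, using Theorem \ref{thm:nonnegativity_updated} to work in the nonnegative orthant. First, from the $S$ equation, $S$ is eventually bounded. Where $S > \max\{K,L\}$ (and $S$ is bounded below by $L$ when $L<0$), the factor $(1-(S+I)/K)(S-L)$ is negative, so $dS/dt<0$. More directly, $dS/dt \le \frac{b_0 S}{1+k_1P}(1-S/K)(S-L) \le b_0 S(1-S/K)(S-L)$ whenever the product is positive. Comparison with the scalar equation $\dot u = b_0 u(1-u/K)(u-L)$ gives $\limsup S(t) \le K$ when $L<K$. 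In any case $S(t)\le M_S:=\max\{S(0),K\}$ for all $t\ge0$, since on $S>K$ the logistic factor is negative while $(S-L)>0$ (if $L\ge K$ the same factor still forces decrease, because $S-L$ and $1-S/K$ have opposite signs there when $K<S<L$, and for $S>L$ too). Hence the positive part of the growth term $G(S,I,P)=\frac{b_0S}{1+k_1P}(1-\frac{S+I}{K})(S-L)$ is bounded by a constant $C_0=\sup_{0\le s\le M_S} b_0 s\,|s-L|\,(1+|L|/K)$, or a similar explicit bound.

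Next we add the three equations. The infection terms $\pm e_0SI/(1+k_2P)$ cancel, and we obtain
\[
\frac{dW}{dt} = G - (d_0-d_2)S^rP - (d_1-d_3)IP - a_1 I - a_2 P \le G - a_1 I - a_2 P,
\]
using $d_0\ge d_2$ and $d_1\ge d_3$ together with nonnegativity. Choose $\mu = \min\{a_1,a_2\}$ and add $\mu W$:
\[
\frac{dW}{dt} + \mu W \le G + \mu S \le C_0 + \mu M_S =: C.
\]
The standard comparison (Gronwall) lemma then gives $W(t) \le W(0)e^{-\mu t} + \frac{C}{\mu}(1-e^{-\mu t})$. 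So $W$, and therefore each component, is uniformly bounded. We also get $\limsup W \le C/\mu$, which identifies the positively invariant region $\{S+I+P\le C/\mu+\varepsilon\}$.

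The main obstacle is the first step: obtaining a uniform upper bound on $G$ without sign information on $L$ or on $1-(S+I)/K$. We handle it by bounding $S$ first via the scalar comparison, then noting that when $S+I>K$ and $S>L$ the term $G$ is negative, so only the region $S+I\le K$ or $S\le L$ contributes positively. On that region $G\le b_0 M_S(M_S+|L|)$ because $0\le 1-(S+I)/K\le 1$ or, for $S\le L$, $(1-(S+I)/K)(S-L)\le |1-(S+I)/K|\,L$ with $S\le L$. If needed, the case $L>0$ with $S<L$ and large $I$ is controlled by the factor $(S-L)\le 0$ making $G\ge$ nothing dangerous, since $(1-\frac{S+I}{K})<0$ then yields $G\ge0$ but bounded by $b_0 L\cdot L (S+I)/K$. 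That linear-in-$I$ growth is absorbed by choosing $\mu$ slightly smaller and treating it via $a_1 I$ only when $b_0L^2/K<a_1$. Otherwise we bound $I$ separately through $\dot I\le (e_0M_S - a_1)I$ combined with eventual boundedness of $S$. We expect this case bookkeeping to be the delicate point; the default parameters have $L=-1<0$, where the argument is clean.
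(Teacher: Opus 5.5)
Your route is the same as the paper's. You sum the three equations so the infection terms cancel and drop $(d_2-d_0)S^rP$ and $(d_3-d_1)IP$ using $d_0\ge d_2$, $d_1\ge d_3$. You then add $\gamma W$ with $\gamma=\min\{a_1,a_2\}$ to get $\dot W+\gamma W\le G+\gamma S$, bound $S$ separately, and apply Gronwall.

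For $L\le 0$ this is complete. The positive part of $G$ lives only on $\{S+I<K\}$, where $G\le b_0K(K+|L|)$. One small repair is needed: for $K<S<L$ the factors $1-S/K$ and $S-L$ are both negative, not of opposite sign. So $G>0$ there, and $S(t)\le\max\{S(0),K\}$ can fail. The correct a priori bound is $S(t)\le\max\{S(0),K,L\}$, since $\dot S\le 0$ whenever $S\ge\max\{K,L\}$.

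The genuine gap is the case $L>0$. On $\{S<L,\ S+I>K\}$ one has $G=\frac{b_0S(L-S)(S+I-K)}{K(1+k_1P)}>0$, which grows linearly in $I$. Your absorption into $-a_1I$ works when $b_0L^2<4Ka_1$, using $S(L-S)\le L^2/4$. Your fallback for the other case does not work: $\dot I\le(e_0M_S-a_1)I$ only gives exponential growth, and boundedness of $S$ says nothing about $I$.

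More bookkeeping cannot fix this. Your argument never uses $P(0)>0$, so it would also apply on the invariant plane $P=0$, where unbounded orbits exist. For instance, take $b_0=8$, $K=4$, $L=2$, $e_0=2$, $a_1=0.4$ and $P\equiv 0$:
\[
\dot S=S\bigl[2(4-S)(S-2)-2I(S-1)\bigr],\qquad \dot I=I(2S-0.4).
\]
On $S=1$ one has $\dot S=-6$. On $S=\tfrac12$ one has $\dot S=\tfrac12(I-10.5)>0$ for $I\ge 11$. And $\dot I\ge 0.6\,I$ for $S\ge\tfrac12$. So $\{\tfrac12\le S\le 1,\ I\ge 11\}$ is forward invariant, and $I(t)\to\infty$ while $S$ stays bounded.

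Keep these values and further take $k_1=k_2=0$, $d_1=d_3=0$, $d_0=d_2=0.4$, $a_2=0.8$, $r=\tfrac12$. Then the box $\{\tfrac12\le S\le 1,\ I\ge 11+P_0,\ 0<P\le P_0\}$ is still forward invariant. So in this degenerate case even $P(0)>0$ does not save the statement without further assumptions.

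The paper's proof has the same hole. Its estimate $b_0S(1-\mathcal{N}/K)(S-L)\le H(\mathcal{N})$ fails exactly on $\{S<L,\ S+I>K\}$. So does its claim that the prey contribution is nonpositive whenever $S+I>K$. For $L>0$ you therefore need one of two things:
\begin{enumerate}
\item an added hypothesis such as $L\le 0$ or $b_0L^2<4Ka_1$;
\item a genuinely different argument that exploits the predator's response to large $I$, which requires $d_3>0$ and $P(0)>0$.
\end{enumerate}
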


\begin{proof}
We construct the total population function:
\[
Q(t) = S(t) + I(t) + P(t).
\]

\noindent Summing the differential equations of the system yields:
\begin{align*}
\frac{dQ}{dt} &= \frac{dS}{dt} + \frac{dI}{dt} + \frac{dP}{dt} \\
&= \frac{b_0 S}{1 + k_1 P} \left(1 - \frac{S + I}{K} \right)(S - L) - d_0 S^r P - \frac{e_0 S I}{1 + k_2 P} \\
&\quad + \frac{e_0 S I}{1 + k_2 P} - d_1 I P - a_1 I + d_2 S^r P + d_3 I P - a_2 P \\
&= \frac{b_0 S}{1 + k_1 P} \left(1 - \frac{S + I}{K} \right)(S - L) - a_1 I - a_2 P + (d_2 - d_0) S^r P + (d_3 - d_1) I P.
\end{align*}

\noindent By the assumption that \( d_0 > d_2 \) and \( d_1 > d_3 \), it follows that the terms \( (d_2 - d_0) S^r P \) and \( (d_3 - d_1) I P \) are nonpositive. Therefore, we estimate:
\[
\frac{dQ}{dt} \leq \frac{b_0 S}{1 + k_1 P} \left(1 - \frac{S + I}{K} \right)(S - L) - a_1 I - a_2 P.
\]

Since \(1/(1+k_1P)\leq 1\), we obtain
\[
\frac{dQ}{dt}
\leq
b_0S\left(1-\frac{S+I}{K}\right)(S-L)-a_1I-a_2P .
\]
Let
\[
\mathcal{N}=S+I.
\]
Then \(S\leq \mathcal{N}\) and
\[
1-\frac{S+I}{K}=1-\frac{\mathcal{N}}{K}.
\]
Moreover, since \(S\geq 0\),
\[
S-L\leq S+|L|\leq \mathcal{N}+|L|.
\]
Hence
\[
b_0S\left(1-\frac{\mathcal{N}}{K}\right)(S-L)
\leq
b_0\mathcal{N}\left(1-\frac{\mathcal{\mathcal{N}}}{K}\right)(\mathcal{N}+|L|).
\]
The function
\[
H(\mathcal{N})=b_0\mathcal{N}\left(1-\frac{\mathcal{N}}{K}\right)(\mathcal{N}+|L|)
\]
is bounded above for \(\mathcal{N}\geq 0\). Indeed, \(H(\mathcal{N})\to -\infty\) as
\(\mathcal{N}\to\infty\). Therefore, there exists a constant \(M_0>0\) such that
\[
H(\mathcal{N})\leq M_0
\qquad \text{for all } \mathcal{N}\geq 0.
\]
Thus
\[
\frac{dQ}{dt}\leq M_0-a_1I-a_2P.
\]
Let
\[
\gamma=\min\{a_1,a_2\}>0.
\]
Then
\[
-a_1I-a_2P\leq -\gamma(I+P).
\]
Since \(Q=S+I+P\), we have \(I+P=Q-S\). To obtain a bound involving \(Q\),
we add and subtract \(\gamma S\):
\[
\frac{dQ}{dt}+\gamma Q
\leq M_0+\gamma S.
\]
Since \(S\leq \mathcal{N}\), it remains to observe that the positive part of \(H(\mathcal{N})\)
forces \(\mathcal{N}\) to remain bounded above. More precisely, because
\[
H(\mathcal{N})<0
\]
for \(\mathcal{N}>K\), the prey contribution is nonpositive whenever \(S+I>K\).
Consequently, the prey biomass cannot grow without bound, and there exists
\(M_1>0\) such that
\[
S(t)\leq \mathcal{N}(t)\leq M_1
\qquad \text{for all sufficiently large } t.
\]
Therefore,
\[
\frac{dQ}{dt}+\gamma Q
\leq M_0+\gamma M_1=:W.
\]
By Gronwall's inequality,
\[
Q(t)\leq Q(0)e^{-\gamma t}+\frac{W}{\gamma}\left(1-e^{-\gamma t}\right).
\]
Hence
\[
\limsup_{t\to\infty}Q(t)\leq \frac{W}{\gamma}.
\]
Thus \(Q(t)\), and consequently \(S(t)\), \(I(t)\), and \(P(t)\), remain uniformly
bounded for all \(t\geq 0\).
\end{proof}

\section{Existence and Local Stability Analysis of Equilibria}\label{sec: Stability analysis}

\subsection{Feasible Equilibria}
Let $E(S,I,P)$ denote a nonnegative equilibrium point. To obtain $E(S,I,P)$, we solve the system
\begin{align}
  \frac{b_0S}{1 + k_1P}\left(1-\frac{S+I}{K}\right)(S-L) - d_0S^rP - \frac{e_0SI}{1 + k_2P}  &= 0,
  \label{equilibria1}\\
   \frac{e_0SI}{1+k_2P} - d_1IP -a_1I  &= 0,   \label{equilibria2}  \\
     d_2S^rP + d_3IP - a_2P  &= 0, 
     \label{equilibria3}
\end{align}\\
 and identify the existence of the following  nonnegative equilibria  in $\mathbb{R}^3_+$:

 \begin{enumerate}
     \item[(i)] Extinction state $E_0=(0,0,0)$,
     \item[(ii)] Susceptible prey only  $E_1=(K,0,0)$,
     \item[(iii)] Susceptible prey only  $E_2=(L,0,0)$ exists if $L>0$,
     \item[(iv)] Infectious prey-free $E_3=\left(\left(\frac{a_2}{d_2} ,\right)^{\frac{1}{r}},0,P_3\right)$, where $P_3$ is a positive root of 
     \begin{equation}\label{rootP5}
        \frac{d_0 a_2 k_1}{d_2}P^2 + \frac{d_0 a_2}{d_2}P -\underbrace{ b_0\left(\frac{a_2}{d_2} \right)^{\frac{1}{r}} \left(1-\frac{1}{K}\left(\frac{a_2}{d_2} \right)^{\frac{1}{r}} \right) \left(\frac{1}{K}\left(\frac{a_2}{d_2} \right)^{\frac{1}{r}}-L \right)}_{=B}=0 
     \end{equation}
and is given by
\begin{equation}
    P_3= \frac{-\frac{d_0 a_2}{d_2} + \sqrt{\left(\frac{d_0 a_2}{d_2} \right)^2+\frac{4d_0 a_2 k_1 B}{d_2}} }{\frac{2d_0 a_2 k_1 }{d_2}}
\end{equation}
which is feasible when $B>0$.

 \item[(v)] Predator-free $E_4=\left(\frac{a_1}{e_0},I_4,0\right)$ where $I_4=\frac{b_0\left(Ke_0-a_1 \right)\left(a_1-Le_0  \right)}{e_0\left(b_0 a_1 +e_0 \left(Ke_0 - b_0L  \right) \right)}$ is feasible when any of the following four conditions are satisfied:

 \begin{itemize}
     \item [(a)] $\frac{a_1}{L}<e_0 < \frac{a_1}{K}$ and $L<a_1 +\frac{e_0^2K}{b_0}$,

     \item [(b)] $\frac{a_1}{K}<e_0 < \frac{a_1}{L}$ and $L<a_1 +\frac{e_0^2K}{b_0}$,

     \item [(c)] $e_0 <\frac{a_1}{K}$, $e_0 < \frac{a_1}{L}$ and $L>a_1 +\frac{e_0^2K}{b_0}$,

     \item [(d)] $e_0 >\frac{a_1}{K}$, $e_0 > \frac{a_1}{L}$ and $L>a_1 +\frac{e_0^2K}{b_0}$.

 \end{itemize}

 \item [(vi)] For  $S^*>0, I^*>0, P^*>0$, the coexistence state  $E_5=\left(S^*I^*,P^* \right)$, where 
 \begin{equation}\label{SSTAR}
      S^*=\left( \frac{a_2-d_3I^*}{d_2}\right)^\frac{1}{r}
 \end{equation}

 is obtained by solving equation (\ref{equilibria3}). However, $S^*>0$ when $I^*<\frac{a_2}{d_3}$. Next, we substitute $S^*$ into equation (\ref{equilibria2}) and solve for $P^*$. This results in the following quadratic equation in $P^*$: 

 \begin{equation}\label{EQPSTAR}
     d_1k_2P^{*2} + \left(d_1 +a_1k_2 \right)P^* + a_1 - e_0\left( \frac{a_2-d_3I^*}{d_2}\right)^\frac{1}{r}=0.
 \end{equation}

 By the Descartes' rule of signs, there is only one positive root to equation (\ref{EQPSTAR}) given by 

 \begin{equation}\label{PSTAR}
    P^*=\frac{-\left(d_1 +a_1k_2 \right) + \sqrt{a_1^2 k_2^2 +d_1^2 - 2d_1k_2 \left(a_1-2e_0 \left(\frac{a_2-d_3I^*}{d_2} \right)^\frac{1}{r} \right)}}{2d_1k_2}.
 \end{equation}
 For $P^*>0$, we require that $I^*<\frac{1}{d_3}\left(a_2-d_2\left(\frac{a_1}{e_0} \right)^r \right)$.
 To find $I^*$ explicitly, we substitute $P^*$ in equation (\ref{PSTAR}) and $S^*$ in equation (\ref{SSTAR}) into equation (\ref{equilibria1}). Notably, it is difficult to find the explicit form of $I^*$ due to the algebraic complexity of the resulting equation. However, we provide a sufficient condition for the feasibility of $I^*$ which is 
 \begin{equation}
     0<I^*<\text{min} \Biggl\{\frac{a_2}{d_3},\frac{1}{d_3}\left(a_2-d_2\left(\frac{a_1}{e_0} \right)^r \right)\Biggl\}=\frac{1}{d_3}\left(a_2-d_2\left(\frac{a_1}{e_0} \right)^r \right).
 \end{equation}
 We provide a numerical example to confirm the existence of multiple coexistence equilibria for model (\ref{Mainsystem}). See Figure \ref{fig:nullclines}.
 \end{enumerate}

 \begin{figure}[tbh]
\centering
\includegraphics[scale=0.5]{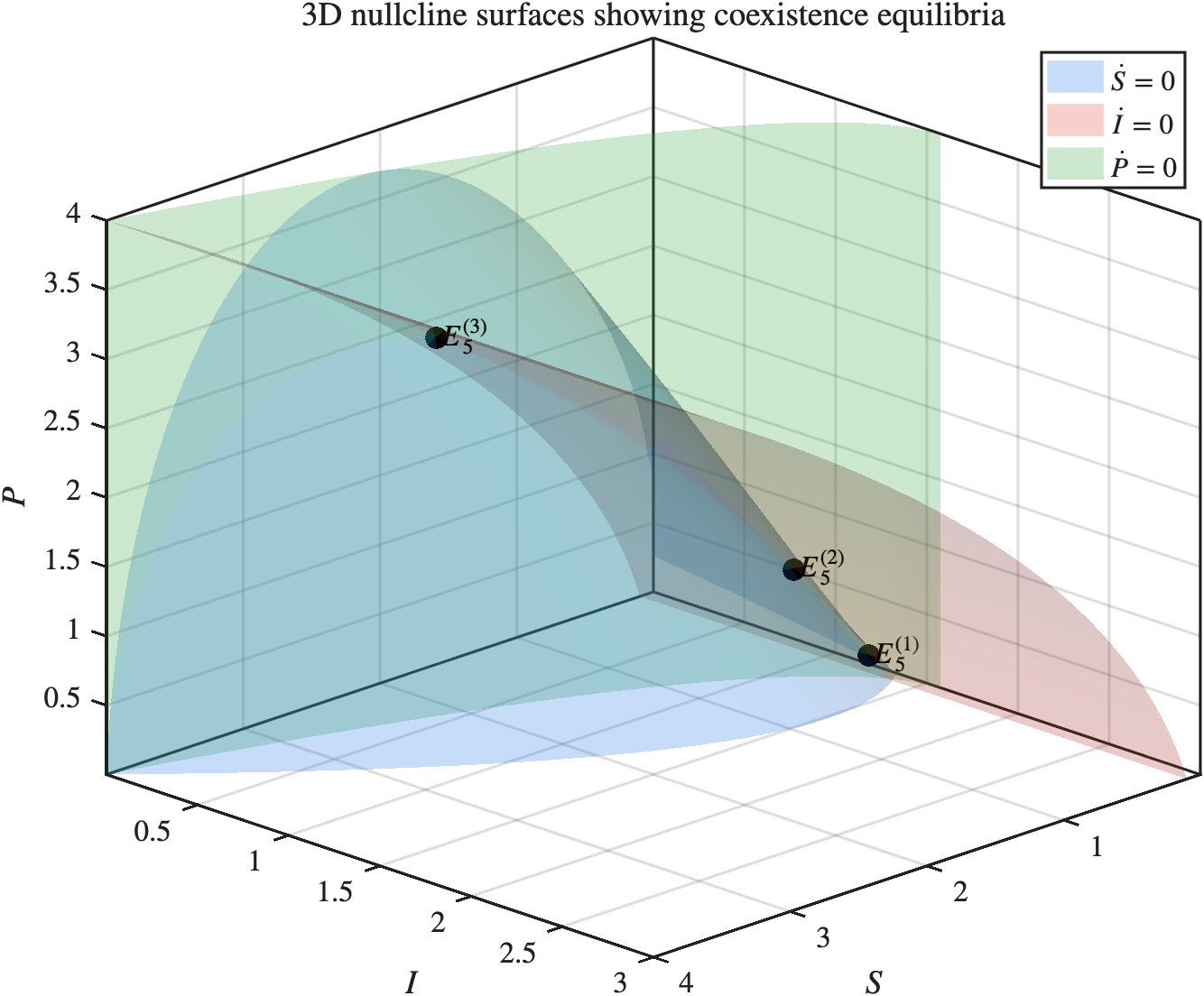}\\
\caption{Three-dimensional nullcline surfaces of model \eqref{Mainsystem} in the \((S,I,P)\)-phase space.
Paramter values are taken from Table \ref{tab:description} with $k_1=1.5$. The black markers denote the three coexistence equilibria obtained numerically: $E_5^{(1)}=(0.1472,1.2931,0.1561)$, $E_5^{(2)}=(0.4192,1.0820,0.7768)$ and $E_5^{(3)}=(2.1543,0.4258,2.7294)$}
\label{fig:nullclines}
\end{figure}

 \subsection{Local stability analysis}
The Jacobian of model (\ref{Mainsystem}) for any nonnegative equilibrium point $E(S^*,I^*,P^*)$ is computed as

\begin{equation}\label{Jaco}
   J_E= \left(
\begin{array}{ccc}
 J_{11} & J_{12} & J_{13} \\
 J_{21}& J_{22} & J_{23} \\
J_{31} & J_{32} & J_{33} \\
\end{array}
\right)
\end{equation}

where 
\begin{align*}
    J_{11}&=\frac{b_0 \left(I^* (L-2 S^*)+2 S^* (K+L)-K L-3 S^{*2}\right)}{K \left(1+k_1P^* \right)} - d_0r P^* S^{*\left(r-1\right)}-\frac{e_0 I^*}{1+k_2 P^*},\\
    J_{12}&= \frac{b_0 S^* (L-S^*)}{K \left(1+k_1 P^* \right)}-\frac{e_0 S^*}{1+k_2 P^*},\\
    J_{13}&=\frac{b_0 k_1 S^* (S^*-L) (I^*-K+S^*)}{K \left(1+k_1 P^*\right){}^2}-d_0 S^{*r}+\frac{ k_2 e_0 I^*S^*}{\left(1+k_2 P^*\right){}^2},\\
    J_{21}&= \frac{e_0I^*}{1+k_2 P^*},\\
    J_{22}&=-a_1-d_1 P^*+\frac{e_0 S^*}{1+k_2 P^*},\\
    J_{23}&= -d_1 I^*-\frac{ k_2 e_0 I^* S^*}{\left(1+k_2 P^*\right){}^2},\\
    J_{31}&=d_2 rP^* S^{*\left(r-1\right)},\\
    J_{32}&=d_3 P^*,\\
    J_{33}&=-a_2+d_3 I^*+d_2 S^{*r}.\\
\end{align*}

\begin{theorem}\label{E1}
  The susceptible prey only equilibrium  $E_1$ is locally stable if  $L<K<\frac{a_1}{e_0}$ and $r<\frac{\text{ln}~\left(\frac{a_2}{d_2} \right)}{\text{ln}~K}$.
\end{theorem}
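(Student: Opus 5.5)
The plan is to linearize system \eqref{Mainsystem} at $E_1=(K,0,0)$ using the general Jacobian \eqref{Jaco} and read off the eigenvalues directly. The Jacobian at $E_1$ turns out to be upper triangular, so no Routh--Hurwitz argument is needed. Because $S^*=K>0$, the non-Lipschitz factor $S^{*(r-1)}$ in $J_{11}$ and $J_{31}$ is finite, so the linearization is legitimate at this equilibrium. (This would fail at $E_0$, but that is not needed here.)

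\textbf{Step 1: evaluate the entries.} Substitute $S^*=K$, $I^*=0$, $P^*=0$ into the entries listed after \eqref{Jaco}.
\begin{itemize}
\item In $J_{11}$ the bracket becomes $2K(K+L)-KL-3K^2=K(L-K)$. The aggregation and transmission terms vanish because $P^*=I^*=0$. Hence $J_{11}=b_0(L-K)$.
\item In the lower-left block, $J_{21}=e_0I^*/(1+k_2P^*)=0$, $J_{31}=d_2rP^*K^{r-1}=0$ and $J_{32}=d_3P^*=0$.
\item Also $J_{23}=0$, since both of its terms carry the factor $I^*=0$.
\item The diagonal entries are $J_{22}=e_0K-a_1$ and $J_{33}=d_2K^r-a_2$.
\item The entries $J_{12}$ and $J_{13}$ are nonzero (for instance $J_{13}=-d_0K^r$, since the factor $I^*-K+S^*$ vanishes), but they do not affect the spectrum.
\end{itemize}

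\textbf{Step 2: read off the eigenvalues.} Since $J_{E_1}$ is upper triangular, its eigenvalues are
\[
\lambda_1=b_0(L-K),\qquad \lambda_2=e_0K-a_1,\qquad \lambda_3=d_2K^r-a_2 .
\]

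\textbf{Step 3: translate each sign condition.}
\begin{itemize}
\item $\lambda_1<0$ is equivalent to $L<K$.
\item $\lambda_2<0$ is equivalent to $K<a_1/e_0$.
\item $\lambda_3<0$ is equivalent to $K^r<a_2/d_2$, that is, $r\ln K<\ln(a_2/d_2)$.
\end{itemize}
When all three eigenvalues are negative, the Hartman--Grobman theorem (equivalently, the linearization principle for hyperbolic equilibria) gives local asymptotic stability of $E_1$.

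\textbf{Main obstacle.} The only delicate point is the last conversion. Dividing $r\ln K<\ln(a_2/d_2)$ by $\ln K$ to reach the stated form $r<\ln(a_2/d_2)/\ln K$ requires $\ln K>0$, i.e. $K>1$; this holds for the parameter values of Table~\ref{tab:description}, where $K=4$. I would state this assumption explicitly in the proof. I would also remark that for $K<1$ the inequality reverses, and for $K=1$ the condition reduces to $d_2<a_2$. The cleanest way to present the result is therefore in the unconditional form $d_2K^r<a_2$, with the stated form obtained under $K>1$.
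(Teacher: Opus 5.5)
Your argument is essentially the paper's: substitute $(K,0,0)$ into the Jacobian, observe that it is upper triangular with diagonal entries $b_0(L-K)$, $Ke_0-a_1$ and $d_2K^r-a_2$, and read off their signs. Your caveat about the last step is correct and goes beyond the paper, which leaves it implicit (with $K=4$ in its parameter table): rewriting $d_2K^r<a_2$ as $r<\ln(a_2/d_2)/\ln K$ needs $K>1$, and for $K<1$ the stated inequality would instead force $d_2K^r>a_2$, so stating $K>1$ explicitly or keeping the form $d_2K^r<a_2$ is the right fix.
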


\begin{proof}
Suppose $L<K<\frac{a_1}{e_0}$ and $r<\frac{\text{ln}~\left(\frac{a_2}{d_2} \right)}{\text{ln}~K}$. Then an evaluation of the Jacobian at $E_1$ yields
    \begin{equation}
        J_{E_1}=\left(
\begin{array}{ccc}
 b_0 (L-K) & b_0 (L-K)-K e_0 & -d_0 K^r \\
 0 & K e_0-a_1 & 0 \\
 0 & 0 & d_2 K^r-a_2 \\
\end{array}
\right).
    \end{equation}
The eigenvalues associated with $J_{E_1}$ are $\lambda_1=-b_0 (K-L)$, $\lambda_2=d_2 K^r-a_2$ and $\lambda_3=Ke_0-a_1$ are all negative. Hence $E_1$ is locally stable. However, $E_1$ is locally unstable if at least one of the following conditions are met: $L>K$ or $K>\frac{a_1}{e_0}$ or $r>\frac{\text{ln}~\left(\frac{a_2}{d_2} \right)}{\text{ln}~K}$.
\end{proof}

\begin{theorem}\label{E2}
  The susceptible prey only equilibrium  $E_2$ is locally stable if  $K<L<\frac{a_1}{e_0}$ and $r<\frac{\text{ln}~\left(\frac{a_2}{d_2} \right)}{\text{ln}~L}$.   
\end{theorem}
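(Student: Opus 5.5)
The plan mirrors the proof of Theorem \ref{E1}: evaluate the Jacobian \eqref{Jaco} at $E_2=(L,0,0)$, which exists because $L>K>0$. I would then observe that the resulting matrix is upper triangular, so its eigenvalues are its diagonal entries, and check their signs under the stated hypotheses.

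\textbf{Step 1 (Jacobian entries).} Set $S^*=L$, $I^*=0$, $P^*=0$. Every term carrying a factor $P^*$ or $I^*$ vanishes. In particular $J_{21}=0$, $J_{23}=0$, $J_{31}=d_2 r P^* S^{*(r-1)}=0$ and $J_{32}=0$. Since $S^*=L>0$, the factor $S^{*(r-1)}$ is finite, so the non-Lipschitz term in $J_{11}$ and $J_{31}$ causes no trouble. For $J_{11}$, the bracket becomes $2L(K+L)-KL-3L^2=KL-L^2=L(K-L)$, so
\[
J_{11}=\frac{b_0 L(K-L)}{K}.
\]
The remaining diagonal entries are $J_{22}=Le_0-a_1$ and $J_{33}=d_2L^r-a_2$. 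The off-diagonal entries $J_{12}$ and $J_{13}$ are irrelevant to the spectrum.

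\textbf{Step 2 (signs).} The hypothesis $K<L$ together with $L>0$ gives $J_{11}<0$. The hypothesis $L<a_1/e_0$ gives $J_{22}<0$. For $J_{33}$, the condition $r<\ln(a_2/d_2)/\ln L$ should be converted into $L^r<a_2/d_2$ by multiplying through by $\ln L$.

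\textbf{Main obstacle.} The conversion in Step 2 needs $\ln L>0$, i.e. $L>1$. With the standing parameters ($K=4$) this holds, since $L>K$; I would state $L>1$ explicitly as implicit in the hypothesis, just as $K>1$ is tacitly used in Theorem \ref{E1}. If $L<1$, the inequality reverses, and the correct condition is simply $d_2L^r<a_2$. Granting this, all three eigenvalues are negative, so $E_2$ is locally asymptotically stable.

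I would close with the analogous instability remark: $E_2$ is unstable if $L<K$, or $L>a_1/e_0$, or $d_2L^r>a_2$.
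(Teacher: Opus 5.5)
Your proposal is correct and follows the paper's proof: you evaluate the Jacobian at $E_2=(L,0,0)$, observe that it is upper triangular with diagonal entries $b_0L(K-L)/K$, $Le_0-a_1$ and $d_2L^r-a_2$, and read off the signs of these eigenvalues. Your caveat is a genuine point the paper leaves implicit: converting $r<\ln(a_2/d_2)/\ln L$ into $d_2L^r<a_2$ requires $L>1$, and for $0<L<1$ the stated inequality, whenever it can hold with $r>0$, actually gives $d_2L^r>a_2$, so stating the condition directly as $d_2L^r<a_2$ is the right repair.
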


\begin{proof}
 Assume $K<L<\frac{a_1}{e_0}$ and $r<\frac{\text{ln}~\left(\frac{a_2}{d_2} \right)}{\text{ln}~L}$. The Jacobian matrix evaluated at $E_2$ is given by   
\begin{equation}
   J_{E_2}= \left(
\begin{array}{ccc}
 \frac{b_0 L (K-L)}{K} & -L e_0 & -d_0 L^r \\
 0 & L e_0-a_1 & 0 \\
 0 & 0 & d_2 L^r-a_2 \\
\end{array}
\right).
\end{equation}
The eigenvalues of $J_{E_2}$ are $\lambda_1=\frac{b_0 L (K-L)}{K}$, $\lambda_2=d_2 L^r-a_2$ and $\lambda_3=L e_0-a_1$. Based on earlier assumptions, each eigenvalue is negative and hence $E_2$ is locally stable. This susceptible prey only state loses its stability when either  $K>L$ or $L>\frac{a_1}{e_0}$ or $r>\frac{\text{ln}~\left(\frac{a_2}{d_2} \right)}{\text{ln}~L}$.
\end{proof}

With regards to the stability of $E_3$, an evaluation of $J_{E_3}$ results in the matrix

\begin{equation}
  J_{E_3} =  \left(
\begin{array}{ccc}
 A_{11} & A_{12} & A_{13} \\
 0 & A_{22} & 0 \\
A_{31} & A_{32} & 0\\
\end{array}
\right)
\end{equation}

where
\begin{align*}
    A_{11}&= \frac{b_0 \left(2 (K+L) \left(\frac{a_2}{d_2}\right)^{1/r}-3 \left(\frac{a_2}{d_2}\right)^{2/r}-K L\right)}{K \left(1+k_1P_3 \right)}- d_0 r \left(\left(\frac{a_2}{d_2}\right)^{1/r}\right)^{r-1}P_3, \\
    A_{12}&=\left(\frac{a_2}{d_2}\right)^{1/r} \left(\frac{b_0 \left(L-\left(\frac{a_2}{d_2}\right)^{1/r}\right)}{K\left(1+k_1 P_3 \right)}-\frac{e_0}{1+ k_2 P_3}\right), \\
    A_{13}&=\frac{b_0 k_1 \left(\frac{a_2}{d_2}\right)^{1/r} \left(\left(\frac{a_2}{d_2}\right)^{1/r}-K\right) \left(\left(\frac{a_2}{d_2}\right)^{1/r}-L\right)}{K \left(1+k_1P_3 \right)^2} - \frac{d_0a_2}{d_2}, \\
    A_{22}&= \frac{e_0 \left(\frac{a_2}{d_2}\right)^{1/r}}{1+k_2P_3}-a_1- d_1 P_3, \\
    A_{31}&=  d_2 r \left(\left(\frac{a_2}{d_2}\right)^{1/r}\right)^{r-1} P_3, \\
    A_{32}&= d_3 P_3. \\
\end{align*}
One of the eigenvalues of $ J_{E_3}$ is $\lambda_1=A_{22}$ and the other two eigenvalues are roots of the characteristic equation 
\begin{equation}\label{char1}
    \lambda^2 -A_{11}\lambda - A_{13} A_{31}=0 
\end{equation}

which satisfy $\lambda_2+\lambda_3=A_{11}$ and $\lambda_2\lambda_3= - A_{31} A_{13}$. By the Routh-Hurwitz stability criteria, the infectious prey-free state is locally stable provided $A_{22}<0$, $A_{11}<0$ and $-A_{31}A_{13}>0$. (This happens only if $A_{13}<0$). Next, we state a theorem concerning the local stability of the infectious prey-free state.

\begin{theorem}\label{E3}
  The infectious prey-free equilibrium $E_3$ is locally stable if  $A_{22}<0$, $A_{11}<0$ and $A_{13}<0$.
\end{theorem}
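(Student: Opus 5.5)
The plan is to linearize model \eqref{Mainsystem} at $E_3=(S_3,0,P_3)$, where $S_3=(a_2/d_2)^{1/r}$, and read off the eigenvalues of $J_{E_3}$.

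The first step is to confirm the block structure of $J_{E_3}$ displayed above. Since $I_3=0$, the entries $J_{21}=e_0I^*/(1+k_2P^*)$ and $J_{23}=-d_1I^*-k_2e_0I^*S^*/(1+k_2P^*)^2$ vanish. Since $d_2S_3^r=a_2$, the entry $J_{33}=-a_2+d_3\cdot 0+d_2S_3^r$ also vanishes.

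The second row is therefore $(0,A_{22},0)$. Expanding $\det(J_{E_3}-\lambda I)$ along that row gives
\[
\det(J_{E_3}-\lambda I)=(A_{22}-\lambda)\bigl[(A_{11}-\lambda)(-\lambda)-A_{13}A_{31}\bigr].
\]
So one eigenvalue is $\lambda_1=A_{22}$, and the other two are the roots of the characteristic equation \eqref{char1}, $\lambda^2-A_{11}\lambda-A_{13}A_{31}=0$. The hypothesis $A_{22}<0$ immediately makes $\lambda_1$ negative.

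For the quadratic, I would use the Routh--Hurwitz criterion for degree two: both roots have negative real parts if and only if the coefficients $-A_{11}$ and $-A_{13}A_{31}$ are both positive. The hypothesis $A_{11}<0$ gives the first condition.

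For the second, observe that $A_{31}=d_2 r S_3^{\,r-1}P_3$ is strictly positive. This uses $d_2>0$, $r\in(0,1)$, $S_3>0$, and $P_3>0$. The last holds because $E_3$ is assumed feasible, that is $B>0$, so $P_3$ is the positive root of \eqref{rootP5}. Hence $A_{13}<0$ gives $-A_{13}A_{31}>0$. Equivalently, $\lambda_2+\lambda_3=A_{11}<0$ and $\lambda_2\lambda_3=-A_{13}A_{31}>0$, which rules out both a positive real root and a complex pair with nonnegative real part.

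With all three eigenvalues having negative real parts, I would invoke the linearization (Hartman--Grobman) principle. This needs $E_3$ to lie in the region where the vector field is $C^1$. That holds because $S_3>0$, so the non-Lipschitz term $S^r$ is smooth in a neighborhood of $E_3$. Local asymptotic stability of $E_3$ then follows.

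No step is a real obstacle; the only care needed is to justify the sign of $A_{31}$ through the feasibility of $P_3$, and to note that smoothness away from $S=0$ is what licenses linearization.
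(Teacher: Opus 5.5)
Your proposal is correct and follows the paper's argument: the second row of $J_{E_3}$ is $(0,A_{22},0)$, which splits off the eigenvalue $\lambda_1=A_{22}$ and leaves the quadratic \eqref{char1}, and Routh--Hurwitz then applies using $A_{31}>0$. You also make explicit two points the paper leaves implicit: $A_{31}>0$ follows from feasibility ($P_3>0$), and linearization is legitimate because $S_3>0$ keeps the vector field smooth near $E_3$.
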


Next, an evaluation of the Jacobian at $E_4$ yields
\begin{equation}\label{JacE4}
   J_{E_4}= \left(
\begin{array}{ccc}
 B_{11} & B_{12} & B_{13} \\
 B_{21} & B_{22} & B_{23} \\
 0 & 0 & B_{33}\\
\end{array}
\right)
\end{equation}
where

\begin{align*}
    B_{11}&=\frac{b_0 \left(2 a_0 e_0 (-I_4+K+L)-3 a_0^2+L e_0^2 (I_4-K)\right)}{K e_0^2}-e_0 I_4, \\
    B_{12}&=\left(\frac{b_0 \left(L e_0-a_0\right)}{K e_0^2}-1\right),\\
    B_{13}&=\frac{a_0 b_0 k_1 \left(a_0-L e_0\right) \left(a_0+e_0 (I_4-K)\right)}{K e_0^3}-d_0 \left(\frac{a_0}{e_0}\right)^r+a_0 k_2 I_4,  \\
    B_{21}&= e_0 I_4,\\
    B_{22}&=a_0-a_1, \\
    B_{23}&=- \left(a_0 k_2+d_1\right)I_4,\\
    B_{33}&=d_2 \left(\frac{a_0}{e_0}\right)^r-a_2+d_3 I_4.\\
\end{align*}

The characteristic equation for $J_{E_4}$ is 
\begin{equation}
    \lambda^3 + \zeta_1 \lambda^2 +\zeta_2 \lambda + \zeta_3=0
\end{equation}
where
\begin{align*}
    \zeta_1 &= -\left(B_{11}+B_{22} \right),\\
    \zeta_2 & = B_{11} B_{22} - B_{12}B_{21}, \\
    \zeta_3 & = B_{12} B_{21} B_{33} - B_{11} B_{22} B_{33}.
\end{align*}

We can ascertain the local stability of $E_4$ by applying the Routh-Hurwitz stability criteria in the following theorem:
\begin{theorem}\label{E4}
  The predator-free equilibrium $E_4$ is locally stable if  $\zeta_1 >0, \zeta_2>0, \zeta_3>0$ and $\zeta_1 \zeta_2 -\zeta_3>0$.
\end{theorem}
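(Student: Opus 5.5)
The plan is to exploit the block structure of $J_{E_4}$ in \eqref{JacE4} and then apply the Routh--Hurwitz criterion for cubic polynomials. Because the third row of $J_{E_4}$ is $(0,0,B_{33})$, expanding $\det(\lambda\mathbb{I}-J_{E_4})$ along that row factors the characteristic polynomial as
\[
(\lambda-B_{33})\bigl(\lambda^2-(B_{11}+B_{22})\lambda+B_{11}B_{22}-B_{12}B_{21}\bigr).
\]
Multiplying out gives a monic cubic, and I will compare its coefficients with $\zeta_1,\zeta_2,\zeta_3$ as defined above. The quadratic factor reproduces $\zeta_1=-(B_{11}+B_{22})$ in the $\lambda^2$ coefficient and $\zeta_2=B_{11}B_{22}-B_{12}B_{21}$ as its constant term. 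The constant term of the full cubic is $-B_{33}(B_{11}B_{22}-B_{12}B_{21})=\zeta_3$, which matches the paper's expression.

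The main obstacle is that the $\lambda^2$ and $\lambda$ coefficients of the full cubic also pick up contributions from $B_{33}$, so the stated $\zeta_1,\zeta_2$ are really the coefficients of the quadratic factor. I will handle this through the factorization itself rather than insisting on a literal coefficient match. The steps are as follows.
\begin{enumerate}
\item One eigenvalue is $\lambda_3=B_{33}$. The other two are the roots of $\lambda^2+\zeta_1\lambda+\zeta_2=0$.
\item By Routh--Hurwitz for quadratics, $\zeta_1>0$ and $\zeta_2>0$ force both roots of the quadratic factor to have negative real part.
\item Since $\zeta_3=-B_{33}\zeta_2$, the conditions $\zeta_3>0$ and $\zeta_2>0$ together give $B_{33}<0$.
\end{enumerate}
Hence all three eigenvalues have negative real part, and $E_4$ is locally asymptotically stable by the linearization principle.

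The remaining hypothesis $\zeta_1\zeta_2-\zeta_3>0$ is exactly the Routh--Hurwitz cubic condition if one reads the characteristic polynomial as $\lambda^3+\zeta_1\lambda^2+\zeta_2\lambda+\zeta_3$. In the block-factored approach it is redundant, since $\zeta_1,\zeta_2>0$ and $B_{33}<0$ already settle stability, though it is consistent with the other conditions.

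Finally, the linearization is legitimate here even though the aggregation term is non-Lipschitz. The non-smooth term $S^r$ is smooth near $E_4$ because $S=a_1/e_0>0$ there. Feasibility of $E_4$, that is $I_4>0$, is assumed from part (v) of the equilibrium classification.
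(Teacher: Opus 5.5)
Your proof is correct, and it is more careful than the paper's. The paper gives no argument beyond asserting that the characteristic equation is $\lambda^3+\zeta_1\lambda^2+\zeta_2\lambda+\zeta_3=0$ and invoking the cubic Routh--Hurwitz criterion.

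As you noticed, that cubic is not the characteristic polynomial when $\zeta_1,\zeta_2$ are defined as displayed. Expanding your factorization gives $\lambda^3+(\zeta_1-B_{33})\lambda^2+(\zeta_2-B_{33}\zeta_1)\lambda+\zeta_3$. Only $\zeta_3$ is a genuine coefficient, so the paper's appeal to the cubic test does not literally apply.

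Your route avoids this problem. You factor off the eigenvalue $B_{33}=d_2(a_1/e_0)^r+d_3I_4-a_2$, which is the predator's per-capita growth rate at the predator-free state. You apply the quadratic Routh--Hurwitz test to $\lambda^2+\zeta_1\lambda+\zeta_2$. You then read $B_{33}<0$ off $\zeta_3=-B_{33}\zeta_2$. This proves the statement exactly as written and shows that $\zeta_1,\zeta_2,\zeta_3>0$ already suffice.

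One refinement to your remark on the fourth hypothesis. With the paper's definitions, $\zeta_1\zeta_2-\zeta_3=\zeta_2(\zeta_1+B_{33})$, so that condition forces $\zeta_1>|B_{33}|$. It is therefore not a consequence of the other three, only unnecessary. The theorem as stated is weaker than what your argument delivers.

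Once its coefficients are corrected, the paper's cubic approach gives a test that does not rely on the block-triangular structure of $J_{E_4}$. Yours exploits that structure to give a sharper criterion that is easy to interpret. Your check that linearization is legitimate at $E_4$, because $S=a_1/e_0>0$ keeps $S^r$ smooth there, is also a point the paper leaves implicit.
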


Finally, we evaluate $J_{E_5}$ and obtain the following characteristic equation: 
\begin{equation}\label{charsigma}
    \lambda^3 + \sigma_1 \lambda^2 +\sigma_2 \lambda + \sigma_3=0
\end{equation}

where

\begin{align*}
    \sigma_1 &= -\left(J_{11}+J_{22} +J_{33} \right),\\
    \sigma_2 &= J_{22}J_{33}+J_{11}\left(J_{22}+J_{33} \right)-J_{12}J_{21}-J_{13}J_{31}-J_{23}J_{32},\\
    \sigma_3 &=J_{13}J_{22}J_{31} - J_{12}J_{23}J_{31}-J_{13}J_{21}J_{32} +J_{11}J_{23}J_{32}+J_{12}J_{21}J_{33}-J_{11}J_{22}J_{33}.\\
\end{align*}

Similarly, we apply the Routh-Hurwitz stability criteria for local stability of the coexistence equilibrium via the next theorem.

\begin{theorem}\label{E5}
  The coexistence equilibrium $E_5$ is locally stable if  $\sigma_1 >0, \sigma_2>0, \sigma_3>0$ and $\sigma_1 \sigma_2 -\sigma_3>0$.
\end{theorem}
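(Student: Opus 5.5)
The plan is to apply the Routh--Hurwitz criterion for cubic polynomials directly to the characteristic polynomial \eqref{charsigma} of $J_{E_5}$. First, I will confirm that at a coexistence equilibrium $E_5=(S^*,I^*,P^*)$ with $S^*>0$ the vector field of model \eqref{Mainsystem} is continuously differentiable in a neighbourhood of $E_5$. The only non-smooth term is $S^r$ with $0<r<1$, and it is smooth away from $S=0$. Hence the Jacobian \eqref{Jaco} is well defined there, with the entries $J_{31}=d_2 r P^*S^{*(r-1)}$ and $d_0 r P^* S^{*(r-1)}$ finite. The Hartman--Grobman theorem, or simply the linearization principle for hyperbolic equilibria, then reduces local asymptotic stability to showing that every eigenvalue of $J_{E_5}$ has negative real part.

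Next, I will check that the coefficients in \eqref{charsigma} are those of $\det(\lambda I-J_{E_5})$. Expanding the determinant gives $\sigma_1=-\operatorname{tr}J_{E_5}$. The coefficient $\sigma_2$ is the sum of the three principal $2\times 2$ minors, namely $(J_{11}J_{22}-J_{12}J_{21})+(J_{11}J_{33}-J_{13}J_{31})+(J_{22}J_{33}-J_{23}J_{32})$. The coefficient $\sigma_3$ is $-\det J_{E_5}$, and its cofactor expansion reproduces the six listed terms. This is only a bookkeeping step, and I will verify it against the stated formulas. I will also note that at $E_5$ one has $J_{33}=0$ by \eqref{equilibria3}, which simplifies the expressions, although the theorem does not require this.

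Finally, I will invoke the Routh--Hurwitz criterion for $\lambda^3+\sigma_1\lambda^2+\sigma_2\lambda+\sigma_3$. The Hurwitz matrix is
$$\begin{pmatrix}\sigma_1 & \sigma_3 & 0\\ 1 & \sigma_2 & 0\\ 0 & \sigma_1 & \sigma_3\end{pmatrix},$$
and its leading principal minors are $\Delta_1=\sigma_1$, $\Delta_2=\sigma_1\sigma_2-\sigma_3$ and $\Delta_3=\sigma_3\Delta_2$. All roots have negative real part if and only if all three minors are positive, which is equivalent to the hypotheses $\sigma_1>0$, $\sigma_3>0$ and $\sigma_1\sigma_2-\sigma_3>0$. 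Then $\sigma_2>0$ follows automatically, but I will keep it as stated. Combining this with the linearization principle yields local asymptotic stability of $E_5$.

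The main obstacle is not the algebra. It is making sure the linearization is legitimate, i.e.\ that $S^*>0$ keeps us away from the non-Lipschitz set $\{S=0\}$. This is guaranteed by the feasibility conditions for $E_5$ derived above, specifically $I^*<a_2/d_3$, which gives $S^*>0$ through \eqref{SSTAR}.
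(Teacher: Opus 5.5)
Your proposal is correct and follows the same route as the paper, which simply applies the Routh--Hurwitz criterion to the cubic characteristic polynomial \eqref{charsigma} of $J_{E_5}$ without further argument. Your additional checks are all accurate and fill in details the paper leaves implicit: that $S^*>0$ keeps the linearization away from the non-smooth set $\{S=0\}$, that $\sigma_1,\sigma_2,\sigma_3$ are exactly the negative trace, the sum of principal $2\times 2$ minors and the negative determinant, and that $\sigma_2>0$ is implied by the other three conditions.
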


\subsection{On the extinction equilibrium}
The extinction equilibrium $E_0$ requires special consideration because the aggregation term $S^r$, with $0<r<1$, is not differentiable at $S=0$. Consequently, the Jacobian matrix is not defined at $E_0$, and the classical Hartman--Grobman linearization theorem cannot be applied. The local dynamics near $E_0$ must therefore be investigated using nonlinear techniques rather than standard linearization arguments.

A complete characterization of the stability of $E_0$ for the full three-dimensional system is challenging due to the non-smooth nature of the aggregation term. Therefore, to gain analytical insight into the role of the Allee threshold in shaping the extinction dynamics, we restrict our attention to the biologically relevant invariant susceptible-only manifold
$$
\mathcal{A}=\{(S,I,P)\in\mathbb{R}_+^3: S\geq 0, I=0, P=0\}.
$$
This reduction isolates the demographic dynamics of the susceptible prey population and allows us to rigorously examine the local behavior of solutions near the extinction state.

\begin{proposition}
\label{prop:E0unstable}
Suppose \(L<0\). Then the extinction equilibrium $E_0$ is unstable relative to the invariant manifold $\mathcal{A}$. In particular, $E_0$ is not assymptotically stable in the full model.
\end{proposition}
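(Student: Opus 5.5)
On $\mathcal{A}$ we have $I\equiv 0$ and $P\equiv 0$. This manifold is invariant: the $I$- and $P$-equations in \eqref{Mainsystem} have the factored forms $I\Psi$ and $P\Gamma$ used in Theorem \ref{thm:nonnegativity_updated}, so both vanish identically when $I=P=0$. Restricted to $\mathcal{A}$, the system becomes the scalar equation
\[
\frac{dS}{dt}=g(S):=b_0 S\left(1-\frac{S}{K}\right)(S-L),
\]
since the aggregation term $d_0S^rP$ and the transmission term both vanish when $P=I=0$. The scalar field $g$ is a polynomial, so it is smooth. The non-differentiability of $S^r$ therefore does not enter the restricted dynamics, and uniqueness of solutions on $\mathcal{A}$ is not in question.

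The plan is a direct sign analysis of $g$ near $S=0$. Because $L<0$, we have $S-L=S+|L|>0$ for all $S\ge 0$. For $0<S<K$ the factor $1-S/K$ is positive, so $g(S)>0$ on $(0,K)$. Equivalently, $g'(0)=-b_0L=b_0|L|>0$, so $S=0$ is a hyperbolic repeller of the one-dimensional flow. One can also write $g(S)\ge c\,S$ for $0<S\le K/2$, with $c=\tfrac12 b_0|L|>0$. Then any solution with $0<S(0)\le K/2$ satisfies $S(t)\ge S(0)e^{ct}$ for as long as $S(t)\le K/2$. Hence it must leave the ball of radius $K/2$ about $E_0$ in finite time, and in fact it increases monotonically to $K$.

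From this I conclude as follows. Every neighborhood $U$ of $E_0$ contains points $(S_0,0,0)$ with $S_0>0$ small. The trajectories through these points stay in $\mathcal{A}$, leave the fixed neighborhood $\{S<K/2\}$, and converge to $E_1=(K,0,0)\neq E_0$. This violates Lyapunov stability relative to $\mathcal{A}$, and so also in $\mathbb{R}^3_+$, since these are genuine solutions of the full system. It also violates attractivity, because such trajectories do not tend to $E_0$. Therefore $E_0$ is not asymptotically stable for the full model.

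The only delicate point is justifying that solutions starting in $\mathcal{A}$ remain in $\mathcal{A}$ despite the non-Lipschitz term $S^rP$ in the full system. I would handle it by noting that $(S(t),0,0)$, with $S$ the unique solution of the smooth logistic–Allee ODE, is an explicit solution of the full system. For the instability claim it suffices to exhibit such solutions; uniqueness off $\mathcal{A}$ is not needed. So this step poses no real obstacle.
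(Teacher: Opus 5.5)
Your proposal is correct and takes the same route as the paper. You restrict to the invariant manifold $\mathcal{A}$, reduce to $\dot S=b_0S(1-S/K)(S-L)$, and use $L<0$ to get $\dot S>0$ for small $S>0$. Your escape estimate $S(t)\ge S(0)e^{ct}$ with $c=\tfrac12 b_0|L|$, together with your remarks on invariance and on the failure of attractivity in the full system, make rigorous what the paper's shorter argument leaves implicit.
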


\begin{proof}
On $\mathcal A$, model \eqref{Mainsystem} reduces to

$$
\frac{dS}{dt}
=
b_0S\left(1-\frac{S}{K}\right)(S-L).
$$

Since $L<0$, we have $S-L>0$ for all sufficiently
small $S>0$. Moreover,
$$
1-\frac{S}{K}>0.
$$

Hence
$$
\frac{dS}{dt}>0
$$

for all sufficiently small $S>0$.

\noindent Therefore, solutions starting arbitrarily close to $E_0$
along $\mathcal A$ move away from the origin. Consequently $E_0$ is unstable.
\end{proof}

\begin{theorem}\label{thm:S00invariant}
Suppose $L>0$ and consider the invariant manifold $\mathcal{A}$. Then the extinction equilibrium $E_0$ is locally asymptotically stable relative to $\mathcal{A}$. More precisely, there exists $\delta \in (0,L)$ such that every solution with
\[
0<S(0)<\delta
\]
satisfies
\[
\lim_{t\to\infty}S(t)=0.
\]
\end{theorem}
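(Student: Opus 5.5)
On the invariant manifold $\mathcal{A}$ the model \eqref{Mainsystem} reduces, exactly as in the proof of Proposition \ref{prop:E0unstable}, to the scalar equation
\[
\frac{dS}{dt}=g(S):=b_0S\left(1-\frac{S}{K}\right)(S-L).
\]
The plan is a one-dimensional phase-line argument. We take $\delta=\min\{L,K\}/2$, which lies in $(0,L)$; if $K<L$ this keeps us below $K$ as well. For $0<S<\delta$ we have $S>0$, $1-S/K>0$ and $S-L<0$. Hence $g(S)<0$ on $(0,\delta)$, and $g(0)=0$.

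\textbf{Step 1: invariance of $(0,\delta)$.} Take $S(0)\in(0,\delta)$. The solution cannot reach $0$ in finite time, since $g$ is a polynomial, hence locally Lipschitz, and $S\equiv 0$ is a solution; uniqueness then forbids the trajectory from hitting it. The solution also cannot increase past $S(0)$, because $g<0$ on $(0,\delta)$. So $0<S(t)\le S(0)<\delta$ for all $t\ge 0$, the solution exists globally, and $S(t)$ is strictly decreasing.

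\textbf{Step 2: convergence to $0$.} A monotone bounded function has a limit $S_\infty\in[0,S(0))$. If $S_\infty>0$, then $g(S_\infty)<0$, and by continuity $dS/dt\le g(S_\infty)/2$ for all large $t$. This would force $S(t)\to-\infty$, a contradiction. Hence $S_\infty=0$, which gives attractivity.

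\textbf{Step 3: stability and a rate.} Stability in the Lyapunov sense is immediate from Step 1: given $\varepsilon$, choose $S(0)<\min\{\varepsilon,\delta\}$, and then $S(t)\le S(0)<\varepsilon$ for all $t$. For an optional quantitative statement, observe that on $(0,\delta)$
\[
g(S)\le -b_0\left(1-\frac{\delta}{K}\right)(L-\delta)\,S ,
\]
which gives exponential decay $S(t)\le S(0)e^{-ct}$ with $c>0$. This matches the linearization $g'(0)=-b_0L<0$; there is no non-smoothness issue here, because the $S^r$ term vanishes on $\mathcal{A}$.

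The only real point of care is the choice of $\delta$ when $K<L$: the sign of $1-S/K$ must be controlled, and taking $\delta\le K/2$ handles it. Everything else is routine. I would also remark that this result is relative to $\mathcal{A}$ only. It says nothing about $E_0$ in the full system, where the non-Lipschitz term $d_0S^rP$ matters.
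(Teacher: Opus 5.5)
Your proof is correct and follows the same route as the paper: restrict to $\mathcal{A}$, use the sign of $b_0S(1-S/K)(S-L)$ on a small interval $(0,\delta)$, and compare with $\dot S\le -cS$ to obtain $S(t)\le S(0)e^{-ct}$. Your choice $\delta=\min\{L,K\}/2$ and your explicit proof that $(0,\delta)$ is invariant are slightly more careful than the paper, whose choice $\delta=L/2$ tacitly requires $L<2K$ so that $1-\delta/K>0$ and hence $c>0$, and which uses the differential inequality without stating that $S(t)$ stays in $(0,\delta)$.
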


\begin{proof}
Restrict model \eqref{Mainsystem} to the invariant manifold
$\mathcal{A}$.
Then

\[
\frac{dS}{dt}
=
b_0S\left(1-\frac{S}{K}\right)(S-L).
\]

Choose $\delta=L/2$.
For $0<S<\delta$ we have

\[
S-L<-\frac{L}{2}<0
\]

and

\[
1-\frac{S}{K}>0.
\]

Therefore

\[
\frac{dS}{dt}
=
b_0S\left(1-\frac{S}{K}\right)(S-L)
<
-\frac{b_0L}{2}
\left(1-\frac{\delta}{K}\right)S.
\]

Define

\[
c=
\frac{b_0L}{2}
\left(1-\frac{\delta}{K}\right)>0.
\]

Then

\[
\frac{dS}{dt}\le -cS.
\]

Applying the comparison theorem yields

\[
S(t)
\le
S(0)e^{-ct}.
\]

Hence

\[
\lim_{t\to\infty}S(t)=0.
\]

\noindent Therefore $E_0$ is locally asymptotically stable relative to the susceptible-only manifold.
\end{proof}

\begin{figure}[tbh]
\centering
\includegraphics[width=1.0\textwidth]{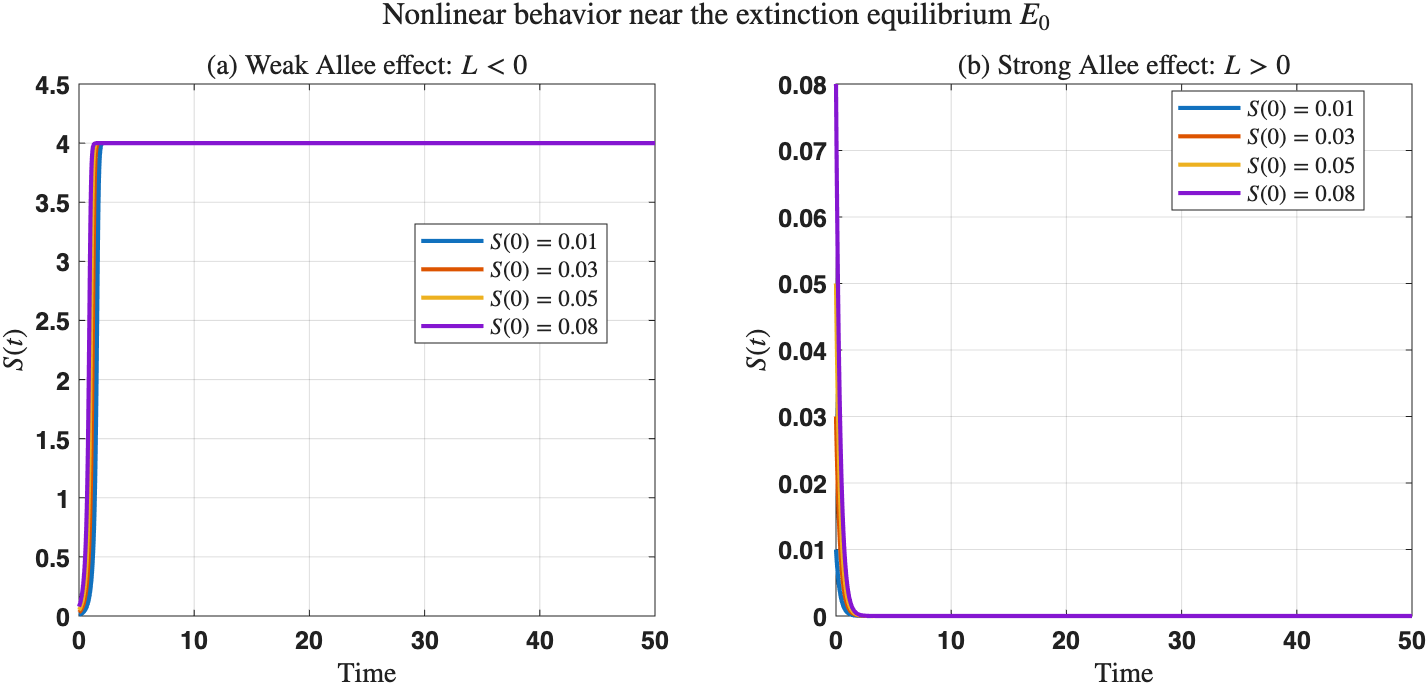}
\caption{Numerical behavior near the extinction equilibrium $(E_0)$ on the susceptible-only invariant manifold. In panel (a), $L=-1<0$, and trajectories starting near $E_0$ move away from extinction, confirming instability under a weak Allee effect. In panel (b), $L=1>0$, and trajectories starting below the Allee threshold converge toward $E_0$, confirming local attraction under a strong Allee effect.}
\label{fig:E0fig}
\end{figure}

\begin{remark}
Theorem \ref{thm:S00invariant} highlights a fundamental distinction between weak and strong Allee effects. For $L<0$, Proposition \ref{prop:E0unstable} shows that the extinction state is unstable because low-density populations retain positive per-capita growth (see Figure \ref{fig:E0fig}(a) for numerical validation). In contrast, when $L>0$, populations below the critical threshold experience negative growth and are driven toward extinction (see Figure \ref{fig:E0fig}(b) for numerical validation). Thus the Allee threshold transforms the extinction equilibrium from a repelling state into a locally attracting state on biologically relevant low-density trajectories. Although Proposition \ref{prop:E0unstable} shows that the extinction
equilibrium is unstable when $L<0$, this does not preclude extinction
of the susceptible prey population. In particular, the aggregation-based
predation term
$$
d_0S^rP,
\qquad 0<r<1,
$$
\noindent introduces a non-Lipschitz mechanism that can dominate the growth term when predator pressure remains sufficiently large. This phenomenon highlights a fundamental distinction between local stability and extinction dynamics in systems with aggregation-mediated predation.
\end{remark}

\section{Bifurcation Dynamics and Ecological Tipping Points}\label{sec:Bifurcation analysis}
\noindent In this section, we explore and analyze the occurrence of possible local co-dimension one and two bifurcations. These bifurcations provide valuable insight into the qualitative changes in the dynamics of model (\ref{Mainsystem}) that arise when key parameters of interest - namely the level of fear $k_1$, Allee effect $L$ and prey aggregation constant $r$ are varied.

\subsection{Co-dimension one bifurcation}
Co-dimension one bifurcations are typically associated with the creation, collision of steady states or period orbits when single parameters of interest are varied. We establish the conditions for the occurrence of a saddle-node bifurcation, Hopf bifurcation and transcritical bifurcation via the following theorems:

\begin{theorem}\label{SNB}
The model (\ref{Mainsystem}) experiences a saddle-node bifurcation around $E_5$  at $L=L^*$ when $tr(J_{E_5})<0$ and $det(J_{E_5})=0$.
\end{theorem}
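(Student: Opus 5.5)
The natural tool is Sotomayor's theorem for saddle-node bifurcations, applied at $E_5=(S^*,I^*,P^*)$ with $L$ as the bifurcation parameter. Write the vector field of model (\ref{Mainsystem}) as $F(X;L)$ with $X=(S,I,P)^T$. Near $E_5$ we have $S^*>0$, so $S^r$ is smooth there, and $F$ is $C^\infty$ in a neighbourhood of $(E_5,L^*)$; the non-smoothness at $S=0$ does not matter here.

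\textbf{Step 1 (simple zero eigenvalue).} The hypothesis $\det(J_{E_5})=0$ at $L=L^*$ means $\sigma_3=0$ in (\ref{charsigma}), so $\lambda=0$ is an eigenvalue. The characteristic polynomial then factors as $\lambda(\lambda^2+\sigma_1\lambda+\sigma_2)$. The other two eigenvalues have sum $tr(J_{E_5})=-\sigma_1<0$.

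To make the zero eigenvalue simple, with the remaining spectrum off the imaginary axis, I would also need $\sigma_2\neq 0$. This is implicit in the statement; I would add $\sigma_2>0$ as part of the nondegeneracy assumption.

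\textbf{Step 2 (eigenvectors).} Take $V$ and $W$ with $J_{E_5}V=0$ and $J_{E_5}^TW=0$. I would write them explicitly via cofactors of $J_{E_5}$, for instance $V=(J_{12}J_{23}-J_{13}J_{22},\ J_{13}J_{21}-J_{11}J_{23},\ J_{11}J_{22}-J_{12}J_{21})^T$, and similarly for $W$ using the columns.

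\textbf{Step 3 (transversality).} Differentiating $F$ in $L$ gives
\[
F_L(E_5;L^*)=\Big(-\tfrac{b_0S^*}{1+k_1P^*}\big(1-\tfrac{S^*+I^*}{K}\big),\,0,\,0\Big)^T .
\]
So the first Sotomayor condition is
\[
W^TF_L=-w_1\,\tfrac{b_0S^*}{1+k_1P^*}\big(1-\tfrac{S^*+I^*}{K}\big)\neq 0 .
\]
This requires $w_1\neq0$ and $S^*+I^*\neq K$. I would verify this or state it as a generic condition.

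\textbf{Step 4 (quadratic nondegeneracy).} The second condition is $W^T[D^2F(E_5;L^*)(V,V)]\neq0$. I would compute the second-order terms component by component. The Hessian of the $S$-equation carries the cubic-logistic terms, the $r(r-1)S^{r-2}P$ aggregation curvature, and the fear denominators. The $I$ and $P$ equations contribute terms such as $2e_0k_2^2SI/(1+k_2P)^3$ and $d_2r(r-1)S^{r-2}P$.

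\textbf{Main obstacle.} This last step is where I expect the difficulty. Because $E_5$ is not known in closed form, the expression cannot be shown to be nonzero analytically. I would therefore state it as a nondegeneracy condition and confirm it numerically at the parameter values of Table \ref{tab:description}, evaluating at the collision of $E_5^{(1)}$ and $E_5^{(2)}$ found in continuation.

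Given Steps 1–4, Sotomayor's theorem yields the saddle-node. Two coexistence equilibria exist on one side of $L^*$, coalesce at $L=L^*$, and disappear on the other side.
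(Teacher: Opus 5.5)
Your proposal follows the paper's proof essentially step for step: Sotomayor's theorem with $L$ as the parameter, the same computation of $w^TZ_L$ requiring $w_1\neq 0$ and $S^*+I^*\neq K$, and the quadratic condition $w^T[D^2Z(v,v)]\neq0$ assumed rather than derived (the paper simply asserts it without verification). Your extra hypothesis on $\sigma_2$ is a genuine sharpening, because $tr(J_{E_5})<0$ and $\det(J_{E_5})=0$ alone do not make the zero eigenvalue simple, as the paper tacitly claims. Note that $\sigma_2\neq0$ already suffices, so $\sigma_2>0$ is stronger than you need.
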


\begin{proof}
Suppose $tr(J_{E_5})<0$ and $det(J_{E_5})=0$. Then model (\ref{Mainsystem}) has a simple zero eigenvalue. Let $v=(v_1,v_2,v_3)^T\neq 0$ and $w=(w_1,w_2,w_3)^T\neq 0$ be eigenvectors corresponding the zero eigenvalue of $J_{E_5}$ and $J_{E_5}^T$ respectively. Let $Z=(z_1,z_2,z_3)^T$, where 
\begin{align*}
z_1 &= \frac{b_0 S}{1 + k_1 P} \left(1 - \frac{S + I}{K} \right)(S - L) - d_0 S^r P - \frac{e_0 S I}{1 + k_2 P}, \\
   z_2 &= \frac{e_0 S I}{1 + k_2 P} - d_1 I P - a_1 I, \label{Mainsystem} \\
z_3 &= d_2 S^r P + d_3 I P - a_2 P.
\end{align*}
By ensuring the $w_1 \neq 0$ and $S^*+I^* \neq K$,
\begin{align*}
    w^TZ_L \left(E_2,L^* \right) &= w^T\left(-\frac{b_0S^*}{1+k_1P^*}\left(1-\frac{S^*+I^*}{K} \right),0,0\right)^T\\
    &=- \frac{w_1b_0S^*}{1+k_1P^*}\left(1-\frac{S^*+I^*}{K} \right)\\
    & \neq 0.
\end{align*}
Therefore the second transversality condition is validated. Furthermore, 
\begin{align*}
   w^T\left[D^2 Z \left(E_5,L^* \right)\left(v,v\right) \right] & \neq 0,   
\end{align*}
where $D^2 Z = \left(\nabla^2 z_1,\nabla^2 z_2,\nabla^2 z_3 \right)^2$. By Sotomayor's theorem \cite{Perko13}, model (\ref{Mainsystem}) undergoes a saddle-node bifurcation around the coexistence equilibrium $E_5$ at $L=L^*$.
\end{proof}

\begin{theorem}\label{HB}
Model (\ref{Mainsystem}) experiences a Hopf bifurcation around $E_5$  at $r=r^*$ provided the following conditions are satisfied:
\begin{eqnarray}\label{c1}
		\sigma_{1}(r_{H})>0, ~\sigma_{3}(r_{H})>0,~ \sigma_{1}(r_{H})\sigma_{2}(r_{H}) -\sigma_{3}(r_{H})=0
		\end{eqnarray}
		and
		\begin{eqnarray}\label{c2}
		\left[\sigma_{1}(r)\sigma_{2}(r)\right]^{\prime}_{r = r_{H}} - \sigma_{3}^{\prime}(r_{H}) \neq 0.
		\end{eqnarray}
\end{theorem}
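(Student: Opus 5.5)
The proof follows the standard route for Hopf bifurcations in three-dimensional systems: a Routh--Hurwitz argument locates a purely imaginary pair, and the Liu criterion (equivalently the transversality form of the Hopf theorem) shows that this pair crosses the imaginary axis. I identify $r^*$ with $r_H$. Two standing assumptions are needed. First, $E_5=(S^*,I^*,P^*)$ has $S^*>0$. The vector field is then $C^\infty$ in a neighbourhood of $E_5$, so the non-smoothness of $S^r$ at $S=0$ plays no role. Second, $E_5$ depends smoothly on $r$ near $r_H$. This follows from the implicit function theorem whenever $\det(J_{E_5})=-\sigma_3\neq 0$, and this is guaranteed by the hypothesis $\sigma_3(r_H)>0$. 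Consequently the coefficients $\sigma_1(r),\sigma_2(r),\sigma_3(r)$ of the characteristic polynomial \eqref{charsigma} are smooth in $r$.

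\textbf{Step 1: a purely imaginary pair at $r_H$.} Use the factorization
\[
\lambda^3+\sigma_1\lambda^2+\sigma_2\lambda+\sigma_3=(\lambda^2+\sigma_2)(\lambda+\sigma_1),
\]
which holds exactly when $\sigma_1\sigma_2=\sigma_3$. By \eqref{c1}, $\sigma_1>0$ and $\sigma_3>0$ at $r_H$, so $\sigma_2=\sigma_3/\sigma_1>0$ there. The eigenvalues at $r_H$ are therefore $\lambda_{1,2}=\pm i\omega_0$ with $\omega_0=\sqrt{\sigma_2(r_H)}>0$, together with $\lambda_3=-\sigma_1(r_H)<0$. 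The imaginary pair is simple and nonresonant, since $\lambda_3$ is real and nonzero.

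\textbf{Step 2: transversality.} Near $r_H$, write the complex pair as $\alpha(r)\pm i\omega(r)$. Substitute $\lambda=\alpha+i\omega$ into \eqref{charsigma} and separate real and imaginary parts. Differentiate implicitly with respect to $r$ and evaluate at $\alpha=0$, $\omega=\omega_0$. Solving the resulting $2\times 2$ linear system gives
\[
\alpha'(r_H)=-\frac{\bigl[\sigma_1\sigma_2\bigr]'(r_H)-\sigma_3'(r_H)}{2\bigl(\sigma_1^2+\sigma_2\bigr)}\Big|_{r_H},
\]
up to the sign convention. Since $\sigma_1^2+\sigma_2>0$, condition \eqref{c2} gives $\alpha'(r_H)\neq0$, so the pair crosses the imaginary axis with nonzero speed. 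This derivative computation is the main technical step. The cleanest route is to use $\omega_0^2=\sigma_2$ and $\sigma_3=\sigma_1\sigma_2$ to simplify the determinant of the linear system before solving it.

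\textbf{Step 3: conclusion.} The Hopf bifurcation theorem (for instance Liu's criterion as presented in \cite{Perko13}) now applies at $r=r_H=r^*$. It yields a one-parameter family of periodic orbits emanating from $E_5$.

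The one point needing care is that the theorem only claims existence of the bifurcation. Establishing criticality would require the first Lyapunov coefficient, which in practice is computed numerically; I would remark on this but not attempt it here.
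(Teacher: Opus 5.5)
Your proposal is correct and follows essentially the same route as the paper: you factor the characteristic polynomial as $(\lambda^2+\sigma_2)(\lambda+\sigma_1)$ at $r_H$, then differentiate along the complex branch and solve the resulting $2\times 2$ system (the paper uses Cramer's rule), arriving at $\operatorname{Re}\lambda'(r_H)=\bigl(\sigma_3'-(\sigma_1\sigma_2)'\bigr)/\bigl(2(\sigma_1^2+\sigma_2)\bigr)$, which is the paper's formula exactly, so your hedge ``up to the sign convention'' is unnecessary. Your additions --- deriving $\sigma_2(r_H)>0$ from $\sigma_3=\sigma_1\sigma_2$, and using $S^*>0$ together with $\det J_{E_5}=-\sigma_3\neq 0$ to get smooth dependence of $E_5$ and the $\sigma_i$ on $r$ --- supply justifications the paper leaves implicit.
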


\begin{proof}
We rewrite the characteristic equation (\ref{charsigma}) in the form
		\begin{equation}\label{characeq}
		\left[\lambda^{2}(r_{H}) + \sigma_{2}(r_{H})\right] \left[\lambda(r_{H})+\sigma_{1}(r_{H})\right]=0 ,
		\end{equation}
so that a Hopf bifurcation can occur with roots $\lambda_{1}(r_{H}) = i \sqrt{\sigma_{2}(r_{H})},$  $\lambda_{2}(r_{H}) = -i \sqrt{\sigma_{2}(r_{H})},$  $\lambda_{3}(r_{H}) = - \sigma_{1}(r_{H})<0$. It is easy to see that $\sigma_3(r_{H}) = \sigma_1(r_{H})\sigma_2(r_{H})$. Next, we establish the transversality condition 
		\begin{equation}
		\dfrac{d(Re\lambda_{k}(r))}{dr}\Bigg|_{r=r_{H}}\not=0, k=1,2,
		\end{equation} 
to enable the  verification of the existence of periodic solutions bifurcating around $E_5^*$ at $r =r_{H}$.
We substitute $\lambda_{k}(r) = \Gamma(r)+i\Lambda(r)$ into (\ref{characeq}) and compute the derivative to obtain
		
	\begin{eqnarray}
		\varpi_{1}(c)\Gamma^{\prime}(c)-\varpi_{2}(c)\Lambda^{\prime}(c) +\varpi_{4}(c) &=& 0, \label{u1}\\
		\varpi_{2}(c)\Gamma^{\prime}(c) + \varpi_{1}(c)\Lambda^{\prime}(c) + \varpi_{3}(c)&=&0,\label{u2}
		\end{eqnarray}
		where
		\begin{eqnarray*}
			\varpi_{1}(r)&=&3\Gamma^{2}(r)-3\Lambda^{2}(r)+\sigma_{2}(r)+2\sigma_{1}(r)\Gamma(r),\\
			\varpi_{2}(r)&=& 6\Gamma(r)\Lambda(r)+2\sigma_{1}(r)\Lambda(r),\\
			\varpi_{3}(r)&=&2\Gamma(r)\Lambda(r)\sigma_{1}^{\prime}(r)+\sigma_{2}^{\prime}(r)\Lambda(r),\\
			\varpi_{4}(r)&=&\sigma_{2}^{\prime}(r)\Gamma(r)+\Gamma^{2}(r)\sigma_{1}^{\prime}(r)-\Lambda^{2}(r)\sigma_{1}^{\prime}(r) + \sigma_{3}^{\prime}(r).
		\end{eqnarray*}
		
By using Cramer's rule, we solve for $\Gamma^{\prime}(r_{H})$ from the linear systems in \eqref{u1} and \eqref{u2}. It is worth noting that,				
\noindent at $r=r_{H},$ $\Gamma(r_{H})=0$ and $\Lambda(r_{H})=\sqrt{\sigma_{2}(r_{H})}$, resulting in
\begin{eqnarray*}
			\varpi_{1}(r_{H}) &=& -2 \sigma_{2}(r_{H}),\\
			\varpi_{2}(r_{H}) &=& 2 \sigma_{1}(r_{H})\sqrt{\sigma_{2}(r_{H})},\\
			\varpi_{3}(r_{H}) &=& \sigma_{2}^{\prime}(r_{H})\sqrt{\sigma_{2}(r_{H})},\\
			\varpi_{4}(r_{H}) &=& \sigma_{3}^{\prime}(r_{H}) - \sigma_{2}(r_{H})\sigma_{1}^{\prime}(r_{H}).
		\end{eqnarray*}	
		
We now have 
		\begin{eqnarray*}
			\dfrac{dRe(\lambda_{k}(r))}{dr}\Bigg|_{r=r_{H}}&=&\Gamma^{\prime}(r_{H}),\\&=&-\dfrac{\varpi_{3}(r_{H})\varpi_{2}(r_{H})+\varpi_{4}(r_{H})\varpi_{1}(r_{H})}{\varpi_{1}^{2}(r_{H})+\varpi_{2}^{2}(r_{H})},\\
			&=&\dfrac{\sigma_{3}^{\prime}(r_{H})-\sigma_{1}(r_{H})\sigma_{2}^{\prime}(r_{H})-\sigma_{2}(r_{H})\sigma_{1}^{\prime}(r_{H})}{2\left(\sigma_{2}(r_{H})+\sigma_{1}^{2}(r_{H})\right)}\not=0,
		\end{eqnarray*}
				
\noindent on condition that $\left[\sigma_{1}(r)\sigma_{2}(r)\right]^{\prime}_{r = r_{H}} - \sigma_{3}^{\prime}(r_{H}) \neq 0.$ \\ Therefore, the transversality condition is established, which leads to the conclusion that the model experiences a Hopf bifurcation around $E_5^*$ at $r=r_{H}.$		
\end{proof}

\begin{theorem}\label{TB}
The model (\ref{Mainsystem}) experiences a transcritical bifurcation around $E_2$ at $r=r^*=\frac{\text{ln}~\left(\frac{a_2}{d_2} \right)}{\text{ln}~L}$.
\end{theorem}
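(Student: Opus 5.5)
We apply Sotomayor's theorem \cite{Perko13} at $E_2=(L,0,0)$ with $r$ as the bifurcation parameter, following the same template as the proof of Theorem \ref{SNB}. We assume $L>0$, so that $E_2$ exists, and $L\neq 1$, so that $\ln L\neq 0$ and $r^*$ is defined. We also assume $L\neq K$ and $Le_0\neq a_1$. At $r=r^*$ we have $L^{r^*}=a_2/d_2$. Hence the entry $d_2L^{r}-a_2$ of $J_{E_2}$ from the proof of Theorem \ref{E2} vanishes, while the other two eigenvalues $b_0L(K-L)/K$ and $Le_0-a_1$ are nonzero. So $J_{E_2}(r^*)$ has a simple zero eigenvalue.

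\textbf{Eigenvectors.} Because $J_{E_2}$ is upper triangular, the vectors can be written down directly. A right null vector is $v=(v_1,0,1)^T$, where $v_1$ solves
$\frac{b_0L(K-L)}{K}v_1-d_0L^{r^*}=0$, that is, $v_1=\frac{d_0a_2K}{d_2b_0L(K-L)}$. The left null vector is $w=(0,0,1)^T$, since the third column of $J_{E_2}^T$ restricted to the first two rows is zero.

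\textbf{Transversality conditions.} Write $Z$ for the vector field, as in Theorem \ref{SNB}. Then $Z_r=(-d_0S^r\ln S\,P,\;0,\;d_2S^r\ln S\,P)^T$, and this vanishes at $P=0$. So $w^TZ_r(E_2,r^*)=0$, which is the first transcritical condition. Next,
\[
DZ_r=\frac{\partial}{\partial r}J,
\]
whose third row at $E_2$ is $(0,0,d_2L^{r^*}\ln L)$, because $\partial_r(d_2rPS^{r-1})=0$ at $P=0$. This gives
\[
w^T[DZ_r(E_2,r^*)v]=d_2L^{r^*}\ln L=a_2\ln L\neq 0 .
\]
For the second-order condition only $z_3=d_2S^rP+d_3IP-a_2P$ matters. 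With $v_2=0$ we get $w^TD^2Z(v,v)=2\,\partial_S\partial_Pz_3\,v_1v_3=2d_2r^*L^{r^*-1}v_1$. This equals
\[
\frac{2r^*a_2}{L}\cdot\frac{d_0a_2K}{d_2b_0L(K-L)},
\]
which is nonzero because $L\neq K$ and $r^*\neq 0$ (the latter since $a_2\neq d_2$). Sotomayor's theorem then yields the transcritical bifurcation at $r=r^*$.

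\textbf{Main obstacle.} The only delicate point is bookkeeping the nondegeneracy conditions the statement leaves implicit. We must keep $S=L>0$ so that the non-smooth term $S^r$ is smooth at $E_2$. The remaining conditions, $L\neq 1$, $a_2\neq d_2$, $L\neq K$, and $Le_0\neq a_1$, keep the zero eigenvalue simple and make both derivative conditions nonzero. We would state these hypotheses explicitly at the start of the proof.
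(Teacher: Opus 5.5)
Your proposal is correct and follows the paper's own route: Sotomayor's theorem at $E_2=(L,0,0)$ with $r$ as the parameter, using the same right null vector $v=\bigl(\frac{d_0a_2K}{d_2b_0L(K-L)},0,1\bigr)^T$ and left null vector $w=(0,0,1)^T$. Your mixed term $w^T[DZ_r v]=d_2L^{r^*}\ln L=a_2\ln L$ is in fact the correct $r$-derivative; the paper's displayed $DZ_r$ has entries $-d_0rL^{r-1}$ and $d_2rL^{r-1}$, which are $S$-derivatives rather than $r$-derivatives, although nonvanishing still holds because $L\neq 1$ is already needed for $r^*$ to be defined. Your explicit evaluation $w^TD^2Z(v,v)=2d_2r^*L^{r^*-1}v_1\neq 0$, which the paper only asserts, together with your listed nondegeneracy hypotheses, completes the argument.
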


\begin{proof}
Simple computations show that at the critical susceptible prey aggregation constant $r=r^*=\frac{\text{ln}~\left(\frac{a_2}{d_2} \right)}{\text{ln}~L}$ around $E_2$,
\begin{equation}\label{Trans}
   J_{E_2}= \left(
\begin{array}{ccc}
 \frac{b_0 L (K-L)}{K} & -L e_0 & -\frac{d_0 a_2}{d_2} \\
 0 & L e_0-a_1 & 0 \\
 0 & 0 & 0 \\
\end{array}
\right).
\end{equation}
The associated eigenvalues for the matrix in (\ref{Trans}) are $\lambda_1=0$, $\lambda_2=\frac{b_0 L (K-L)}{K}$ and $\lambda_3=Le_0-a_1$. We let $G=(g_1,g_2,g_3)^T$ and $N=(n_1,n_2,n_3)^T$ represent the eigenvectors corresponding to the zero eigenvalue of $J_{E_2}$ and $J_{E_2}^T$ respectively. Computations yield $G=\left(\frac{KL^{r-1}d_0}{\left(K-L\right)b_0},0,1\right)^T$ and $N=(0,0,1)^T$. 
We proceed to validate the transversality conditions using Sotomayor's theorem \cite{Perko13}. 
\begin{equation*}
    N^TZ_r \left(E_2,r^* \right) = (0,0,1)(0,0,0)^T=0.
\end{equation*}
Also,
\begin{align*}
   N^T\left[D Z_r \left(E_2,r^* \right)G \right] &=  \left(
\begin{array}{ccc}
 0 & 0 &1 \\
\end{array}
\right) 
\left(
\begin{array}{ccc}
 0& 0 & -d_0 r L^{r-1}  \\
  0 & 0 &0\\
    0 & 0 &d_2 rL^{r-1}\\
\end{array}
\right)
\left(
\begin{array}{ccc}
 \frac{KL^{r-1}d_0}{\left(K-L\right)b_0} \\
  0 \\
  1\\
\end{array}
\right) \\
&=d_2 r L^{r-1}\neq 0,
\end{align*}
and 
\begin{align*}
   N^T\left[D^2 Z \left(E_2,r^* \right)\left(G,G\right) \right] & \neq 0.   
\end{align*}
Hence the model (\ref{Mainsystem}) experiences a transcritical bifurcation at $E_2$ when $r=r^*=\frac{\text{ln}~\left(\frac{a_2}{d_2} \right)}{\text{ln}~L}$.
\end{proof}

\subsection{Numerical validation}
This subsection focuses on providing numerical experiments to validate our theoretical findings. We use MATLAB version R2026a and MATCONT \cite{Dhooge2003} software in performing the numerical experiments. We show the possible occurrence of co-dimension one bifurcations using bifurcation diagrams, in particular, for the parameters $k_1,r$ and $L$ when all other parameters are fixed. We shall use parameter values from Table \ref{tab:description} for our bifurcation experiments.

\subsubsection{Impact of Allee effect parameter $L$}
To understand the role of Allee effects in system (\ref{Mainsystem}), we fix all other parameters and vary $L$. See Figure \ref{Bif1}. The system experiences a transcritical (TC) bifurcation  at $L^{[TC]}=-3.95198$ around the predator-free state $E_4^{[TC]}=\left(0.1,1.6838,0 \right)$. The occurrence of a TC bifurcation  identifies a critical threshold for predator invasion and persistence. Ecologically, this threshold determines whether an introduced predator population can successfully establish itself or ultimately go extinct. In the context of bio-control, particularly when predators are introduced to suppress a disease-carrying prey population, the TC bifurcation provides conditions necessary for successful predator establishment and sustained regulation of the target population.  For weak Allee threshold $L\in \left(-3.95198,-2.5164 \right)$, the system experiences bistability phenomena between the predator-free state $E_4$ and the coexistence state $E_5$. Therefore depending on the choice of initial data, solutions will either be attracted to $E_4$ or $E_5$. Within this threshold interval, a Hopf bifurcation is noted at the critical value $L_1^{[H]}=-3.2192$ around $E_5^{[H]}=\left(0.1805, 1.5752, 0.2683 \right)$. The first Lyapunov coefficient is computed as $\chi_{L_1}=1.5666\times 10^1$ and hence the bifurcation is subcritical. An increase away from the weak Allee threshold interval leads to the occurrence of another Hopf bifurcation at $L_2^{[H]}=-2.5164$ around $E_4^{[H]}=\left(0.1, 1.2836, 0 \right)$ with its first Lyapunov coefficient calculated as $\chi_{L_2}=-7.7542\times 10^{-2}$ and hence the bifurcation is a supercritical Hopf. A further increase of the weak Allee threshold causes the existence of a third Hopf bifurcation at $L_3^{[H]}=-0.7626$ around $E_5^{[H]}=\left(1.4889, 0.7798, 2.3043 \right)$. A similar computation of the first Lyapunov coefficient yields $\chi_{L_3}=-2.0029 \times 10^0$ which means this bifurcation is also a supercritical Hopf. From our simulation, as the Allee threshold becomes less negative, the susceptible prey becomes vulnerable to demographic pressure or limitations which can lead to a (de)stabilization of the coexistence equilibrium and drive the system towards persistent population cycles. A final increment of the weak Allee threshold causes the occurrence of a saddle-node bifurcation at $L_4^{[SN]}=-0.4882$ around $E_5^{[SN]}=\left(2.1913,0.5197,2.9760 \right)$. Ecologically, the saddle-node bifurcation represents a catastrophic tipping point with direct conservation implications. Beyond $L_4^{[SN]}$, it will be difficult for management interventions to restore species persistence.



\begin{figure}[tbh]
\centering

\begin{minipage}{0.48\textwidth}
\centering
\includegraphics[width=\textwidth]{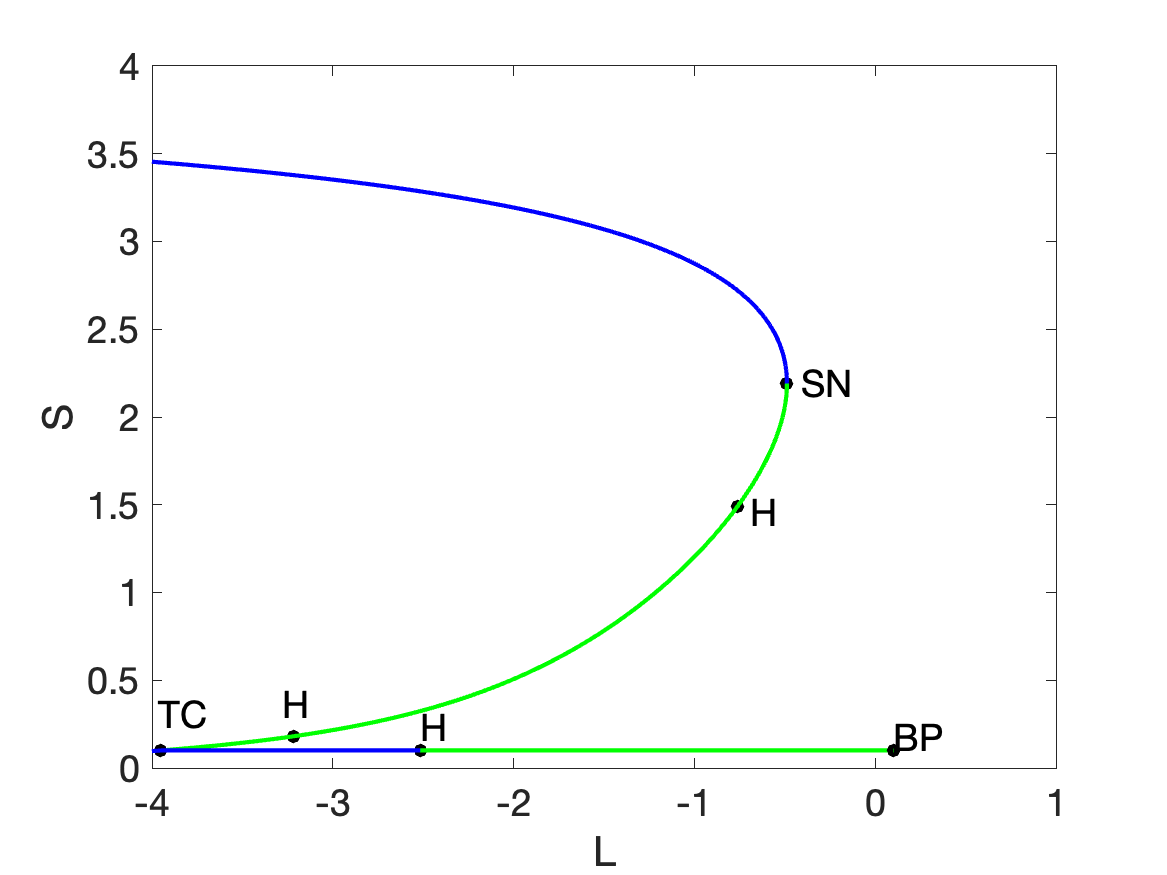}
\\[2mm]
$(a)$
\end{minipage}
\hfill
\begin{minipage}{0.48\textwidth}
\centering
\includegraphics[width=\textwidth]{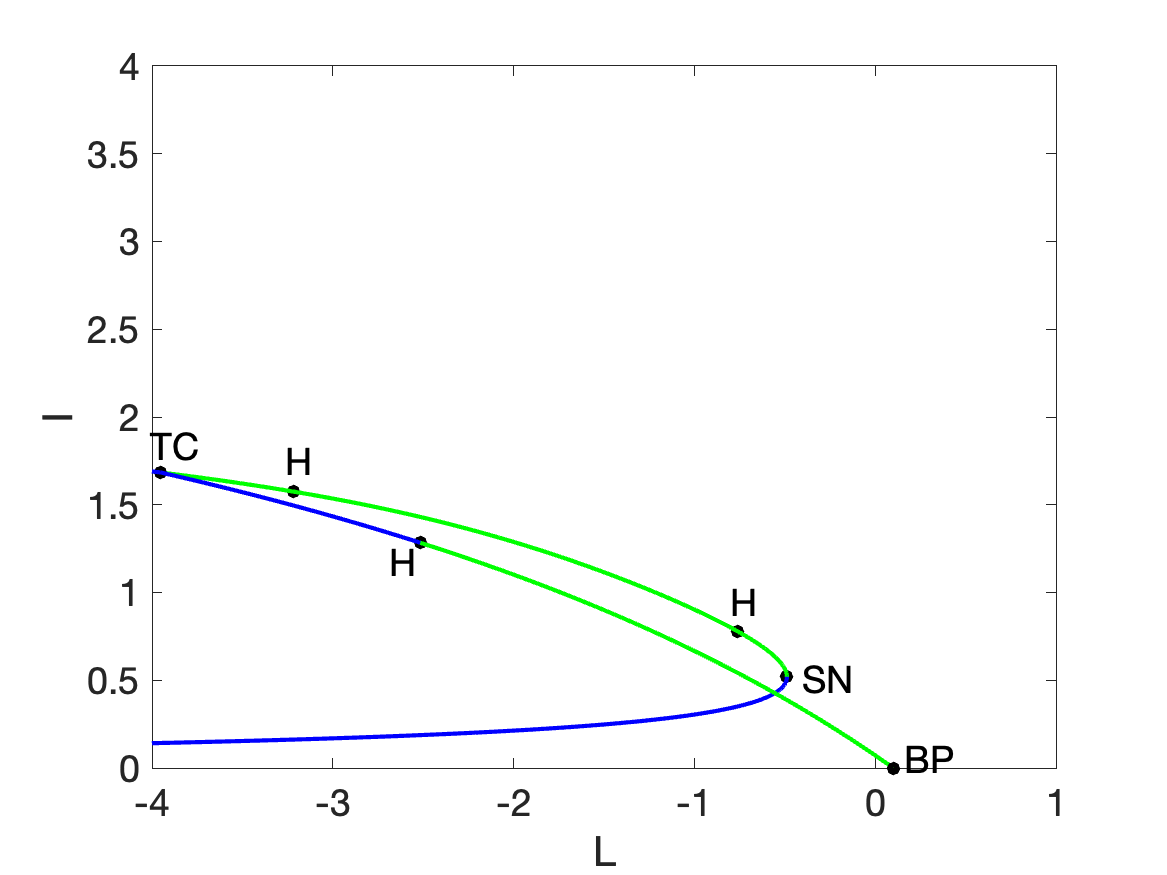}
\\[2mm]
$(b)$
\end{minipage}

\vspace{0.5cm}

\begin{minipage}{0.55\textwidth}
\centering
\includegraphics[width=\textwidth]{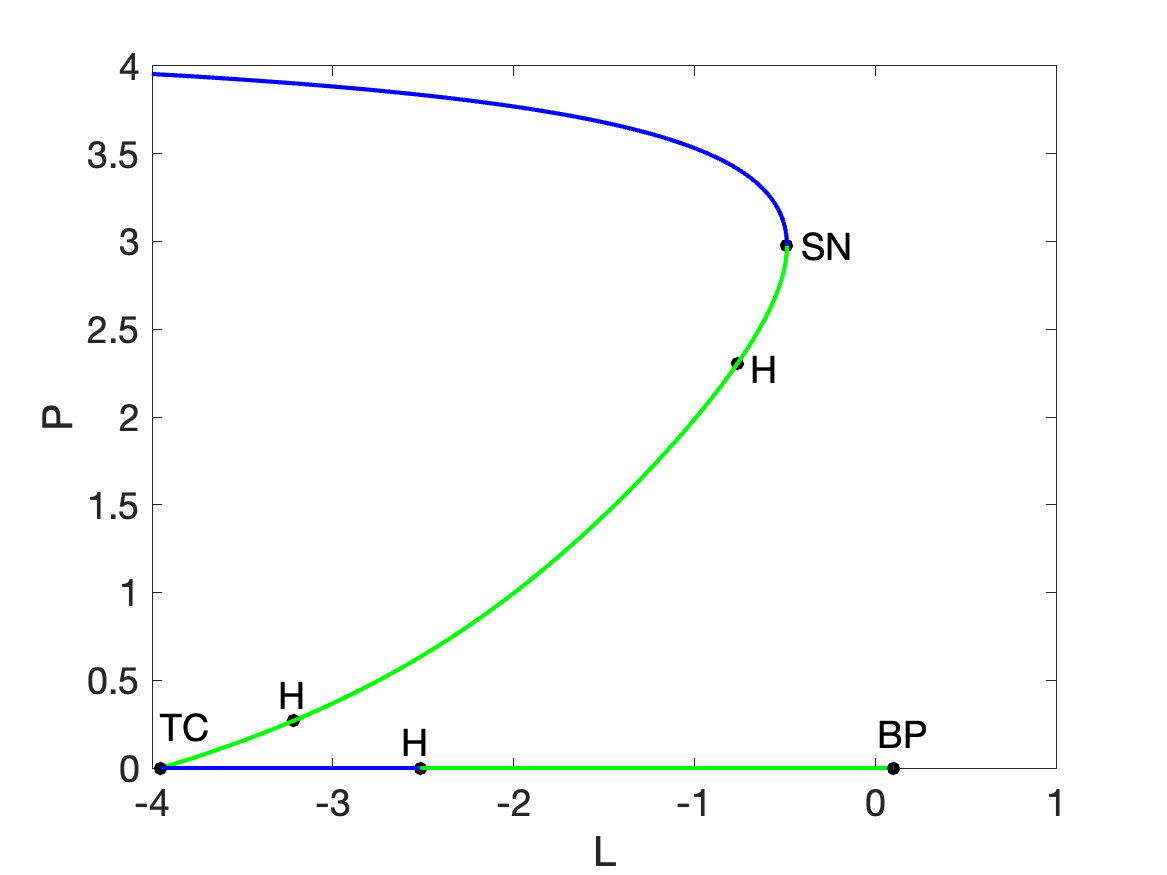}
\\[2mm]
$(c)$
\end{minipage}

\caption{Bifurcation diagram for $L$ for the model (\ref{Mainsystem}). The blue and green lines denote stable and unstable branches respectively. Note: SN=Saddle-Node point, H=Hopf point, TC = Transcritical point, BP = Branch point.}
\label{Bif1}
\end{figure}

\subsubsection{Impact of prey aggregation $r$}
From the bifurcation diagram in Figure \ref{Bif2}, if the strength of prey aggregation $r$ is the interval $(0,0.313944]$, the populations do not persist. This is due to a saddle-node bifurcation occurring at $r^{[SN]}=0.313944$ around $E_5^{[SN]}=(2.350352, 0.692280, 3.112553)$. This shows that aggregation strength must exceed a threshold for all populations to coexist. A small increase in $r$ results in the occurrence of a Hopf bifurcation at $r_1^{[H]}=0.319942$ around $E_5^{[H]}=(2.252766, 0.703272, 3.029404)$ with a first Lyapunov coefficient computed as $\chi_{r_1} =-3.198244 \times 10^1$ and hence supercritical. A further increase leads to a transcritical critical bifurcation at $r^{[TC]}=0.544980$ around the infected prey-free equilibrium $E_3^{[TC]}=(3.567540, 0, 4.032677)$. The transcritical bifurcation provides an aggregation threshold where predators indirectly can eliminate disease from an ecosystem by depleting the susceptible prey population. This reduces the pool of hosts available for disease transmission. As a result, the infected class cannot replenish itself through new infections and ultimately goes extinct. Finally, we observe a second Hopf bifurcation occurring at $r_2^{[H]}=0.719768$ around $E_5^{[H]}=(2.619583,0, 5.424286)$ with a computed Lyapunov coefficient $\chi_{r_2}=-1.024881 \times 10^0$ classifying the Hopf bifurcation as supercritical.


\begin{figure}[tbh]
\centering

\begin{minipage}{0.48\textwidth}
\centering
\includegraphics[width=\textwidth]{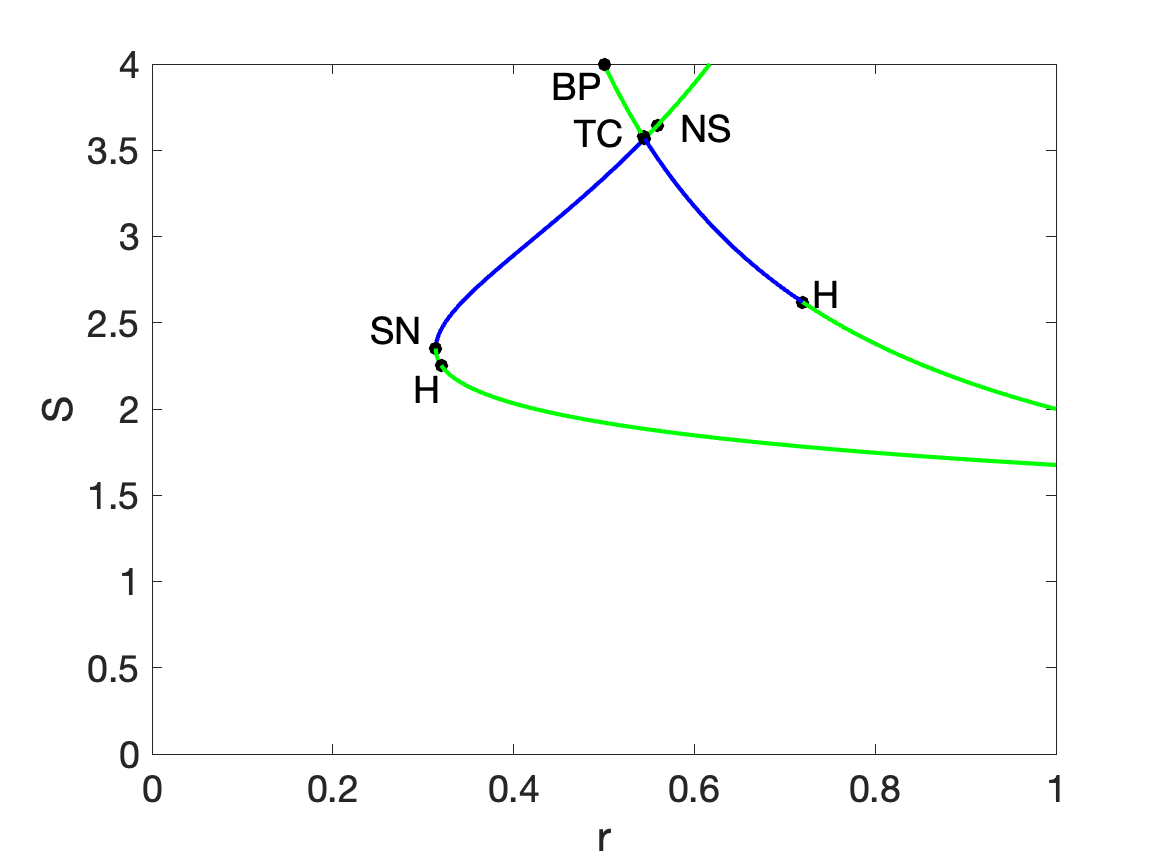}
\\[2mm]
$(a)$
\end{minipage}
\hfill
\begin{minipage}{0.48\textwidth}
\centering
\includegraphics[width=\textwidth]{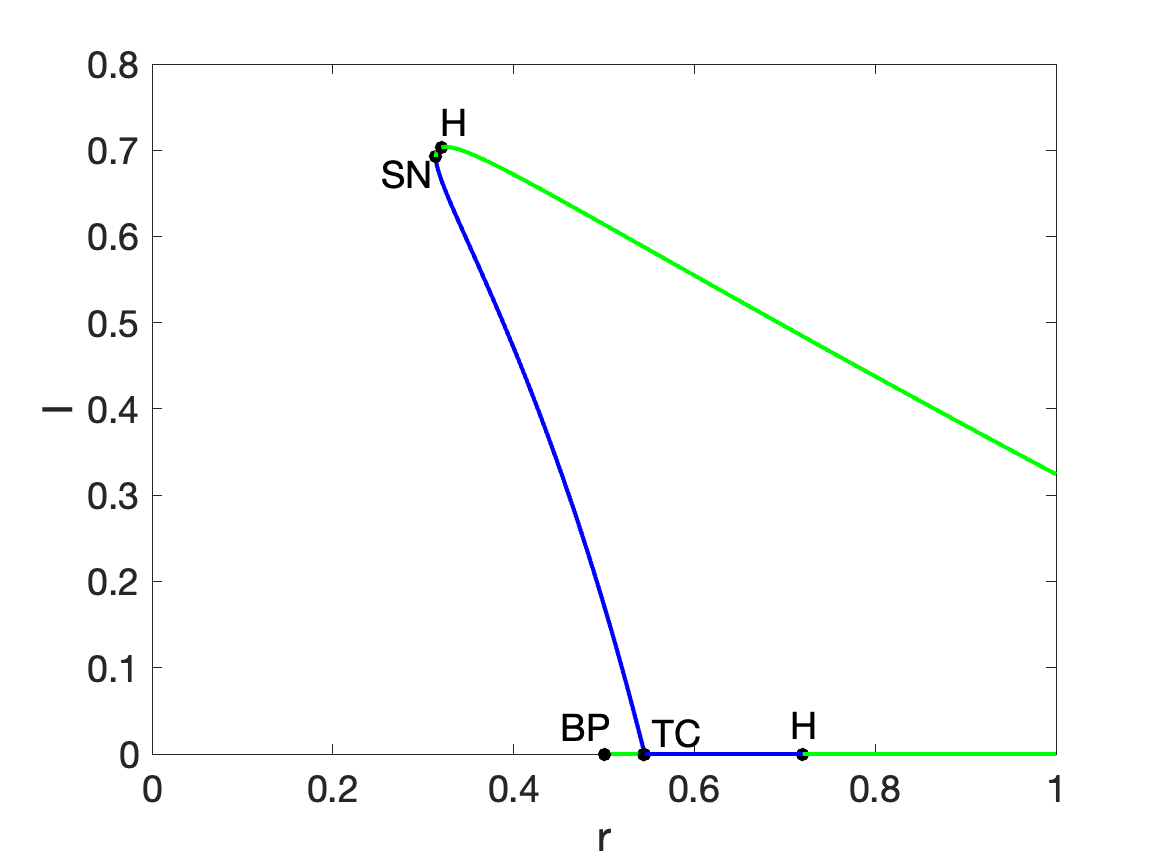}
\\[2mm]
$(b)$
\end{minipage}

\vspace{0.5cm}

\begin{minipage}{0.55\textwidth}
\centering
\includegraphics[width=\textwidth]{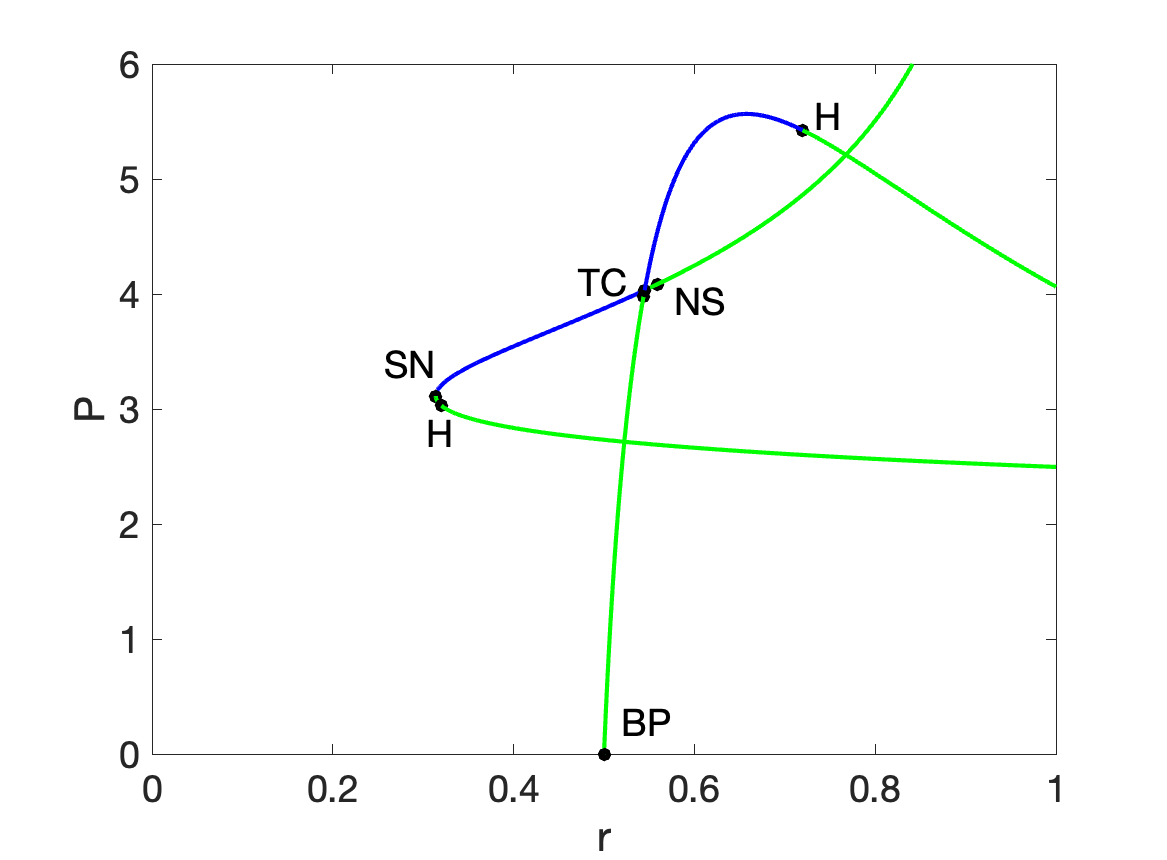}
\\[2mm]
$(c)$
\end{minipage}

\caption{Bifurcation diagram for predator induced fear $r$ on susceptible prey for model (\ref{Mainsystem}). The blue and green  lines denote stable and unstable branches respectively. In addition, we choose $b_0=8$ and $L=1$.  Note: SN=Saddle-Node point, H=Hopf point, BP =  Branch point, NS = Neutral Saddle.}
\label{Bif2}
\end{figure}

\subsubsection{Impact of fear levels $k_1$}
We explore how predator-induced fear interacts with the Allee effect and aggregation dynamics to influence the persistence of susceptible prey in Figure \ref{Hbif2}. When we set $L=1$, the Allee threshold is a strong one. For low fear levels $k_1 \in \left(0,0.364829 \right)$, two coexistence equilibria exist due to the density of the susceptible prey population not going below the strong Allee threshold of $L=1$. At $k_1=0.364829$, the system experiences a Hopf bifurcation around $E_5^{[H]}=\left(2.451314, 0.347466, 2.96307 \right)$ with its first Lyapunov coefficient calculated as $\chi_{k_1}=-2.542146$ and hence the bifurcation is supercritical. A slight further increase in fear levels results in the two coexistence equilibria colliding and annihilating each other via a saddle-node bifurcation at $k_1=0.3677359$ around $E_5^{[SN]}= \left(2.546129,0.32347,3.03465 \right)$. Similar dynamics is seen when $L=0$. However, when $L=-1$, the Allee effect becomes a weak one. The population density of the susceptible prey is negatively impacted for fear levels $k_1$ in the range $(0, 1.417511)$. When $k_1^{[SN]}=1.417511$, the system experiences a saddle-node bifurcation around $E_5^{[SN]}=(0.216553, 1.227718, 0.347142)$. A slight further increase results in the occurrence of a Hopf bifurcation at $k_1^{[H]}=1.656706 $ around $E_5^{[H]}=( 1.736233, 0.545871, 2.371381 )$. The computed first Lyapunov coefficient is $\chi_{k_1}=2.4282446$ and hence the bifurcation is a subcritical Hopf. A final increase in fear levels results again in a saddle-node bifurcation at $k_1^{[SN]}=1.718081$ around $E_5^{[SN]}=( 1.272222, 0.697658, 1.919051)$. We observe that the densities of each population reduces as $k_1$ increases. For intermediate fear levels $k_1$, the occurrence of a Hopf bifurcation shows that when fear crosses a critical threshold, the populations experience sustained population oscillations. Ecosystem managers monitoring species in disease and fear-affected environments should consider the onset of cyclic population patterns as a signal to explore intervention strategies to prevent the populations from being ecologically vulnerable to extinction via a saddle-node bifurcation.


\begin{figure}[tbh]
\centering

\begin{minipage}{0.48\textwidth}
\centering
\includegraphics[width=\textwidth]{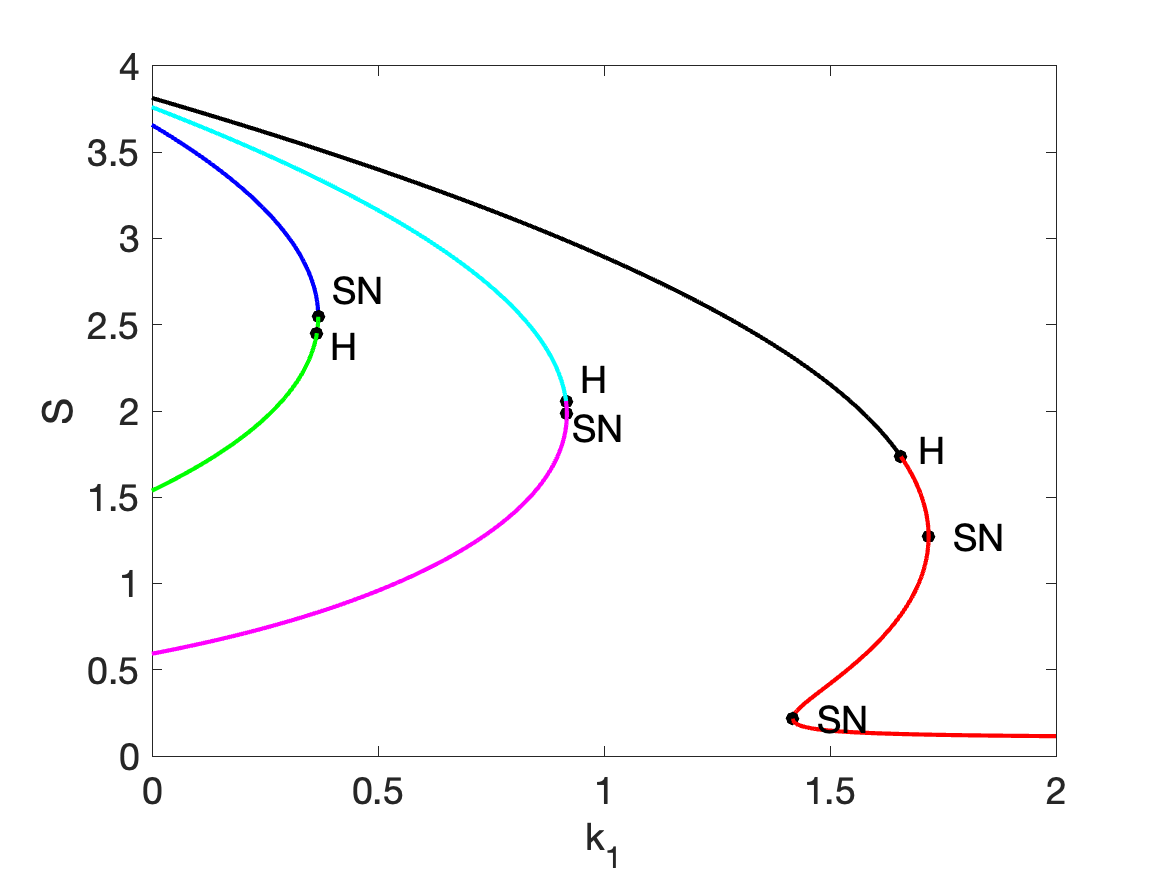}
\\[2mm]
$(a)$
\end{minipage}
\hfill
\begin{minipage}{0.48\textwidth}
\centering
\includegraphics[width=\textwidth]{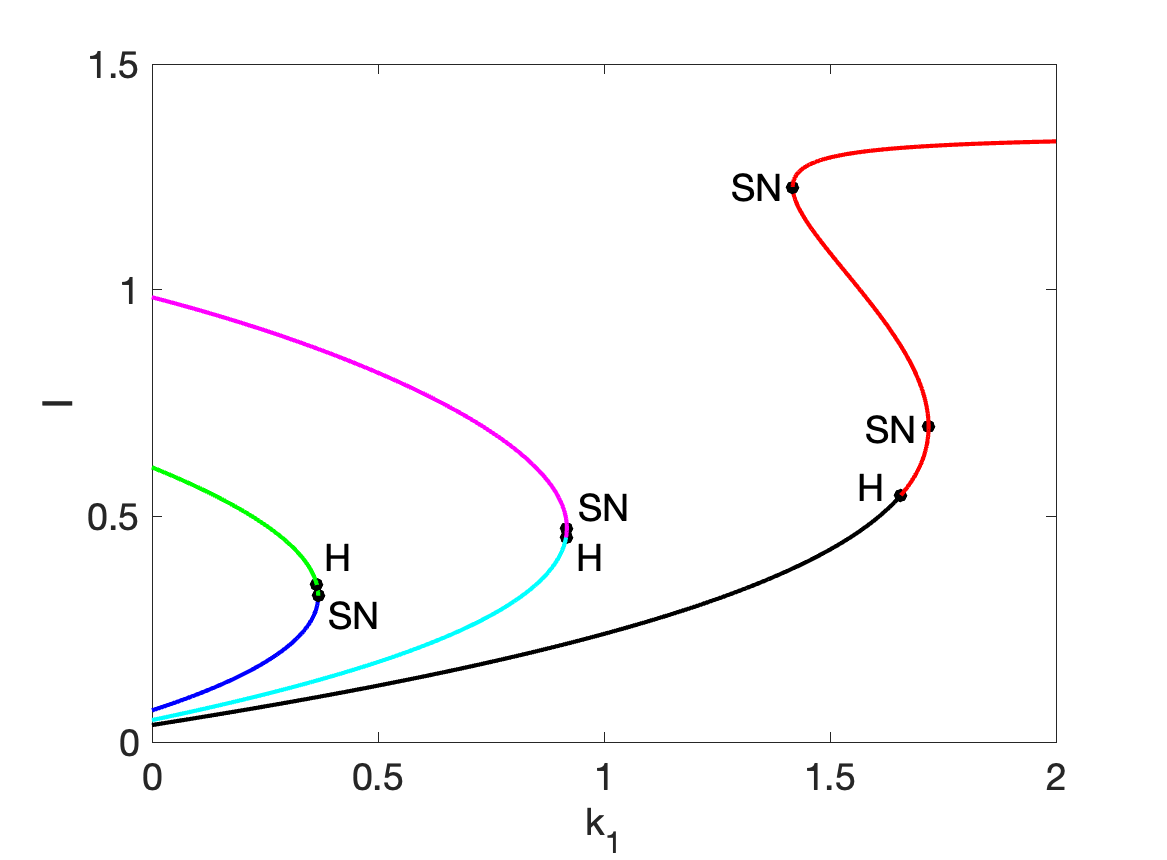}
\\[2mm]
$(b)$
\end{minipage}

\vspace{0.2cm}

\begin{minipage}{0.48\textwidth}
\centering
\includegraphics[width=\textwidth]{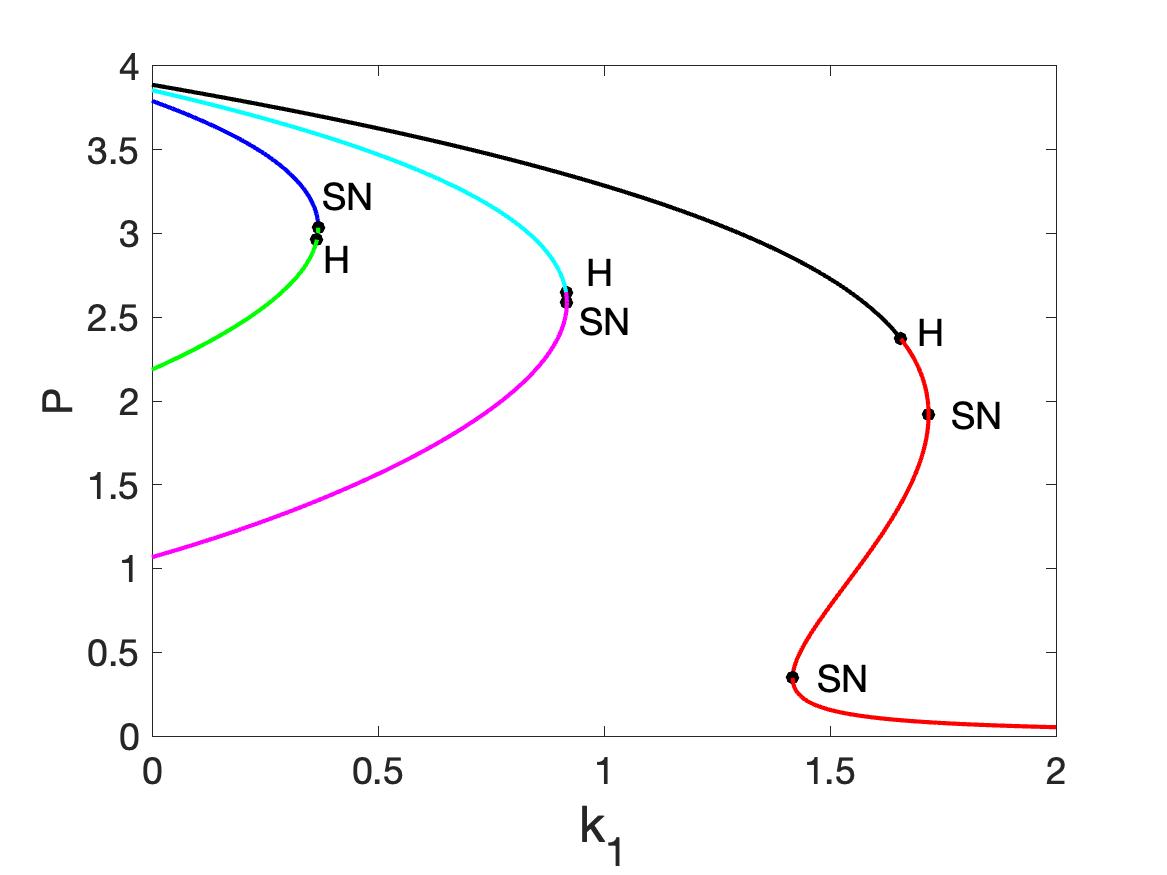}
\\[2mm]
$(c)$
\end{minipage}

\caption{Bifurcation diagram for model (\ref{Mainsystem}) showing the impact of predator-induced fear $k_1$ on susceptible prey across different Allee thresholds $L$. Fixed parameter values are set at $b_0=8$, $k_2=1$, and $d_3=0.5$. Stable branches are indicated by blue ($L=1$), cyan ($L=0$) and black ($L=-1$) lines, whereas unstable branches are represented by green ($L=1$), magenta ($L=0$) and red ($L=-1$) lines. Note: SN = Saddle-Node point, H = Hopf point.}
\label{Hbif2}
\end{figure}




\subsection{Two parameter bifurcation}
We explore a two-paramter bifurcation in the $(k_1,L)$ and $(k_1,k_2)$ planes respectively to deepen our understanding of how the interplay of fear-driven behavioral responses and Allee effects shape the long-term dynamics of the SIP system. We vary these pairs to reveal regions within the parameter space that support stable coexistence, oscillatory coexistence, predator-free state, infected prey-free state and finite-time extinction dynamics.

\subsubsection{Combined effect of $k_1$ and $L$}
The two parameter bifurcation plot in the $(k_1,L)$ plane seen in Figure \ref{bif_two_p1} reveals interesting ecological dynamics. When the Allee threshold is sufficiently negative, the susceptible prey population maintains a large buffer of its density where the system settles into a stable coexistence (SC) state among all three populations. As the fear parameter $k_1$ increases, the growth rate of susceptible prey is suppressed enough to cause a destabilization of the SC state. The SC region is then displaced by the finite-time extinction (FTE), predator-free equilibrium (PFE) and the oscillatory coexistence (OC) states. When the system enters the FTE regime, it collapses the susceptible prey population to zero in finite-time with a subsequent cascading effect on the extinction of infected prey and predator populations respectively. We also observed that, as $L$ increases towards zero, moderate fear levels suppressing reproduction is sufficient to trigger the collapse of the system. The emergence of the PFE state indicates the critical role played by the prey aggregation constant $r$ for intermediate fear levels $k_1$. By increasing $r$, the attack on susceptible prey is efficient and consequently reduces its population density. This reduction suppresses disease transmission between susceptible and infected individuals thereby reducing the population density of the infected class. The predator gains energy from both prey classes and so if the combined energy intake from both sources is very low, the predator population can no longer sustain itself and will go extinct. The appearance of the infected prey-free (IPF) state at $r=0.55$ for low fear levels $k_1$ and a strongly negative $L$ indicates that predators become more effective in attacking susceptible prey thereby reducing their population density. This in turn leads to a reduction at the rate at which new infections are produced and will eventually lead to the extinction of the infected class. The IPF region disappears at $r=0.5$ and $r=0.3$ because predatory attacks are weaker and the susceptible prey population remain at higher densities and will lead to increased contact with the infected class. In all, the $(k_1,L)$ shows that fear and Allee effect combined can act as extinction and destabilizing drivers.

\subsubsection{Combined effect of $k_1$ and $k_2$ }
The $(k_1,k_2)$ plane in Figure \ref{bif_two_p2} highlights the interplay between two distinct fear-driven suppression pathways. For a strongly negative Allee threshold $L=-3$, low fear levels for $k_1$ and $k_2$ enables all populations to persist. Increasing $k_1$ to suppress reproduction eventually drives the system to an FTE state as the susceptible prey population collapses. For high fear levels $k_1$ and moderate $k_2$ fear levels, the system transitions to an oscillatory coexistence (OC) regime as the stable coexistence (SC) state is destabilized via a Hopf bifurcation. At higher $k_2$, disease transmission is reduced. This reduces predator access to energetically profitable infected prey. With moderate prey aggregation and moderate predation, all populations can coexist. As $L$ increases towards $-1$, we observe that the SC region contracts and the FTE region expands. This demonstrates the key role of Allee effects on governing the ecological resilience and persistence of interacting populations. At $L=1$, the Allee threshold is positive and the entire $(k_1,k_2)$ plane is characterized by species extinction. This shows that, when Allee effect is strong, the collapse of the susceptible prey population is inevitable regardless of the two distinct fear levels for a chosen parameter regime.

\begin{figure}[tbh]
\centering

\begin{minipage}{0.48\textwidth}
\centering
\includegraphics[width=\textwidth]{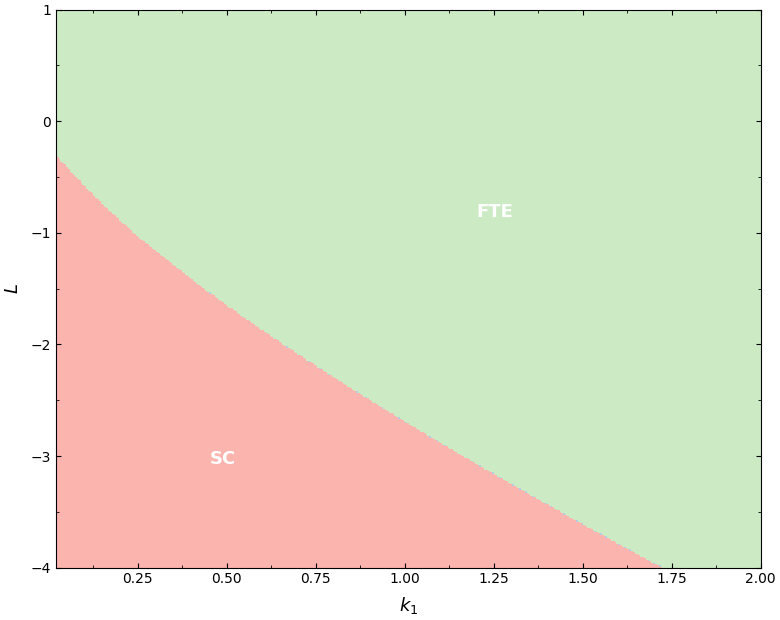}
\\[2mm]
$(a)$ $r=0.3$
\end{minipage}
\hfill
\begin{minipage}{0.48\textwidth}
\centering
\includegraphics[width=\textwidth]{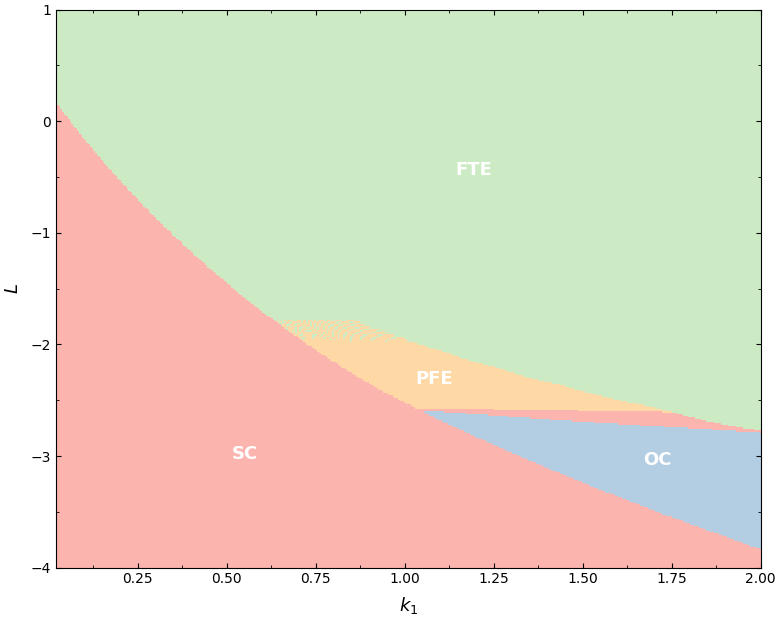}
\\[2mm]
$(b)$ $r=0.5$
\end{minipage}

\vspace{0.2cm}

\begin{minipage}{0.48\textwidth}
\centering
\includegraphics[width=\textwidth]{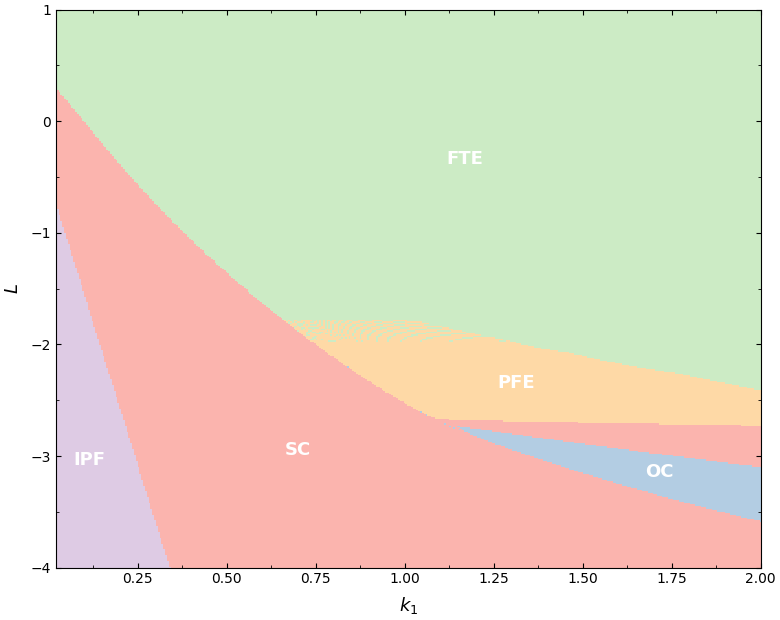}
\\[2mm]
$(c)$ $r=0.55$
\end{minipage}

\caption{Plots showing dynamical color coded regions of a two-parameter bifurcation in the $(k_1,L)$ space representing stable coexistence (SC), oscillatory coexistence (OC), finite-time extinction (FTE), predator-free (PFE) and infectious prey-free (IPF) states. Here we choose $b_0=2$ and $e_0=2$. All other parameters chosen are seen in Table \ref{tab:description}.}
\label{bif_two_p1}
\end{figure}

\begin{figure}[tbh]
\centering

\begin{minipage}{0.48\textwidth}
\centering
\includegraphics[width=\textwidth]{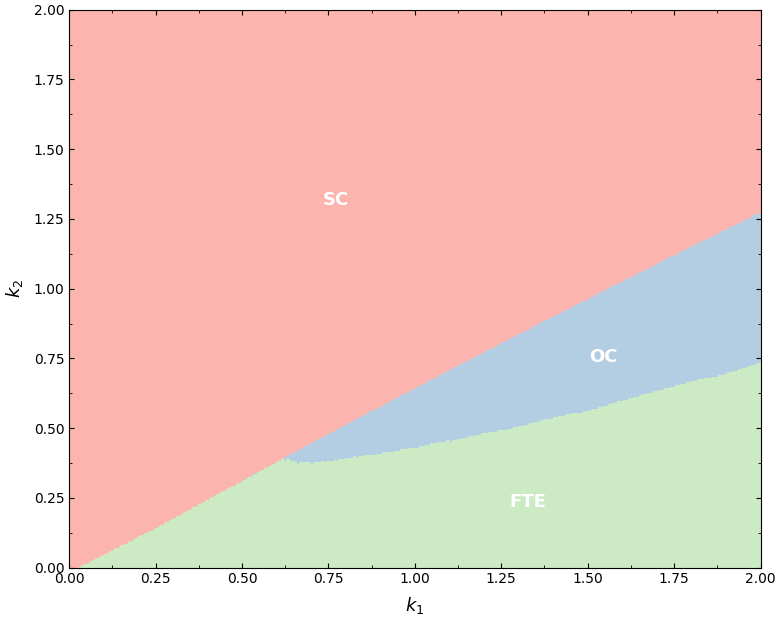}
\\[2mm]
$(a)$ $L=-3$
\end{minipage}
\hfill
\begin{minipage}{0.48\textwidth}
\centering
\includegraphics[width=\textwidth]{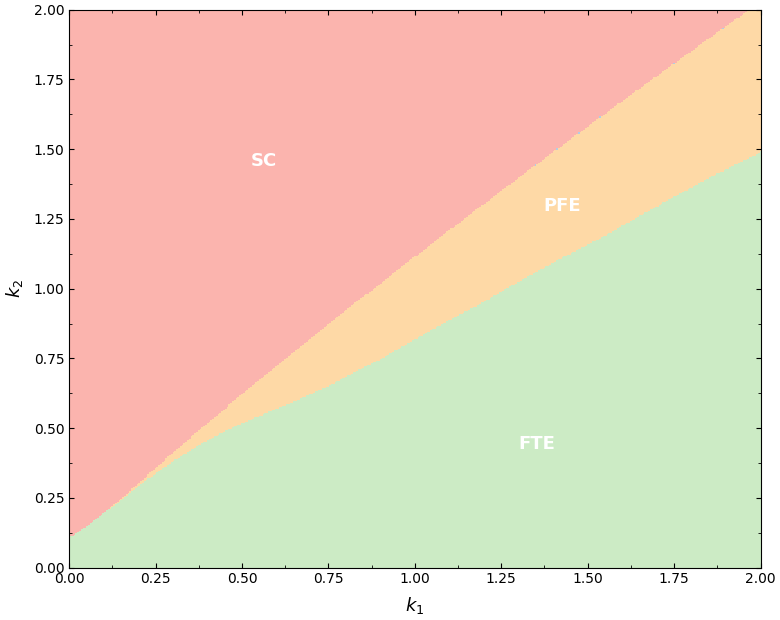}
\\[2mm]
$(b)$ $L=-2$
\end{minipage}

\vspace{0.2cm}

\begin{minipage}{0.48\textwidth}
\centering
\includegraphics[width=\textwidth]{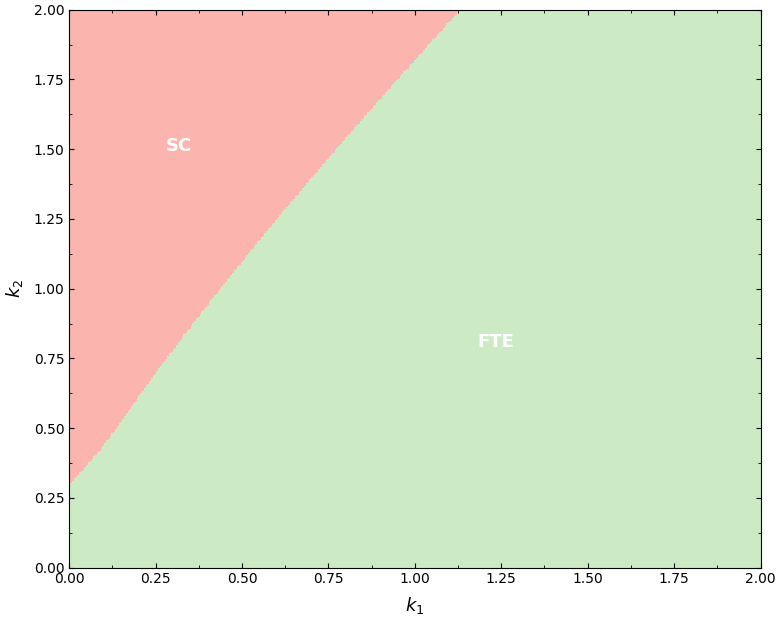}
\\[2mm]
$(c)$ $L=-1$
\end{minipage}
\hfill
\begin{minipage}{0.48\textwidth}
\centering
\includegraphics[width=\textwidth]{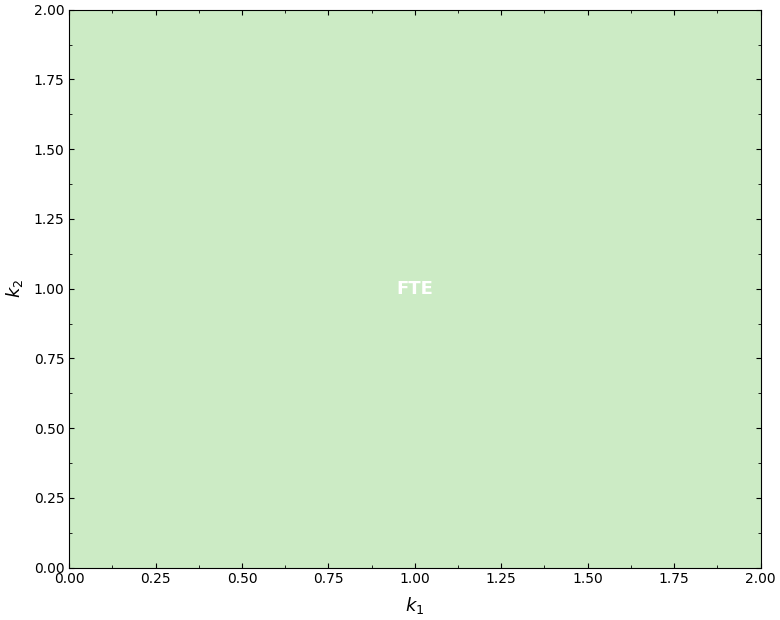}
\\[2mm]
$(d)$ $L=1$
\end{minipage}

\caption{Plots showing dynamical color coded regions of a two-parameter bifurcation in the $(k_1,k_2)$ space representing stable coexistence (SC), oscillatory coexistence (OC), finite-time extinction (FTE) and predator-free (PFE)  states. Here we choose $b_0=2$ and $e_0=2$. All other parameters chosen are seen in Table \ref{tab:description}.}
\label{bif_two_p2}
\end{figure}

\section{Bistability and Ecological Tipping Points}\label{sec:bistability and tipping}

\noindent The saddle-node bifurcation structure identified in Figure \ref{Bif1} suggests the existence of parameter regions in which multiple stable equilibria coexist. Such regimes are of particular ecological interest because they imply that long-term population outcomes may depend not only on environmental conditions but also on the initial state of the system.

Ecological systems often exhibit abrupt transitions between qualitatively distinct dynamical regimes when critical thresholds are crossed. Such transitions are commonly referred to as ecological tipping points. In general, a tipping point occurs when a small change in environmental conditions, demographic structure, or population densities produces a disproportionately large change in the long-term behavior of the ecosystem.

\begin{theorem}
\label{thm:SN_bistability}
Consider the model \eqref{Mainsystem} with $L$ chosen as a bifurcation parameter. Suppose that the following conditions hold:
\begin{enumerate}
    \item the model undergoes a generic saddle-node bifurcation of coexistence equilibria at $L=L^{[SN]}$;
    \item one branch of coexistence equilibria $E_5(L)$ created at the saddle-node is locally asymptotically stable for \(L\in(L^{[SN]},L^{[SN]}+\delta)\), for some \(\delta>0\);
    \item the predator-free equilibrium \(E_4(L)\) exists and remains locally asymptotically stable for \(L\in(L^{[SN]},L^{[SN]}+\delta)\).
\end{enumerate}
Then model \eqref{Mainsystem} is bistable for
\[
L\in(L^{[SN]},L^{[SN]}+\delta).
\]
In particular, there exist two disjoint open sets of initial conditions,
\(\mathcal{B}_4(L)\) and \(\mathcal{B}_5(L)\), such that solutions initiating in
\(\mathcal{B}_4(L)\) converge to \(E_4(L)\), while solutions initiating in
\(\mathcal{B}_5(L)\) converge to \(E_5(L)\). Consequently, the asymptotic state of the model depends on the initial population densities and may also depend on the history of parameter variation.
\end{theorem}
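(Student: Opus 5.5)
The argument is a direct consequence of hypotheses (2) and (3). It does not need a global description of the basins, only the existence of two disjoint open sets of initial data, each attracted to a different equilibrium.

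Fix $L\in(L^{[SN]},L^{[SN]}+\delta)$. First I will check that $E_4(L)$ and $E_5(L)$ are distinct points of $\mathbb{R}^3_+$. This is immediate because $E_4(L)$ lies in the plane $P=0$, whereas $E_5(L)$ is a coexistence state with $P^*>0$. Next, away from $S=0$ the vector field of model \eqref{Mainsystem} is $C^1$, since the only non-smooth term is $S^r$ and both equilibria have $S>0$ ($S=a_1/e_0$ at $E_4$ and $S^*>0$ at $E_5$). Solutions starting near either equilibrium are therefore unique. Local asymptotic stability (hypothesis (2) for $E_5$ and (3) for $E_4$, as in Theorems \ref{E4} and \ref{E5}) gives, by definition, neighbourhoods $U_4\ni E_4(L)$ and $U_5\ni E_5(L)$ such that every solution starting in $U_i$ converges to the corresponding equilibrium.

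I will then define $\mathcal{B}_i(L)$ as the full basin of attraction
\[
\mathcal{B}_i(L)=\{x\in\mathbb{R}^3_+:\ \phi_t(x)\to E_i(L)\ \text{as } t\to\infty\},\qquad i=4,5,
\]
where $\phi_t$ is the flow. Disjointness follows because limits are unique and $E_4\neq E_5$. For openness I will use $\mathcal{B}_i=\bigcup_{t\ge0}\phi_t^{-1}(U_i)$ (after shrinking $U_i$ so that it lies inside the basin). Each $\phi_t^{-1}(U_i)$ is open because the flow depends continuously on initial data on the region where the vector field is locally Lipschitz. Alternatively, one can simply take $\mathcal{B}_i=U_i$ with $U_i$ small balls of radius less than $|E_4-E_5|/2$, which avoids any continuity issue. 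Both sets are nonempty because they contain $U_i$. Boundedness (Theorem \ref{thm:boundedness_updated}) ensures solutions exist for all $t\ge0$, so convergence is meaningful. These facts together give bistability.

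The step needing the most care is the last clause, about dependence on initial data and on parameter history. I will obtain it from the saddle-node structure in hypothesis (1). The generic saddle-node gives a second coexistence equilibrium on the other branch, which is of saddle type with a one-dimensional unstable manifold and a two-dimensional stable manifold. I expect that stable manifold, generically, to separate $\mathcal{B}_4$ from $\mathcal{B}_5$. Hysteresis follows from continuation. As $L$ decreases through $L^{[SN]}$, the stable branch $E_5(L)$ collides with the saddle and disappears, so a trajectory tracking $E_5$ must jump to $E_4$. When $L$ is increased again, the state remains near $E_4$ because $E_4$ stays stable. Hence the realized state depends on the path taken in parameter space. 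I will present this as a consequence of the normal form $\dot{x}=\mu\pm x^2$ on the center manifold at $L^{[SN]}$, as provided by Theorem \ref{SNB}. I will not attempt a rigorous global description of the separatrix, which I expect to be the main obstacle.
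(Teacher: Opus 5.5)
Your proposal is correct and follows essentially the paper's route. The two distinct locally asymptotically stable equilibria $E_4(L)\neq E_5(L)$ give disjoint open sets of initial data attracted to each, and the hysteresis clause is handled heuristically through the fold normal form; your additions (distinctness via the $P$-coordinate, smoothness of the field near both equilibria since $S>0$ there, openness of the full basins) only make the paper's brief argument more careful, but you should soften the claim that a trajectory tracking $E_5$ \emph{must} jump to $E_4$ once $L$ drops below $L^{[SN]}$, since hypothesis (3) says nothing about $E_4$ for $L<L^{[SN]}$ and the jump could land on another attractor, so phrase it with ``may'' as the paper does.
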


\begin{proof}
Since the model undergoes a generic saddle-node bifurcation of coexistence equilibria at \(L=L^{[SN]}\), the center-manifold reduction near the bifurcation point is locally topologically equivalent to the scalar normal form
\[
\dot{x}=\mu+\alpha x^2+\mathcal{O}(|x|^3+|\mu||x|),
\]
where \(\mu=L-L^{[SN]}\) and \(\alpha\neq 0\). Hence, on one side of the bifurcation value, there exist two nearby equilibrium branches. By assumption, one of these branches, denoted \(E_5(L)\), is locally asymptotically stable for \(L\in(L^{[SN]},L^{[SN]}+\delta)\).

By the third hypothesis, the predator-free equilibrium \(E_4(L)\) also exists and is locally asymptotically stable on the same interval. Therefore, for every fixed \(L\in(L^{[SN]},L^{[SN]}+\delta)\), the system possesses at least two distinct locally asymptotically stable equilibria, namely \(E_4(L)\) and \(E_5(L)\).

Local asymptotic stability implies the existence of open neighborhoods \(U_4(L)\) and \(U_5(L)\) of \(E_4(L)\) and \(E_5(L)\), respectively, such that solutions starting in \(U_4(L)\) converge to \(E_4(L)\), while solutions starting in \(U_5(L)\) converge to \(E_5(L)\). Since \(E_4(L)\neq E_5(L)\), these attracting neighborhoods may be chosen disjoint. Their forward invariant subsets are contained in distinct basins of attraction, denoted by \(\mathcal{B}_4(L)\) and \(\mathcal{B}_5(L)\).

Thus, for the same parameter values, different initial population densities can lead to different long-term ecological outcomes.

Finally, because one of the stable states is created or destroyed at the saddle-node threshold \(L=L^{[SN]}\), slow variation of \(L\) may cause the system to remain on one attracting branch until that branch loses existence at the saddle-node point. Reversing the parameter variation need not immediately return the system to its previous state, since the trajectory may then lie in a different basin of attraction. This path dependence is the dynamical signature of hysteresis. Hence, the asymptotic state depends not only on the current parameter value but also on the history of parameter variation.
\end{proof}

To investigate this phenomenon numerically, we consider parameter values within the bistable interval
$$L\in(-3.95198,-2.5164),$$
for which both the predator-free equilibrium $(E_4)$ and the coexistence equilibrium $(E_5)$ are locally stable. Numerical simulations were performed using two distinct sets of initial conditions while keeping all model parameters fixed.

Figure \ref{TSE4E5} illustrates the resulting trajectories. For the initial condition (3, 0.50, 1.50), the solution converges to the coexistence equilibrium $(E_5)$, indicating long-term persistence of susceptible prey, infected prey, and predators. In contrast, a second initial condition (0.25, 1.60, 0.05) leads to convergence toward the predator-free equilibrium $(E_4)$, resulting in predator exclusion despite identical environmental conditions and parameter values.

These results demonstrate the existence of \emph{alternative stable states} in the system. Consequently, small perturbations in population densities may alter the long-term ecological outcome by shifting the system from one basin of attraction to another. Such behavior is characteristic of ecological tipping points and highlights the sensitivity of predator--prey--disease systems to transient disturbances.

From a biological perspective, this finding suggests that predator-induced fear and aggregation may create conditions under which ecosystems exhibit reduced resilience. In particular, populations operating near the saddle-node threshold may undergo abrupt transitions between predator persistence and predator extinction following relatively small changes in population abundance. Therefore, the interaction between fear, aggregation, and Allee dynamics not only influences equilibrium densities but also governs the capacity of the ecosystem to recover from perturbations.

\begin{figure}[tbh]
\centering
\includegraphics[scale=0.48]{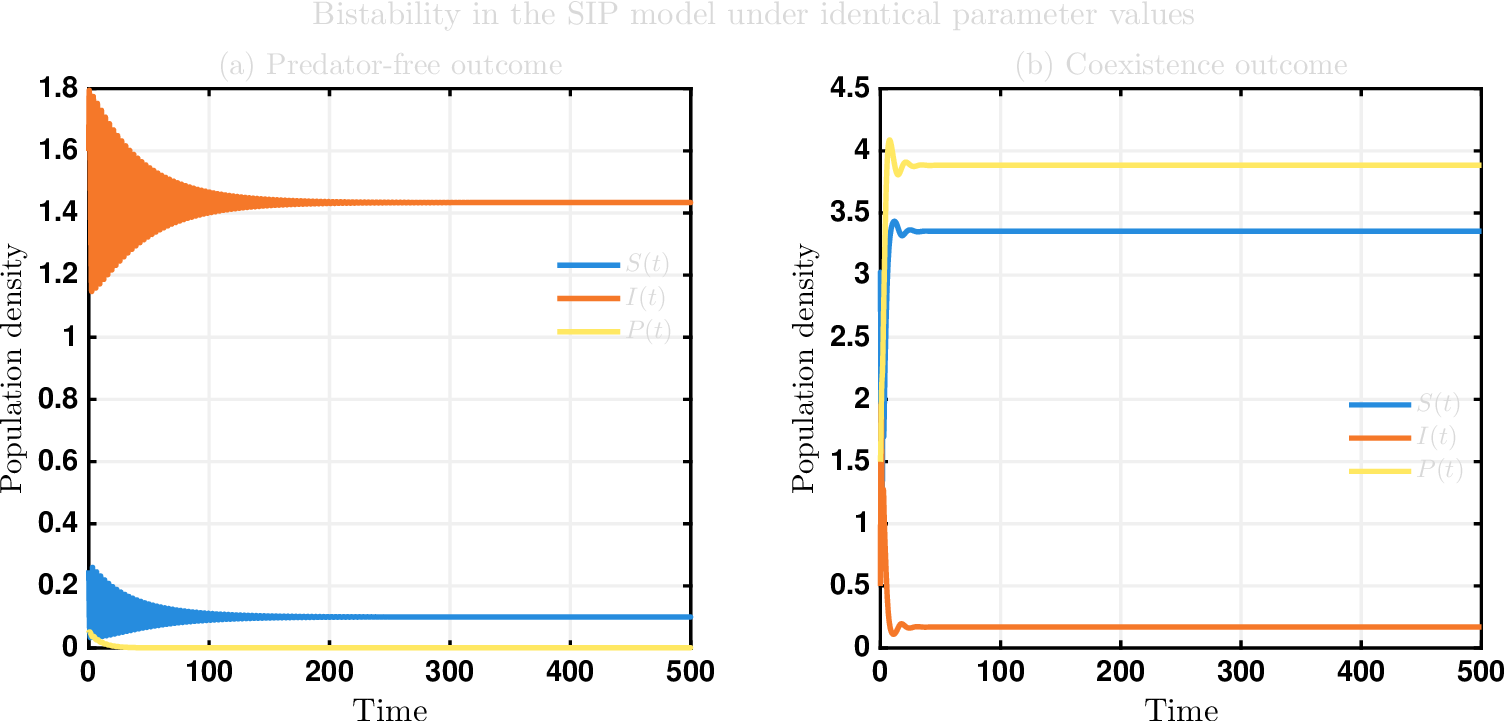}\\
\caption{Time series illustrating bistability in system~(1). The same parameter values are used in both panels with $L=-3$, which lies in the bistable interval identified in Figure \ref{Bif1}. Different initial conditions lead to distinct long-term outcomes: convergence to the predator-free equilibrium $(E_4) $ in panel (a), and convergence to the coexistence equilibrium $(E_5) $ in panel (b). This demonstrates the presence of alternative stable states and highlights the sensitivity of the system to initial population densities.}
\label{TSE4E5}
\end{figure}

\section{Finite-Time Extinction and Ecosystem Collapse}\label{sec:FTE}
\noindent The tipping phenomena described in Section \ref{sec:bistability and tipping} arise through changes in the underlying attractor structure of the system. We now investigate a different mechanism of ecosystem collapse. Specifically, we show that the aggregation-mediated predation term can force the susceptible prey population to reach the extinction state in finite time. This extinction event represents a state-induced ecological tipping point and subsequently initiates a cascade collapse of the infected prey and predator populations. The key mechanism responsible for this phenomenon is the non-Lipschitz predation term $d_0 S^r P$ with $0<r<1$, whose singular behavior near $S=0$ can force trajectories to hit the extinction state in finite time. This mechanism is further reinforced by the Allee effect when the susceptible prey density falls below the threshold $L$.

\begin{definition}[Finite-Time Extinction]
The susceptible prey population undergoes finite-time extinction if there exists a finite time $t_{ext}>0$ such that
\[
S(t)>0 \quad \text{for } 0\le t<t_{ext}, \qquad S(t_{ext})=0.
\]
\end{definition}

\noindent The following result is conditional in nature. It characterizes
the extinction dynamics once the susceptible prey population
enters the strong Allee region and predator pressure remains
uniformly positive.

\begin{theorem}
\label{thm:FTE_S}
Consider the model \eqref{Mainsystem} with $0<r<1$. Assume there exists a time interval $[0,t_{ext})$ such that
\[
0<S(t)<L, \qquad S(t)+I(t)\le K, \qquad P(t)\ge \underline{P}>0,
\quad \text{for all } t\in[0,t_{ext}),
\]
where $L>0$ and $\underline{P}$ is a positive constant. Then the susceptible prey
population $S(t)$ becomes extinct in finite time. More precisely,
\[
t_{ext} \le \frac{S(0)^{\,1-r}}{(1-r)d_0\underline{P}}.
\]
\end{theorem}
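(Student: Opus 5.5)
The plan is a differential inequality for $S$ on $[0,t_{ext})$, followed by a comparison argument with the power $S^{1-r}$, which linearizes the non-Lipschitz predation term. I work with the factored form used in the proof of Theorem~\ref{thm:nonnegativity_updated}, keeping the predation term separate.

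\emph{Step 1: sign of the remaining terms.} For $t\in[0,t_{ext})$ we have $0<S<L$, so $S-L<0$. Also $S+I\le K$ gives $1-\frac{S+I}{K}\ge 0$. Hence the fear-modulated Allee growth term
\[
\frac{b_0 S}{1+k_1P}\Bigl(1-\frac{S+I}{K}\Bigr)(S-L)
\]
is nonpositive. By Theorem~\ref{thm:nonnegativity_updated}, $I,P\ge 0$, so the transmission loss $-\frac{e_0SI}{1+k_2P}$ is also nonpositive. Discarding both, and using $P\ge\underline{P}$, I obtain on $[0,t_{ext})$
\[
\frac{dS}{dt}\le -d_0S^rP\le -d_0\underline{P}\,S^r .
\]

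\emph{Step 2: comparison via $u=S^{1-r}$.} Since $S(t)>0$ on $[0,t_{ext})$, the function $u(t)=S(t)^{1-r}$ is $C^1$ there, with $u'=(1-r)S^{-r}S'$. This gives
\[
u'(t)\le -(1-r)d_0\underline{P}.
\]
Integrating from $0$ to $t$ yields
\[
0<S(t)^{1-r}\le S(0)^{1-r}-(1-r)d_0\underline{P}\,t
\]
for all $t<t_{ext}$. Because the left side is strictly positive, every such $t$ is smaller than $T^*:=\frac{S(0)^{1-r}}{(1-r)d_0\underline{P}}$. Therefore $t_{ext}\le T^*$, which is the stated bound.

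\emph{Step 3: actual extinction at $t_{ext}$.} The hypotheses only describe $[0,t_{ext})$, so it remains to show that $S$ reaches $0$ at time $t_{ext}$ itself. The solution is bounded, so it extends continuously to $t_{ext}$, and by Step 2 $t_{ext}\le T^*<\infty$. Taking $t\uparrow t_{ext}$ in the Step 2 inequality gives $S(t_{ext})^{1-r}\le S(0)^{1-r}-(1-r)d_0\underline{P}\,t_{ext}$. If $t_{ext}$ is the maximal time on which the hypotheses hold, then at $t_{ext}$ either $S$ vanishes or some other hypothesis fails first.

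I expect this third step to be the main obstacle, namely interpreting the conditional statement correctly. The clean reading is that the hypotheses persist as long as $S>0$, with $t_{ext}$ the first zero of $S$. Then the comparison forces $S$ to hit $0$ no later than $T^*$, and the non-Lipschitz term makes this a genuine finite-time zero rather than asymptotic decay; this is exactly the case $r<1$, since for $r=1$ the bound would only give exponential decay. In the write-up I would state this interpretation explicitly. I would also remark that solutions of \eqref{Mainsystem} need not be unique at $S=0$, so the extinction time is defined through the first hitting time.
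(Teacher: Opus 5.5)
Your proposal is correct and follows the paper's proof: you discard the nonpositive Allee-growth and transmission terms to get $\frac{dS}{dt}\le -d_0\underline{P}\,S^r$, then integrate $\frac{d}{dt}S^{1-r}$ to obtain $S(t)^{1-r}\le S(0)^{1-r}-(1-r)d_0\underline{P}\,t$, which gives $t_{ext}\le S(0)^{1-r}/\bigl((1-r)d_0\underline{P}\bigr)$. Your Step 3 spells out an assumption the paper leaves implicit, namely that the hypotheses persist until $S$ first vanishes; and in Step 2, using the strict positivity of $S(t)^{1-r}$ to extract the bound is a slightly cleaner argument than the paper's ``the right-hand side becomes zero.''
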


\begin{proof}
The susceptible prey equation in model \eqref{Mainsystem} is
\[
\frac{dS}{dt}
=
\frac{b_0S}{1+k_1P}\left(1-\frac{S+I}{K}\right)(S-L)
-d_0S^rP-\frac{e_0SI}{1+k_2P}.
\]
Since $0<S(t)<L$ and $S(t)+I(t)\le K$ on $[0,t_{ext})$, we have
\[
\left(1-\frac{S+I}{K}\right)\ge 0
\quad \text{and} \quad
(S-L)<0.
\]
Hence,
\[
\frac{b_0S}{1+k_1P}\left(1-\frac{S+I}{K}\right)(S-L)\le 0.
\]
Also,
\[
-\frac{e_0SI}{1+k_2P}\le 0.
\]
Therefore,
\[
\frac{dS}{dt}\le -d_0S^rP \le -d_0\underline{P}\,S^r.
\]
Set $c=d_0\underline{P}>0$. Then
\[
\frac{dS}{dt}\le -cS^r.
\]
For $0<r<1$, divide both sides by $S^r$ to obtain
\[
S^{-r}\frac{dS}{dt}\le -c.
\]
Since
\[
\frac{d}{dt}\left(\frac{S^{1-r}}{1-r}\right)=S^{-r}\frac{dS}{dt},
\]
we get
\[
\frac{d}{dt}\left(\frac{S^{1-r}}{1-r}\right)\le -c.
\]
Integrating from $0$ to $t$ yields
\[
\frac{S(t)^{1-r}-S(0)^{1-r}}{1-r}\le -ct.
\]
Hence
\[
S(t)^{1-r}\le S(0)^{1-r}-(1-r)ct.
\]
The right-hand side becomes zero at
\[
t=\frac{S(0)^{1-r}}{(1-r)c}
=
\frac{S(0)^{1-r}}{(1-r)d_0\underline{P}}=t_{bound}.
\]
Thus $S(t)$ reaches zero no later than this time.
\end{proof}

\noindent The extinction estimate derived above allows us to quantify how predator pressure and aggregation strength influence the rate at which the susceptible prey population approaches extinction.

\begin{corollary}
\label{cor:predator_pressure}
Under the assumptions of Theorem~\ref{thm:FTE_S}, the extinction time upper bound
\[
t_{ext} \le \frac{S(0)^{1-r}}{(1-r)d_0\underline{P}}
\]
decreases as either the attack rate $d_0$ increases or the lower bound on predator density $\underline{P}$ increases.
\end{corollary}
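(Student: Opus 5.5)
The plan is to read off the monotonicity directly from the explicit formula produced by Theorem~\ref{thm:FTE_S}. I set
\[
T(d_0,\underline{P}) := \frac{S(0)^{1-r}}{(1-r)\,d_0\,\underline{P}},
\]
and treat $S(0)>0$ and $r\in(0,1)$ as fixed. Theorem~\ref{thm:FTE_S} already gives $t_{ext}\le T(d_0,\underline{P})$ under its hypotheses, so the only remaining task is to show that $T$ is strictly decreasing in each of $d_0$ and $\underline{P}$ separately.

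First I record the signs. Since $0<r<1$, we have $1-r>0$, and $S(0)^{1-r}>0$ because $S(0)>0$ by assumption. Also $d_0>0$ and $\underline{P}>0$, so $T$ is a positive, smooth function on $(0,\infty)^2$. Differentiating gives
\[
\frac{\partial T}{\partial d_0} = -\frac{S(0)^{1-r}}{(1-r)\,d_0^2\,\underline{P}}<0,
\qquad
\frac{\partial T}{\partial \underline{P}} = -\frac{S(0)^{1-r}}{(1-r)\,d_0\,\underline{P}^{\,2}}<0.
\]
Both derivatives are negative, which yields strict monotone decrease. Equivalently, without calculus, $T=C/(d_0\underline{P})$ with $C>0$, and $x\mapsto C/x$ is decreasing on $(0,\infty)$.

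The one point that needs care is interpretive rather than computational. For the comparison to be meaningful, the hypotheses of Theorem~\ref{thm:FTE_S} must continue to hold when $d_0$ or $\underline{P}$ is increased. Increasing $\underline{P}$ only strengthens the standing assumption $P(t)\ge\underline{P}$, so it should be read as a valid, larger lower bound on the same trajectory. Increasing $d_0$ changes the trajectory itself. Here I would note that the bound in Theorem~\ref{thm:FTE_S} depends only on $S(0)$, $r$, $d_0$ and $\underline{P}$, and not on the trajectory otherwise. So for any parameter choice satisfying the theorem's assumptions, the corresponding upper bound is smaller when $d_0$ is larger. I would therefore state the corollary as a property of the explicit bound $T$, not as a claim that the actual extinction time $t_{ext}$ is monotone.
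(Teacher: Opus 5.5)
Your proposal is correct and takes the same approach as the paper, which simply says the monotonicity follows immediately from the explicit bound $S(0)^{1-r}/\bigl((1-r)d_0\underline{P}\bigr)$. Your sign check and partial derivatives spell out that one-line argument, and your remark that the claim concerns the bound rather than the actual extinction time $t_{ext}$ is an appropriate clarification.
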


\begin{proof}
The conclusion follows immediately from the explicit upper bound for $t_{ext}$.
\end{proof}

\begin{corollary}
\label{cor:r_effect}
Under the assumptions of Theorem~\ref{thm:FTE_S}, smaller values of the aggregation exponent $r\in(0,1)$ lead to faster extinction of susceptible prey. In particular, the singular predation term $S^r$ is the mechanism responsible for the finite-time hitting of the extinction state.
\end{corollary}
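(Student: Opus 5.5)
The plan is to read the statement as a claim about the explicit bound of Theorem~\ref{thm:FTE_S}. I would regard
\[
T(r)=\frac{S(0)^{1-r}}{(1-r)\,d_0\underline{P}},\qquad 0<r<1,
\]
as a function of $r$, with $S(0)$, $d_0$ and $\underline{P}$ held fixed, and show that $T$ is increasing in $r$. Then a smaller aggregation exponent gives a smaller guaranteed extinction time, which is how I interpret ``faster extinction.''

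\textbf{Step 1: monotonicity via the logarithmic derivative.} Since $T(r)>0$, I will differentiate
\[
\ln T(r)=(1-r)\ln S(0)-\ln(1-r)-\ln(d_0\underline{P}),
\]
which gives
\[
\frac{d}{dr}\ln T(r)=\frac{1}{1-r}-\ln S(0).
\]
This derivative is positive exactly when $S(0)<e^{1/(1-r)}$. Because $1/(1-r)>1$, it suffices that $S(0)\le e$. Under the hypotheses of Theorem~\ref{thm:FTE_S} we have $S(0)<L$, so the condition holds automatically whenever $L\le e$. In particular it holds for the thresholds $L=1$ used in the paper.

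\textbf{Step 2: the main obstacle.} The difficulty is that the monotonicity is \emph{not} unconditional. If $S(0)$ is large (more than $e^{1/(1-r)}$), the factor $S(0)^{1-r}$ grows as $r$ decreases and can reverse the trend. I would therefore state the corollary precisely with the explicit sufficient condition $S(0)<e^{1/(1-r)}$, for instance $L\le e$. Alternatively, after the rescaling $S\mapsto S/L$ the claim applies to populations measured relative to the threshold. I would also remark that the statement concerns the upper bound on $t_{ext}$, not the true extinction time itself.

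\textbf{Step 3: the singular term as the mechanism.} I would argue this in two parts.
\begin{itemize}
\item First, $T(r)\to\infty$ as $r\to1^-$. At $r=1$ the comparison inequality becomes $\dot S\le -cS$, which yields only exponential decay $S(t)\le S(0)e^{-ct}$ and never reaches zero.
\item Second, for $r\ge 1$ the right-hand side of the $S$-equation is locally Lipschitz near $S=0$ and vanishes on $\{S=0\}$. Uniqueness of solutions then forbids a trajectory with $S(0)>0$ from reaching $S=0$ in finite time.
\end{itemize}
Hence finite-time hitting of the extinction state can only come from the non-Lipschitz factor $S^r$ with $0<r<1$. This is exactly the inequality $S^{-r}\dot S\le -c$ exploited in the proof of Theorem~\ref{thm:FTE_S}.
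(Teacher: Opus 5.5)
Your argument is correct and more careful than the paper's. The paper's proof only observes that the bound $S(0)^{1-r}/((1-r)d_0\underline{P})$ depends explicitly on $r$. It then asserts, from $0<S(0)<L$ and $0<r<1$, that a ``stronger singularity'' accelerates extinction. It never differentiates the bound, and that same phrase is its only support for the mechanism clause.

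Your logarithmic derivative $\frac{1}{1-r}-\ln S(0)$ shows that the hypotheses of Theorem~\ref{thm:FTE_S} alone do not yield the monotonicity. For example, take $S(0)=15<L$ with $K$ large enough that $S(0)+I(0)\le K$. Then $T(0.3)\approx 9.5/(d_0\underline{P})$ while $T(0.5)\approx 7.7/(d_0\underline{P})$, so the larger exponent gives the smaller bound. The paper's claim therefore holds only in a regime such as $\ln S(0)<1/(1-r)$, which is exactly the condition you isolate. Since also $S(0)\le K$, it suffices that $\min\{L,K\}\le e$; the paper's illustration uses $L=1$, which satisfies this.

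The paper's version buys brevity. Yours buys the precise range of validity, plus an actual argument for the mechanism clause. Specifically, $T(r)\to\infty$ as $r\to 1^-$, and for $r\ge 1$ the vector field is locally Lipschitz with $\{S=0\}$ invariant, so uniqueness forbids finite-time hitting. The paper supplies neither step.

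One small point: drop or rephrase the rescaling alternative in your Step 2. Under $\tilde S=S/L$ the predation term becomes $d_0L^{r-1}\tilde S^{r}P$, so the effective attack coefficient depends on $r$. If you hold $d_0$ fixed, the bound is literally unchanged by the rescaling, so the condition $\ln S(0)<1/(1-r)$ cannot be scaled away. Holding $d_0L^{r-1}$ fixed instead compares a different family of models and does not rescue the statement for the original system. The explicit sufficient condition is the right way to state the corollary.
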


\begin{proof}
The upper bound
\[
t_{ext}(r)\le \frac{S(0)^{1-r}}{(1-r)d_0\underline{P}}
\]
depends explicitly on $r$. Since $0<S(0)<L$ and $0<r<1$, stronger singularity near $S=0$ enhances the depletion rate of $S(t)$ and accelerates extinction.
\end{proof}

\noindent The next result shows that extinction of the susceptible prey population triggers a cascade collapse in the remaining populations.

\begin{corollary}
\label{cor:cascade}
Suppose the assumptions of Theorem~\ref{thm:FTE_S} hold, and let $t_{ext}$ be the finite time at which $S(t_{ext})=0$. Then the infected prey population and predator population decay asymptotically to zero; that is,
\[
\lim_{t\to\infty} I(t)=0,
\qquad
\lim_{t\to\infty} P(t)=0.
\]
\end{corollary}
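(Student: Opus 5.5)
The plan is to split the time axis at $t_{ext}$. First I show that the susceptible class stays identically zero for $t\ge t_{ext}$. The remaining two equations then reduce to a linear-in-$I$ decay equation, followed by a predator equation whose growth rate eventually becomes negative.

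\emph{Step 1 ($S$ remains extinct).} This is the main obstacle. Because $d_0S^rP$ is non-Lipschitz at $S=0$, uniqueness from $S(t_{ext})=0$ is not automatic, and in principle a solution could leave the plane $\{S=0\}$. I would rule this out by a sign argument. Fix $t_1\ge t_{ext}$ and suppose $S(t)>0$ on some interval $(t_0,t_1]$ with $S(t_0)=0$, taking $t_0\ge t_{ext}$. Shrinking the interval, assume $0<S(t)<L$ there.

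On that interval the Allee term $\frac{b_0S}{1+k_1P}\left(1-\frac{S+I}{K}\right)(S-L)$ is nonpositive as long as $S+I\le K$. If $S+I>K$, this term is still bounded by a constant times $S$. The infection term is nonpositive, and $-d_0S^rP\le 0$. Hence $\dot S\le CS$ for a constant $C$ coming from the boundedness in Theorem~\ref{thm:boundedness_updated}. Since $S(t_0)=0$, Gronwall's inequality forces $S\equiv 0$ on $[t_0,t_1]$, a contradiction. Together with nonnegativity (Theorem~\ref{thm:nonnegativity_updated}), this gives $S(t)=0$ for all $t\ge t_{ext}$.

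For this I would use the forward-in-time solution that the model selects. The point is that the right-hand side vanishes on $\{S=0\}$ and the one-sided bound $\dot S\le CS$ holds near $S=0$, so the plane $\{S=0\}$ is forward invariant. No lower bound on $P$ is needed after $t_{ext}$; only $P\ge 0$ is used.

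\emph{Step 2 (decay of $I$).} With $S\equiv 0$ the infected equation becomes
\[
\dot I=-d_1IP-a_1I\le -a_1I,
\]
since $I,P\ge 0$. The comparison theorem then gives $I(t)\le I(t_{ext})e^{-a_1(t-t_{ext})}$. Here $I(t_{ext})\le K$ is finite by the hypothesis $S+I\le K$ and continuity. Thus $I(t)\to 0$.

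\emph{Step 3 (decay of $P$).} With $S\equiv 0$ the predator equation becomes
\[
\dot P=(d_3I-a_2)P.
\]
By Step 2 there is $T\ge t_{ext}$ such that $d_3I(t)\le a_2/2$ for all $t\ge T$. Hence
\[
P(t)\le P(T)e^{-a_2(t-T)/2}\quad \text{for } t\ge T.
\]
Here $P(T)$ is finite: on $[t_{ext},T]$ we have $\dot P\le d_3I(t_{ext})P$, so $P$ grows at most exponentially on this compact interval. Alternatively, one can invoke Theorem~\ref{thm:boundedness_updated} when its hypotheses hold. Therefore $P(t)\to 0$, which gives the cascade collapse claimed in Corollary~\ref{cor:cascade}.
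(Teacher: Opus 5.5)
Your proof is correct, and Steps 2--3 are the paper's argument: $\dot I\le -a_1 I$ gives exponential decay of $I$ by comparison, and then $d_3I<a_2/2$ for $t\ge T$ makes $P$ decay exponentially. Your Step 1 fills a gap the paper leaves open rather than taking a different route: the paper simply asserts $S(t)=0$ for $t\ge t_{ext}$, and your observation is what justifies this despite non-uniqueness at $S=0$, namely that the non-Lipschitz term $-d_0S^rP$ is nonpositive, so $\dot S\le CS$ and Gronwall forbids $S$ from leaving $\{S=0\}$. One small point: take $C$ from continuity of the solution on the compact interval $[t_0,t_1]$ rather than from Theorem~\ref{thm:boundedness_updated}, whose hypotheses $d_0\ge d_2$, $d_1\ge d_3$ are not assumed in the corollary.
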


\begin{proof}
For $t\ge t_{ext}$, we have $S(t)=0$. The infected prey equation becomes
\[
\frac{dI}{dt}=-d_1IP-a_1I \le -a_1I.
\]
By comparison,
\[
I(t)\le I(t_{ext})e^{-a_1(t-t_{ext})},
\]
and hence $\lim_{t\to\infty}I(t)=0$.

Next, for $t\ge t_{ext}$, the predator equation reduces to
\[
\frac{dP}{dt}=d_3IP-a_2P = P(d_3I-a_2).
\]
Since $I(t)\to 0$, there exists $T\ge t^*$ such that $d_3I(t)<a_2/2$ for all
$t\ge T$. Therefore,
\[
\frac{dP}{dt}\le -\frac{a_2}{2}P,
\qquad t\ge T.
\]
By comparison,
\[
P(t)\le P(T)e^{-\frac{a_2}{2}(t-T)},
\]
so $\lim_{t\to\infty}P(t)=0$.
\end{proof}

\begin{figure}[tbh]
\centering
\includegraphics[scale=0.5]{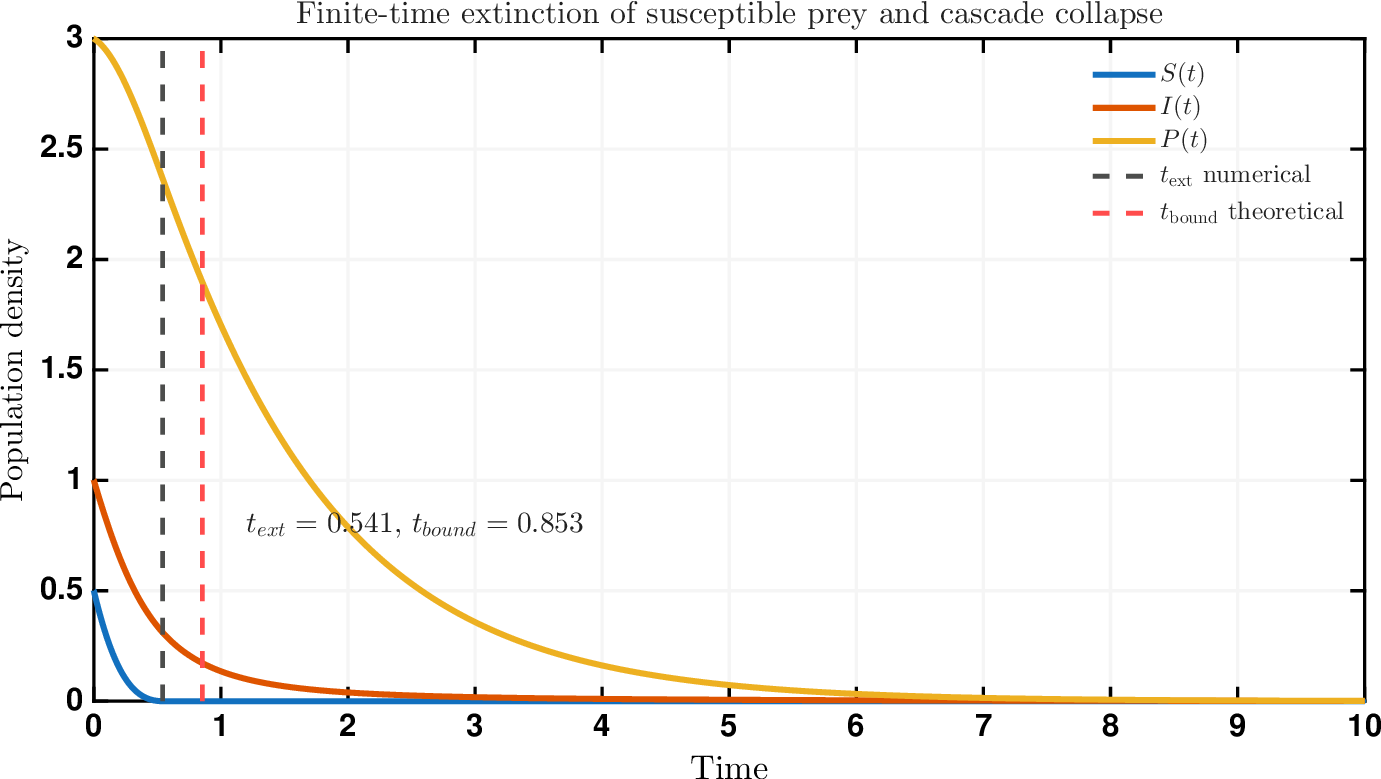}\\
\caption{Numerical illustration of finite-time extinction and cascade collapse in model \eqref{Mainsystem}. The initial condition satisfies \(0<S(0)<L\), with \((S(0),I(0),P(0))=(0.5,1,3)\) and \(L=1\). The susceptible prey population \(S(t)\) reaches the extinction threshold in finite time. The vertical black dashed line denotes $t_{ext}$, while the red dashed line denotes the theoretical upper bound $t_{bound}$. Following the extinction of $S(t)$, both $I(t)$ and $P(t)$ decay toward zero, illustrating cascade collapse.}
\label{fig:TS_FTE}
\end{figure}

\noindent To illustrate Theorem \ref{thm:FTE_S} and Corollary \ref{cor:cascade}, we numerically simulate model \eqref{Mainsystem} under a strong Allee threshold using an initial condition satisfying
\[
0<S(0)<L, \qquad S(0)+I(0)\le K.
\]
Specifically, we use
\[
(S(0),I(0),P(0))=(0.5,1,3),
\]
with \(L=1\) and \(K=4\). The numerical extinction time \(t_{\mathrm{ext}}\) is computed as
the first time at which \(S(t)\) reaches a small extinction tolerance by using MATLAB R2026a. To compare
with the theoretical estimate in Theorem \ref{thm:FTE_S}, we compute
\[
\underline{P}=\min_{0\le t\le t_{\mathrm{ext}}}P(t),
\]
and use the upper bound, $t_{bound}$. Figure \ref{fig:TS_FTE} shows that the susceptible prey population reaches the extinction state in finite time, and the computed value of \(t_{\mathrm{ext}}\) lies below the theoretical upper bound. After the susceptible prey population collapses, the infected prey and predator populations subsequently decay to zero, confirming the cascade extinction mechanism described in Corollary \ref{cor:cascade}.

\begin{remark}
The finite-time extinction established in Theorem \ref{thm:FTE_S} does not arise from the Allee effect or aggregation mechanism alone. Rather, it emerges from the interaction between demographic threshold effects and aggregation-based predation. The Allee effect creates conditions under which population recovery becomes difficult, while the non-Lipschitz predation term accelerates depletion near the extinction state. Together, these mechanisms drive the susceptible prey population to extinction in finite time. When the susceptible prey density falls below the Allee threshold $L$, reproduction becomes insufficient to offset losses, causing the population to decline. At the same time, the non-Lipschitz predation term $d_0S^r\underline{P}$ with $0<r<1$ accelerates the depletion of the susceptible prey population near $S=0$, forcing the trajectory to reach the extinction state in finite time. The explicit upper bound  
\[
t_{ext}\le t_{bound}
\]
further reveals that stronger predator pressure (larger $d_0$ or $P$) reduces the extinction time. Thus aggregation-mediated predation may paradoxically accelerate ecosystem collapse once fear-driven behavioral responses and demographic threshold effects have pushed the susceptible prey population below the Allee threshold.
\end{remark}

\section{Conclusion}\label{sec:Conclusion}
\noindent This study has established that predator-induced fear, prey aggregation, and Allee dynamics are not merely additive modifiers of predator-prey-disease interactions but interacting mechanisms capable of generating qualitatively distinct and more dangerous dynamical behavior than any pairwise combination produces. The central finding — that fear-mediated aggregation can paradoxically accelerate ecosystem collapse by amplifying demographic vulnerability at low population densities — has been established both analytically and numerically, and connects the paper's mathematical results to the broader literature on ecological tipping points and critical transitions \cite{Courchamp2008,Scheffer2001,Scheffer2009,Dakos2019}.

The rigorous derivation of finite-time extinction (Theorem \ref{thm:FTE_S}) establishes that the non-Lipschitz aggregation term $d_0 S^r P$, $0 < r < 1$, is not merely a mathematical convenience but a mechanistically important feature of aggregation-based predation that generates qualitatively different extinction dynamics from smooth models. The explicit upper bound on extinction time quantifies for the first time how predator pressure and aggregation strength jointly determine the speed of ecological collapse, providing a potential tool for assessing extinction risk in threatened populations. The cascade collapse result (Corollary \ref{cor:cascade}) demonstrates that this single-population extinction event is sufficient to drive the entire ecological community to extinction, revealing the fragility of the food web structure when the susceptible prey population serves as the foundational energy source for both disease dynamics and predator persistence \cite{antwi2025dynamics}.

The bifurcation analysis reveals a rich structure of dynamical transitions including transcritical, saddle-node, and Hopf bifurcations, and the two-parameter continuation identifies distinct parameter regimes supporting stable coexistence, oscillatory coexistence, predator exclusion, and finite-time extinction \cite{Kuznetsov2004,Scheffer2001}. The monotone expansion of the finite-time extinction region with decreasing $r$ demonstrates that aggregation strength is a qualitative determinant of whether a given set of fear and Allee parameters leads to persistence or collapse — a finding with direct implications for the management of threatened species in which aggregation behavior is observable and potentially modifiable \cite{Courchamp2008,KrauseRuxton2002}.

From a conservation perspective, the results suggest that 
populations simultaneously subject to infectious disease, 
predation pressure, and low-density demographic vulnerability 
may be far more fragile than models treating these factors 
independently would predict \cite{Brook2008}. In particular, the bistability results demonstrate that populations operating near saddle-node thresholds may undergo irreversible transitions to predator exclusion or extinction following perturbations that would be considered minor under single-mechanism models \cite{Scheffer2001,Scheffer2009}. Early-warning 
indicators based on asymptotic dynamics — such as critical 
slowing down — may fail to detect the approach to finite-time 
collapse entirely, since the collapse mechanism operates through 
a non-Lipschitz singularity rather than through a loss of 
stability in the classical sense \cite{Boerlijst2013,Dakos2012}.

Several directions remain open for future investigation. The 
incorporation of spatial dispersal would allow examination of whether aggregation-driven collapse can be arrested by immigration 
from refuge populations, a question directly relevant to  metapopulation management and rescue effects \cite{Levins1969,Hanski1998}. Environmental stochasticity may interact with the Allee threshold to generate extinction probabilities that could be compared with empirical data from 
monitored wildlife populations \cite{Lande1993,Melbourne2008}. Adaptive behavioral responses — in which aggregation intensity itself varies with perceived predation risk — would introduce a feedback between fear and aggregation that could either stabilize or destabilize the system depending on the timescale of behavioral adaptation relative to population dynamics
\cite{Lima1990,Brown1999}. Finally, the development of early-warning indicators specifically designed to detect aggregation-driven finite-time collapse represents both a mathematical challenge and a practical conservation priority \cite{Scheffer2009,Dakos2012}.

\section*{Declaration of competing interest}
\noindent The authors declare that they have no known competing financial interests or personal relationships that could have appeared to influence the work reported in this
paper.

\section*{Data availability}
\noindent Not applicable.

\bibliographystyle{unsrt}
\bibliography{FearAllee}
\end{document}